\documentclass[11pt,letterpaper]{article}

\usepackage[letterpaper,margin=1in]{geometry}

\usepackage{amssymb,amsmath,amsthm,amsfonts}
\usepackage{mathtools}
\usepackage{enumitem}
\usepackage[numbers,comma,sort&compress]{natbib}
\usepackage{authblk}
\usepackage{graphicx}
\usepackage{caption}

\usepackage[labelformat=simple]{subcaption}
\usepackage{float}
\usepackage[ruled,vlined,linesnumbered]{algorithm2e}
\usepackage{physics}
\usepackage{footnote}
\usepackage{bbding}

\usepackage{tikz}
\usetikzlibrary{calc,patterns,angles,quotes}
\usepackage{siunitx}
\usetikzlibrary{quantikz}
\usepackage{hyperref}
\usepackage{cleveref}
\usepackage{thm-restate}
\usepackage{tablefootnote}

\crefname{section}{Section}{Sections}
\Crefname{section}{Section}{Sections}
\crefname{subsection}{Subsection}{Subsections}
\Crefname{subsection}{Subsection}{Subsections}
\crefname{theorem}{Theorem}{Theorems}
\Crefname{theorem}{Theorem}{Theorems}
\crefname{lemma}{Lemma}{Lemmas}
\Crefname{lemma}{Lemma}{Lemmas}
\crefname{proposition}{Proposition}{Propositions}
\Crefname{proposition}{Proposition}{Propositions}
\crefname{corollary}{Corollary}{Corollaries}
\Crefname{corollary}{Corollary}{Corollaries}
\crefname{definition}{Definition}{Definitions}
\Crefname{definition}{Definition}{Definitions}
\crefname{remark}{Remark}{Remarks}
\Crefname{remark}{Remark}{Remarks}
\crefname{fact}{Fact}{Facts}
\Crefname{fact}{Fact}{Facts}
\crefname{equation}{Equation}{Equations}
\Crefname{equation}{Equation}{Equations}
\crefname{figure}{Figure}{Figures}
\Crefname{figure}{Figure}{Figures}
\crefname{table}{Table}{Tables}
\Crefname{table}{Table}{Tables}
\crefname{algocf}{Algorithm}{Algorithms}
\Crefname{algocf}{Algorithm}{Algorithms}

\newtheorem{theorem}{Theorem}[section]
\newtheorem{definition}[theorem]{Definition}
\newtheorem{lemma}[theorem]{Lemma}
\newtheorem{proposition}[theorem]{Proposition}

\newtheorem{corollary}[theorem]{Corollary}
\newtheorem{remark}[theorem]{Remark}
\newtheorem{fact}[theorem]{Fact}

\Crefname{algocf}{Algorithm}{Algorithms}
\crefname{algorithm}{Algorithm}{Algorithms}

\newcommand{\eq}[1]{Eq.~(\ref{eq:#1})}

\newcommand{\thm}[1]{\hyperref[thm:#1]{Theorem~\ref*{thm:#1}}}
\newcommand{\cor}[1]{\hyperref[cor:#1]{Corollary~\ref*{cor:#1}}}
\newcommand{\defn}[1]{\hyperref[defn:#1]{Definition~\ref*{defn:#1}}}
\newcommand{\lem}[1]{\hyperref[lem:#1]{Lemma~\ref*{lem:#1}}}
\newcommand{\prop}[1]{\hyperref[prop:#1]{Proposition~\ref*{prop:#1}}}
\newcommand{\fig}[1]{\hyperref[fig:#1]{Figure~\ref*{fig:#1}}}
\newcommand{\tab}[1]{\hyperref[tab:#1]{Table~\ref*{tab:#1}}}
\newcommand{\algo}[1]{\hyperref[algo:#1]{Algorithm~\ref*{algo:#1}}}
\renewcommand{\sec}[1]{\hyperref[sec:#1]{Section~\ref*{sec:#1}}}
\newcommand{\append}[1]{\hyperref[append:#1]{Appendix~\ref*{append:#1}}}
\newcommand{\fac}[1]{\hyperref[fac:#1]{Fact~\ref*{fac:#1}}}
\newcommand{\lin}[1]{\hyperref[lin:#1]{Line~\ref*{lin:#1}}}

\newcommand{\br}[1]{\left( #1 \right)}
\newcommand{\set}[1]{ \{ #1 \} }

\renewcommand{\abs}[1]{\left|#1\right|}

\renewcommand{\arraystretch}{1.25}

\def\>{\rangle}
\def\<{\langle}

\renewcommand{\bra}[1]{\langle#1|}
\renewcommand{\ket}[1]{|#1\rangle}

\renewcommand{\bf}{\mathbf}
\renewcommand{\i}{\mathrm{i}}

\DeclareMathOperator{\spn}{Span}

\DeclareMathOperator{\supp}{supp}

\renewcommand{\emptyset}{\varnothing}
\def\Tr{\operatorname{Tr}}\def\:{\hbox{\bf:}}

\makeatletter
\newcommand{\doubletilde}[1]{{%
  \mathpalette\double@tilde{#1}%
}}
\newcommand{\double@tilde}[2]{%
  \sbox\z@{$\m@th#1\tilde{#2}$}%
  \ht\z@=.9\ht\z@
  \tilde{\box\z@}%
}
\makeatother

\let\oldnl\nl
\newcommand{\nonl}{\renewcommand{\nl}{\let\nl\oldnl}}

\SetKwComment{Comment}{/* }{ */}
\SetKwInput{KwInput}{Input}                
\SetKwInput{KwOutput}{Output}              
\SetKwInput{KwNotation}{Notation}              
\SetKwInput{KwRegister}{Registers}              
\SetKwInput{KwParameter}{Parameter}             
\newcommand{\AlgoDisplayStep}[2]{%
\parbox[t]{0.92\linewidth}{%
#1:
$$ #2 $$
}\;
}

\numberwithin{equation}{section}
\makeatletter
\@addtoreset{algocf}{section}
\makeatother

\newcommand{\T}{\ensuremath{\mathrm{T}}}

\title{Optimal T Counts under Sparsity: from QROM to State Preparation and Block Encoding}
\author[1,2]{Tongyang Li\thanks{\texttt{tongyangli@pku.edu.cn}}}
\author[3]{Fengning Ou\thanks{\texttt{reverymoon@gmail.com}}}
\author[1,2]{Xinzhao Wang\thanks{\texttt{wangxz@stu.pku.edu.cn}}}
\author[3,4]{Penghui Yao\thanks{\texttt{phyao1985@gmail.com}}}
\author[5]{Pei Yuan\thanks{\texttt{peiyuan@tencent.com}}}
\author[5]{Shengyu Zhang\thanks{\texttt{shengyzhang@tencent.com}}}
\affil[1]{Center on Frontiers of Computing Studies, Peking University, Beijing 100871, China}
\affil[2]{School of Computer Science, Peking University, Beijing 100871, China}
\affil[3]{State Key Laboratory of Novel Software Technology, Nanjing University, Nanjing 210023, China}
\affil[4]{Hefei National Laboratory, Hefei 230088, China}
\affil[5]{Tencent Quantum Laboratory}
\date{}

\begin{document}

\maketitle

\begin{abstract}

Many quantum algorithms require coherent access to classical data, often
modeled by quantum read-only memory (QROM). We initiate the study of the \T\ count of sparse QROM, in which only \(s\) of the \(2^n\) addresses store nonzero data. 
We prove asymptotically optimal \(\T\)-count bounds $\Theta(\sqrt{sm} + \sqrt{sn})$ with square-root dependence on the
support size \(s\) and message length $m$. Our upper bounds use a multilevel hashing scheme, while
our lower bounds reduce sparse QROM to state preparation and use counting
arguments for adaptive Clifford+\(\T\) circuits. The lower bounds thus hold even when mid-circuit measurements and classically controlled operations are allowed.
As applications, we obtain matching \(\T\)-count bounds $\Theta(\sqrt{sn} +  \sqrt{s\log(1/\varepsilon)} +
 \log(1/\varepsilon))$ for $s$-sparse state
preparation and $\Theta( \sqrt{2^n sn}
        +
        \sqrt{2^n s\log(s/\varepsilon_{\rm BE})}
        +
        \log(s/\varepsilon_{\rm BE}))$ for block encoding of $s$-sparse matrices, 
where $\varepsilon$ and $\varepsilon_{\rm BE}$ are the precision of state preparation and block encoding, respectively.
\end{abstract}
\newpage

\tableofcontents
\clearpage
\section{Introduction}

Quantum algorithms often require coherent access to classical data. A problem
instance may be specified by a Hamiltonian, a data set, a probability
distribution, or the entries of a matrix, and the algorithm is formulated in
an oracle model where this data is queried in superposition. Such coherent
data access appears in Hamiltonian simulation~\cite{low2019hamiltonian},
quantum linear algebra~\cite{GSLW19}, and state
preparation~\cite{david2026quantum}. Implementing these oracles requires the
relevant classical data to be loaded coherently, and the cost of data loading
is part of the resource cost of the algorithm.

The standard abstraction for coherent data loading is quantum read-only memory
(QROM)~\cite{babbush2018encoding}, also referred to as quantum random-access
classical memory (QRACM) in parts of the literature
\cite{kuperberg2013subexp,jaques2025qram}. Given an \(n\)-bit address space
and \(m\)-bit data values, let
\(d=(d_x)_{x\in\{0,1\}^n}\) with \(d_x\in\{0,1\}^m\). The associated QROM
unitary is defined on basis states by
\[
    O_d\ket{x}\ket{0^m}=\ket{x}\ket{d_x},
    \qquad x\in\{0,1\}^n.
\]
We write \(N=2^n\) for the table size. By linearity, the address \(x\) may also
be queried in superposition. Several
models of coherent memory access have been studied, including hardware quantum
random-access memory (QRAM) architectures such as the fanout and bucket-brigade proposals~\cite{giovannetti2008quantum},
as well as circuit-level analyses of QRAM/QROM implementations~\cite{dimatteo2019qramft,hann2021resilience,Low2024tradingtgatesdirty,haner2022tablelookup,zhu2024qlut}.
In this work we focus on the circuit model: the data \(d\) are fixed in
advance, and the task is to compile the QROM unitary \(O_d\) into a circuit over
the Clifford+\(\T\) gate set. This gate set is standard in fault-tolerant
quantum computation.

We measure this compilation cost by the \emph{\T\ count}, namely the number of
\(\T\) gates in the resulting circuit. Circuits consisting only of Clifford
gates are classically simulable by the Gottesman--Knill
theorem~\cite{gottesman1998heisenberg}, whereas universal quantum computation
requires resources beyond Clifford operations. In the Clifford+\(\T\) model, this non-Clifford resource is supplied
by the \(\T\) gate, which in many fault-tolerant architectures is implemented
using magic-state distillation~\cite{bravyi2005universal,litinski2019magicstate}
or cultivation~\cite{gidney2024magicstatecultivation}. Thus, \T\ count is a
standard measure of the non-Clifford resource used by a Clifford+\(\T\)
implementation. Although implementing the QROM unitary for a table of \(N\)
entries requires \(\Omega(N)\) total gates in
general~\cite{zhangCircuitComplexityQuantum2024a,jaques2025qram}, the corresponding \T\ count can be much
smaller: the SELECT--SWAP construction of
Low, Kliuchnikov, and Schaeffer~\cite{Low2024tradingtgatesdirty} achieves a square-root dependence on \(N\).

This QROM bound applies to arbitrary tables, but many data-loading tasks
have additional structures. In particular, the full address space may be large
while only \(s\ll N\) entries are nonzero. We call the set of nonzero entries
the support of the table, and refer to the corresponding QROM as
\emph{sparse QROM}. Sparse classical data arise in sparse Hamiltonian simulation~\cite{berry2015simulating},
sparse linear-system solving~\cite{HHL2009,childs2017qls}, and sparse state preparation~\cite{gleinig2021state,li2025nearly}. 
In such
settings, the amount of relevant classical information is proportional to
\(s\), rather than to \(N\). This motivates the central question of sparse
coherent data loading:
\[
\textit{What is the optimal \T\ count for loading sparse classical data?}
\]

We answer this question by giving a {\em tight bound} on \T\ count for sparse
QROM. Our results show that the square-root \T-count scaling of general QROM
extends to the sparse case: the square-root dependence on \(N\) is replaced by a
square-root dependence on the number \(s\) of nonzero entries. The lower bounds
hold even in the more general adaptive implementation model introduced later.
These sparse-QROM bounds also yield optimal \T\ counts for
sparse state preparation and sparse block encoding, with applications to sparse Hamiltonian simulation, sparse linear-system solving, and quantum
rejection sampling.
\subsection{Main results}

In this paper, we first study the \T\ count needed to implement sparse QROM.
For a data vector \(d\), let \(\supp(d)=\{x:d_x\neq0^m\}\) and
\(s=|\supp(d)|\). The standard sparse
QROM must implement \(O_d\) on every input address. We also use a weaker model,
called \emph{promised sparse QROM}, which is only required to act correctly on
addresses in \(\supp(d)\):
\[
    O_d^{\mathrm{prom}}\ket{x}\ket{0^m}
    =
    \ket{x}\ket{d_x},
    \qquad x\in\supp(d).
\]
In the table below,
the \emph{adaptive} model means the circuit may use such measurements, and later
Clifford+\(\T\) operations may depend on their outcomes; the task is then
specified by the induced channel on the relevant input subspace. The main
\T-count bounds appearing in this paper are summarized in
\cref{tab:qrom-tcount-summary}.

\begin{table}[htp]
\centering
\small
\setlength{\tabcolsep}{9pt}
\renewcommand{\arraystretch}{1.6}
\begin{tabular}
{|p{0.31\textwidth}|p{0.48\textwidth}|}
\hline
Model & \T\ count \\ \hline
Dense QROM &
$\Theta\br{\sqrt{2^n m}}$
(\Cref{thm:lb-dense-qrom} and \cite{Low2024tradingtgatesdirty}) \\ \hline
Adaptive dense QROM &
$\Theta\br{\sqrt{2^n m}}$
(\Cref{thm:lb-dense-qrom} and \Cref{thm:LB-dense-qrom}) \\ \hline
Promised sparse QROM &
$\Theta\br{\sqrt{sm}}$
(\Cref{thm:lb-promised-qrom} and \Cref{thm:Algorithm-promised-sparse-QROM}) \\ \hline
Adaptive promised sparse QROM &
$\Theta\br{\sqrt{sm}}$
(\Cref{thm:lb-promised-qrom} and \Cref{thm:Algorithm-promised-sparse-QROM}) \\ \hline
Sparse QROM &
$\Theta\br{\sqrt{sn+sm}}$
(\Cref{thm:lb-exact-sparse-qrom-support} and \Cref{thm:Algorithm-sparse-QROM}) \\ \hline
Adaptive sparse QROM &
$\Omega\br{\sqrt{sm}} \le t \le O\br{\sqrt{sm}+\sqrt{s\log s}}$ 
(\Cref{thm:lb-promised-qrom}\tablefootnote{\Cref{thm:lb-promised-qrom} claims the lower bound. The upper bound is irrelevant to the following discussion, and we briefly mention it in
\cref{remark:adaptive-sparse-QROM}.}) \\ \hline

\end{tabular}
\caption{Summary of the main \T-count bounds for the QROM variants studied in
this paper. We assume $C(n+m)^2 \le s \le 2^{(1-\delta)n}$ and
the error parameter for the adaptive implementation model is $\varepsilon=1/200$, where $\delta\in(0,1)$ is an arbitrarily small constant.
}
\label{tab:qrom-tcount-summary}

\end{table}

The applications below use dense QROM for unrestricted lookups and promised
sparse QROM when the queried address is guaranteed to lie in \(\supp(d)\). The
promised sparse QROM bound \(\Theta(\sqrt{sm})\) has the same square-root dependence
as dense QROM, with \(N=2^n\) replaced by \(s=|\supp(d)|\) in such steps.

We first consider sparse state preparation. The input is a
classical description of an \(n\)-qubit state
\[
    \ket{\psi}
    =
    \sum_{x=0}^{s-1}\alpha_x\ket{a_x},
    \qquad
    a_x\in\{0,1\}^n,
\]
where the basis states \(a_0,\ldots,a_{s-1}\) are distinct, and the goal is to
prepare \(\ket{\psi}\) starting from the all-zero state within trace-distance
error \(\varepsilon\). We give a sparse state preparation algorithm with
square-root \T\ count in \(s\) and \(n\), and prove a matching adaptive lower
bound when \(Cn\le s\le 2^{(1-\delta)n}\).

\begin{theorem}[Informal version of
\Cref{thm:sparse-state-preparation,thm:sparse-state-lower-bound}]
Any \(s\)-sparse \(n\)-qubit state can be prepared within trace-distance error \(\varepsilon\) by a Clifford\,$+$\,\T\ circuit using 
\[
    O\!\left(
        \sqrt{sn}
        + n + 
        \sqrt{s\log(1/\varepsilon)}
        +
        \log(1/\varepsilon)
    \right)
\] \T\ gates. Furthermore, for any constant \(0<\delta<1\), there exists a
constant \(C>0\) such that, for all sufficiently large \(n\), if
\(Cn\le s\le 2^{(1-\delta)n}\) and \(0<\varepsilon\le1/64\), no adaptive
Clifford\,$+$\,\T\ circuit can use asymptotically  fewer \T\ gates.
\end{theorem}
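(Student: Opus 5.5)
For the upper bound, I would split the preparation of $\ket{\psi}=\sum_{x<s}\alpha_x\ket{a_x}$ into two stages. The first stage prepares the dense $\lceil\log s\rceil$-qubit state $\ket{\phi}=\sum_{x<s}\alpha_x\ket{x}$ to trace distance $\varepsilon/2$. The second stage maps $\ket{x}\ket{0^n}\mapsto\ket{0}\ket{a_x}$ coherently.

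The first stage uses the standard binary-tree (Grover--Rudolph) construction, with the rotation angles loaded by dense QROM of \cite{Low2024tradingtgatesdirty}. Let $b=O(\log(s/\varepsilon))$ be the angle precision. A naive layer-by-layer count would give $\sum_\ell\sqrt{2^\ell b}=O(\sqrt{sb})$ plus per-layer synthesis overhead $O(\log s\cdot b)$, which is too large. To reach $O(\sqrt{s\log(1/\varepsilon)}+\log(1/\varepsilon))$, I would instead:
\begin{itemize}
\item load all amplitudes at once with a single QROM of $s$ entries and $b$ bits each, costing $O(\sqrt{sb})$;
\item perform a single coherent rotation (phase-gradient based), or equivalently prepare the state by amplitude loading followed by one exact amplitude amplification round;
\item synthesize the final rotation to precision $\varepsilon$, costing $O(\log(1/\varepsilon))$.
\end{itemize}
The $\log s$ in $b$ is absorbed, since $\sqrt{s\log s}\le\sqrt{sn}$.

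The second stage is the sparse QROM piece. Apply the promised sparse QROM (\Cref{thm:Algorithm-promised-sparse-QROM}) for the table $x\mapsto a_x$; this table has support size $s$ and message length $n$, so it costs $O(\sqrt{sn})$. This yields $\ket{x}\ket{a_x}$. Next we must uncompute $\ket{x}$ using $\ket{a_x}$, which requires an inverse table $a\mapsto x$. That table is a promised sparse QROM on $n$-bit addresses with support $\{a_x\}$ and $\lceil\log s\rceil$-bit messages, costing $O(\sqrt{s\log s})=O(\sqrt{sn})$. We run its inverse to clear $\ket{x}$; the promise holds because only the support addresses $a_x$ appear. Adding up, with the $+n$ term covering constant-size overheads and the $s<n$ regime, gives the claimed upper bound.

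For the lower bound, I would argue via counting over adaptive Clifford+\T\ circuits, as in the lower-bound theorems cited in \cref{tab:qrom-tcount-summary}. An adaptive circuit with $t$ \T\ gates, after gadgetization, is described by a Clifford circuit acting on $t$ magic states, together with measurement-dependent Clifford corrections. The number of effectively distinct output states is at most $2^{O(t^2+tn')}$ up to Clifford equivalence, or more carefully $2^{O(t^2)}\cdot(\text{Clifford data on } n'\text{ qubits})$. The key is to bound the number of distinguishable outputs by roughly $2^{O(t^2+t\log n)}$ after exploiting the fact that a Clifford acting on a stabilizer-times-magic input collapses to $O(t)$ relevant qubits. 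This standard reduction is presumably proven earlier in the paper and is what the dense QROM bound uses.

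On the other side, I would exhibit a packing of $\varepsilon$-far $s$-sparse states, combining two families:
\begin{itemize}
\item uniform superpositions over distinct $s$-subsets, with pairwise overlap bounded by choosing a code of subsets; there are $2^{\Omega(sn)}$ of them when $s\le 2^{(1-\delta)n}$, since $\binom{2^n}{s}\ge(2^n/s)^s$;
\item states with fixed support but phases taken from a $2^{\Omega(s\log(1/\varepsilon))}$-size net;
\end{itemize}
plus the single-qubit family giving $\Omega(\log(1/\varepsilon))$. Comparing the counts gives $t^2\gtrsim sn+s\log(1/\varepsilon)$, hence the lower bound.

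The main obstacle is the counting bound for adaptive circuits. Mid-circuit measurements produce a mixed output, so the counting must bound the number of channels that land within trace distance $\varepsilon$ of each target. That in turn requires that a single adaptive circuit cannot be close to two packed states, which holds for $\varepsilon\le1/64$ by the triangle inequality. The requirement $Cn\le s$ ensures that the $t\log n$-type overhead terms are dominated.
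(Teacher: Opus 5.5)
Your architecture matches the paper's. For the upper bound, you prepare $\sum_x\alpha_x\ket{x}$ on $r=\lceil\log s\rceil$ qubits, load $a_x$, and erase $x$ with an inverse promised sparse QROM. For the lower bound, you count packings against canonical forms. However, several load-bearing steps fail as written.

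\emph{Upper bound.} Your first stage does not achieve $O(\sqrt{s\log(1/\varepsilon)}+\log(1/\varepsilon))$. One QROM of all amplitudes plus one controlled rotation only writes $\alpha_x/\max_y|\alpha_y|$ into the amplitude of a flag qubit over the uniform superposition. Postselecting the flag succeeds with probability $1/(s\max_x|\alpha_x|^2)$, which is $\Theta(1/s)$ for a concentrated state. So one round of amplitude amplification is not enough, and the $\Theta(\sqrt s)$ rounds you would need cost $\Theta(s\sqrt b)$. Removing the $O(b\log s)$ overhead of the tree construction is precisely the nontrivial content of Gosset--Kothari--Wu; the paper simply applies \Cref{thm:dense-state-prepare} to the $r$-qubit register. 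Two smaller fixes are needed in your second stage. First, load $x\mapsto a_x$ with the dense QROM of \Cref{thm:LB-dense-qrom}, costing $O(\sqrt{sn}+n)$. The promised sparse QROM carries an additive $n\log s$ term that breaks the claimed bound when $s$ is below roughly $n\log^2 n$. Second, store $(1,x)$ rather than $x$ in the erasure table. Otherwise $a_0$ carries the zero message and lies outside the data support on which the promise holds, so by \Cref{prop:promised-qrom-behavior} it may be mapped to another label.

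\emph{Lower bound.} The counting bound $2^{O(t^2+t\log n)}$ that you call the key is false. Already at $t=0$ there are $2^{\Theta(n^2)}$ pairwise far stabilizer states. Applied to all pairs $\{a,b\}$ (i.e.\ $s=2$), your bound would force some $(\ket{a}+\ket{b})/\sqrt2$ to need $\Omega(\sqrt n)$ \T\ gates, yet every such state is a stabilizer state. The correct count of canonical forms from \Cref{lem:gkw-pauli-postselection} is $2^{O(n^2+t^2)}$ (a Clifford and $2t$ Paulis on $n+2t$ qubits). This gives $t^2\ge\Omega(s\log(2^n/s))-O(n^2)$. The hypothesis $s\ge Cn$ is exactly what lets $s\log(2^n/s)\ge\delta sn\ge C\delta n^2$ absorb the $n^2$ term; it has nothing to do with a $t\log n$ overhead. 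Moreover, the adaptive reduction only produces a pure state within $\sqrt{6\varepsilon}$ of the target, so your packings must be separated at that scale. This is why $\varepsilon\le1/64$ is needed ($2\sqrt{6\varepsilon}<\sqrt3/2$), not a radius-$\varepsilon$ triangle inequality. Counting canonical forms on a fixed number of qubits gives only $\sqrt{\log(1/\varepsilon)}$, so the additive $\log(1/\varepsilon)$ must be imported from the single-qubit bound of Beverland et al.\ or Gosset--Kothari--Wu. The paper obtains both $\varepsilon$-terms at once by embedding the Gosset--Kothari--Wu hard state on $\lfloor\log s\rfloor$ qubits as $\ket{\chi}\ket{0^{n-\lfloor\log s\rfloor}}$. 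Your $n$-qubit phase net would still work for the $\sqrt{s\log(1/\varepsilon)}$ term under the correct count: that term only matters when $\log(1/\varepsilon)\gtrsim n$, and then $s\log(1/\varepsilon)\gtrsim Cn^2$.
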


To see why promised sparse QROM is essential for the upper bound, first prepare the
compressed state \(\sum_{x=0}^{s-1}\alpha_x\ket{x}\ket{0^n}\) using the dense
state-preparation algorithm of Gosset, Kothari, and
Wu~\cite{david2026quantum}, and then use dense QROM to load the basis labels
\(a_x\), producing $\sum_{x=0}^{s-1}\alpha_x\ket{x}\ket{a_x}$.
It remains to erase the compressed label \(x\) while keeping \(\ket{a_x}\).
This can be done by applying the inverse of the promised sparse QROM
\(\ket{a_x}\ket{0}\mapsto\ket{a_x}\ket{(1,x)}\), where the extra flag only
separates the label \(x=0\) from the zero message. The address is promised to
lie in \(\{a_0,\ldots,a_{s-1}\}\). If this erase step were implemented as a dense
QROM over all \(2^n\) basis labels, the \T\ count would be
\(O(\sqrt{2^n\log s})\). The promise reduces the address set to the \(s\)
labels that actually occur, giving \T\ count \(O(\sqrt{s\log s})\), which is
dominated by the \(O(\sqrt{sn} + n)\) \T-count for loading the labels \(a_x\).
Thus the two label-loading and label-erasure operations have total \T\ count
\(O(\sqrt{sn} + n)\), while dense state preparation contributes
\(O(\sqrt{s\log(1/\varepsilon)}+\log(1/\varepsilon))\). Overall, this improves
the near-linear dependence on the sparsity \(s\) in prior sparse
state-preparation constructions~\cite{li2025nearly,rupprecht2026sparsequantumstatepreparation}
to a square-root dependence, matching the lower bound in
\Cref{thm:sparse-state-lower-bound}.

We next consider sparse block encoding. Here row- and column-\(s\)-sparse
means that every row and every column has at most \(s\) nonzero entries, and a
unitary \(U\) is an \((\alpha,a,\varepsilon_{\rm BE})\)-block-encoding of a
matrix \(A\) if
\[
    \left\|
        A-\alpha
        (\bra{0^a}\otimes I)U(\ket{0^a}\otimes I)
    \right\|
    \le \varepsilon_{\rm BE}.
\]
We give a sparse block encoding implementation with square-root \T\ count in
\(2^n s\), up to logarithmic factors. The same asymptotic \T\ count also
implements its controlled version. We also prove a matching adaptive lower
bound when \(Cn\le s\le 2^{(1-\delta)n}\) and
\(\varepsilon_{\rm BE}=O(1)\).

\begin{theorem}[Informal version of
\Cref{lem:complex-sparse-block-encoding,thm:controlled-block-encoding,thm:sparse-block-encoding-lower-bound}]
For any row- and column-\(s\)-sparse
\(A=(a_{ij})\in\mathbb{C}^{2^n\times2^n}\) with \(|a_{ij}|\le1\), both a
\((\Theta(s),O(n+\log(s/\varepsilon_{\rm BE})),\varepsilon_{\rm BE})\)-block-encoding of \(A\) and its controlled version can be implemented by a Clifford\,$ + $\,\T\ circuit using
\[
    O\!\left(
        \sqrt{2^n sn}
        +
        \sqrt{2^n s\log(s/\varepsilon_{\rm BE})}
        +
        \log(s/\varepsilon_{\rm BE})
    \right)
\]
\T\ gates. Furthermore, for any constant \(0<\delta<1\), there exist constants
\(C,c>0\) such that, for all sufficiently large \(n\), if
\(Cn\le s\le 2^{(1-\delta)n}\) and \(0<\varepsilon_{\rm BE}\le c\), no
adaptive Clifford$+$\T\ circuit can use asymptotically fewer \T\ gates.
\end{theorem}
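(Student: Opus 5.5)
We will prove the upper and lower bounds separately, following the same pattern as for sparse state preparation.

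\textbf{Upper bound.} We use the standard sparse-access block encoding, phrased as a product $U=U_{\mathrm{row}}^\dagger U_{\mathrm{col}}$ of two state-preparation unitaries.
The column unitary acts as $U_{\mathrm{col}}\ket{0}\ket{0^a}\ket{j} = \frac{1}{\sqrt{s}}\sum_{k<s}\ket{k}\ket{r(j,k)}\ket{j}\ket{\text{amp}}$. Here $r(j,k)$ is the $k$-th nonzero row index in column $j$, and the amplitude register encodes $\sqrt{a_{r(j,k)j}}$ to $b=O(\log(s/\varepsilon_{\rm BE}))$ bits. The row unitary is defined symmetrically, giving $\alpha=\Theta(s)$.

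To build $U_{\mathrm{col}}$, we proceed in four steps.
\begin{enumerate}[label=(\roman*)]
\item Prepare the uniform superposition over $k$ using $O(\log s)$ \T\ gates.
\item Apply a dense QROM on the $(n+\log s)$-bit address $(j,k)$ to load $r(j,k)$ and the $b$-bit amplitude data. By \Cref{thm:lb-dense-qrom} and \cite{Low2024tradingtgatesdirty}, this costs $O(\sqrt{2^n s(n+b)})$ \T\ gates, which matches the first two terms.
\item Rotate an ancilla by the loaded amplitude and phase with a controlled rotation. Its \T\ cost is $O(b)$ arithmetic plus $O(\log(s/\varepsilon_{\rm BE}))$ for synthesising the rotation, which gives the last term.
\item Uncompute the amplitude register with the same QROM.
\end{enumerate}

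The key issue is that the index $k$ must be turned into $j$ (and vice versa in the row register) so that the two sides meet in the computational basis $\ket{i}\ket{j}$. For this we use the inverse of a promised sparse QROM, as in the state-preparation argument. Addressed by the pair $(i,j)$, or equivalently by $i$ for each fixed $j$, it maps the loaded row label back to $k$. Its address set has size $2^n s$, so \Cref{thm:Algorithm-promised-sparse-QROM} gives cost $O(\sqrt{2^n s\log s})$, which is dominated by the terms above.

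Errors accumulate additively over $O(1)$ approximate steps, so we choose the rotation and bit precision to be $\varepsilon_{\rm BE}/\mathrm{poly}$. The controlled version only requires controlling the rotation and the ancilla preparations. The QROM calls compose as $V^\dagger(\cdot)V$ and need no control, so the asymptotic count is unchanged.

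\textbf{Lower bound.} We reduce from sparse state preparation, \Cref{thm:sparse-state-lower-bound}, or directly from the counting argument for adaptive circuits behind \Cref{thm:lb-promised-qrom}. Take $A$ to be $2^n$ column blocks, each of whose columns is an arbitrary $s$-sparse vector with entries from a fine grid. Querying a block encoding on $\ket{0^a}\ket{j}$ in superposition over $j$, followed by amplitude amplification, which costs $O(1)$ calls when the column norms are $\Theta(\sqrt{s})$, produces a state from which the whole matrix is determined up to constant error. The number of distinguishable matrices is $2^{\Omega(2^n s(n+\log(1/\varepsilon_{\rm BE})))}$. The counting bound says that a circuit with $t$ \T\ gates and $\mathrm{poly}$ qubits realises at most $2^{O(t^2+t\log(\cdot))}$ distinct channels; we would reuse the paper's lemma here. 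Comparing the two counts gives $t=\Omega(\sqrt{2^n sn}+\sqrt{2^n s\log(1/\varepsilon_{\rm BE})})$. The additive $\log(s/\varepsilon_{\rm BE})$ term follows from single-qubit rotation synthesis lower bounds, using a $1\times1$ embedded block.

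The main obstacle is making the distinguishability step rigorous. Constant-error block encodings must yield a packing of size $2^{\Omega(2^n s n)}$, which requires choosing a code of sparsity patterns with pairwise operator-norm distance above $2c\alpha$. We would first try a random packing argument: take a random support together with random signs, normalised by $\alpha=\Theta(s)$, and use concentration to show that the packing separation holds.
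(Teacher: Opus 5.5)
\textbf{Lower bound: the amplitude-amplification step fails.} This is a genuine gap. With $|a_{ij}|\le1$ and $s$-sparse columns, $\|Ae_j\|_2\le\sqrt s$, while the statement fixes $\alpha=\Theta(s)$. Apply $U$ to $\ket{0^{a_{\rm BE}}}$ and an input that retains $j$. That input has to be the maximally entangled $\ket{\Phi_N}=N^{-1/2}\sum_j\ket{j}\ket{j}$, because $A$ applied to a plain superposition over $j$ does not determine $A$. The good branch is then $\operatorname{vec}(B_A)/(\alpha\sqrt N)$, whose norm is $O(1/\sqrt s)$. Amplitude amplification therefore needs $\Theta(\sqrt s)$ calls, not $O(1)$. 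Your count then gives only $t\sqrt s=\Omega(\sqrt{2^n sn})$, i.e.\ $t=\Omega(\sqrt{2^n n})$, which loses exactly the factor you need.

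\textbf{The missing idea: postselection is free in the counting argument.} Apply \Cref{lem:gkw-adaptive-reduction} to the \emph{unprojected} state $U_A\ket{0^{a_{\rm BE}}}\ket{\Phi_N}$; its expected \T\ count is $t$, since $\ket{\Phi_N}$ is Clifford. Then append $\ket{0^{a_{\rm BE}}}\!\bra{0^{a_{\rm BE}}}=\prod_j(I+Z_j)/2$ as further Pauli postselections and apply \Cref{lem:gkw-canonical-form}. The resulting $2^{O(n^2+t^2)}$ count does not depend on the success probability or on the number of ancillas. Ancilla-independence matters because the upper-bound circuits use $\Theta(\sqrt{2^nsn})$ ancillas, so a count with a Clifford on all qubits would be vacuous. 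This is \Cref{lem:block-encoding-counting}.

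\textbf{Wrong separation metric.} Operator-norm distance above $2c\alpha$ is not what the count needs, and random signed supports would not achieve it: their differences typically have operator norm of order $\sqrt s$. The count only sees the normalized $\operatorname{vec}(B_A)$. You therefore need matrices with a common Frobenius norm whose phase-minimized pairwise Frobenius distances exceed $4\sqrt N\varepsilon_{\rm BE}$, using $\|A-B_A\|_F\le\sqrt N\|A-B_A\|$. Since $\|A\|_F=\Theta(\sqrt{Ns})$, constant relative separation is enough. The paper obtains it from \Cref{lem:sparse-matrix-support-packing}, and from a volume argument over pairs $\tfrac12(\cos\theta_\ell,\sin\theta_\ell)$. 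The pairing fixes the Frobenius norm, which a plain entrywise grid does not.

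\textbf{Additive term.} Your $1\times1$ block has the same problem: postselection succeeds only with probability $\Theta(1/s^2)$. The paper instead uses a rank-one $uv^\dagger$ supported on an $\bar s\times\bar s$ block with $\bar s=\Theta(s)$, so postselection succeeds with constant probability.

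\textbf{Upper bound.} You follow the paper's route: the Gram-matrix form $U_r^\dagger U_c$, dense QROM for the labels, and inverse promised sparse QROM to erase the slot index. Two steps are not justified as written.

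First, the controlled version is not free. Only the amplitude-loading QROM appears as a conjugation $V^\dagger(\cdot)V$. The label-loading QROMs and the promised-QROM erasures in $U_c$ and $U_r^\dagger$ carry different data, so they do not cancel when the control is off. They must be controlled as well. This is cheap: absorb the control bit into the dense-QROM address as in \Cref{alg:promised-qrom-level}, and initialize the active flag $A_0$ of \Cref{alg:promised-qrom} to the control bit. Alternatively, invoke \Cref{thm:controlled-clifford-t}, as the paper does.

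Second, rotating by a $b$-bit angle and phase held in a register does not cost $O(b+\log(s/\varepsilon_{\rm BE}))$ with standard tools. Bitwise controlled rotations cost $O(b\log(b/\varepsilon_{\rm BE}))$, and a phase-gradient register has a comparable one-time cost. For very small $\varepsilon_{\rm BE}$ this exceeds the target bound. The paper avoids rotation synthesis altogether with the comparator trick: it tests $\mathbf 1[\sigma<m_{ij}]$ against a uniform $\sigma\in[2^b]$ using $O(b)$ \T\ gates. It handles complex entries with a selector-qubit combination of the real and imaginary parts.
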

Promised sparse QROM enters the block-encoding upper bound through the operations that
load the nonzero positions in each row and column. For the row operation, let
\(r_{i,k}\) be the \(k\)-th nonzero column index in row \(i\). We first use a
dense QROM to load \(r_{i,k}\in[2^n]\) from
\((i,k)\in[2^n]\times[s]\), with \T\ count
\(O(\sqrt{2^n s n})\), and then erase \(k\) from the pair
\((i,r_{i,k})\). This erase map is implemented by the inverse of a promised
sparse QROM, since \((i,r_{i,k})\) is promised to be a nonzero position of the
matrix. A dense QROM for the erase map would range over all pairs
\((i,j)\in[2^n]\times[2^n]\), with \T\ count
\(O(2^n\sqrt{\log s})\), whereas the promised version only ranges over the
\(2^n s\) nonzero positions and has \T\ count
\(O(\sqrt{2^n s\log s})\), which is dominated by the
\(O(\sqrt{2^n s n})\) \T-count for loading \(r_{i,k}\). The column operation is
identical, with rows and columns interchanged. The remaining terms in the
block-encoding bound come from loading
\(\log(s/\varepsilon_{\rm BE})\)-bit approximations of the nonzero matrix
entries and converting these values into amplitudes using the subsampling
method of~\cite{babbush2018encoding}.

These primitives can be used as building blocks for higher-level algorithms,
including quantum singular value transformation (QSVT), sparse Hamiltonian simulation, sparse
linear-system solving, and quantum rejection sampling. These algorithms in turn
lead to applications such as matrix inversion and quantum machine learning via
QSVT~\cite{GSLW19,Chakraborty2022QRLs}, quantum chemistry and linear
differential equations via Hamiltonian simulation~\cite{Kassal2010chemistry,Babbush2015fermions,Berry2017lode},
data fitting, electromagnetic scattering, and finite-element methods via
linear-system solvers~\cite{Wiebe2012datafitting,Clader2013preconditioned,Montanaro2016fem},
and quantum linear systems, quantum Metropolis sampling, and recent lattice
discrete Gaussian sampling and related cryptographic tasks via quantum
rejection sampling~\cite{HHL2009,Temme2011metropolis,Chevignard2026DGS,Ling2026QRSTrapdoor}.

The main additional application bounds are summarized in
\Cref{tab:application-tcount-summary}.
\begin{table}[htp]
\makebox[\textwidth][c]{%
\small
\renewcommand{\arraystretch}{2.1}
\begin{tabular}{|
>{\raggedright\arraybackslash}m{0.29\textwidth}|
p{0.72\textwidth}|}
\hline
Application & \T\ count \\ \hline

Singular value transformation &
\(O\br{d \br{\sqrt{2^n s(n+\log(sd/\varepsilon))}+\log(sd/\varepsilon) } }\)
(\Cref{lem:block-encoding-for-QSVT}) \\ \hline

Hamiltonian simulation &
\(O \br{ \br{s|\tau|+\log(1/\varepsilon)} \br{
\sqrt{2^n s\bigl(n+\log(s|\tau|/\varepsilon)\bigr)}
+\log(s|\tau|/\varepsilon)} }\)
(\Cref{thm:sparse-hamiltonian-simulation}) \\ \hline

Sparse linear-system solving &
\(O\br{s\kappa\log(1/\varepsilon)\br{\sqrt{2^n s(n+\log(s\kappa/\varepsilon))}
+\log(s\kappa/\varepsilon)} }\)
(\Cref{thm:sparse-qlss}) \\ \hline

Quantum rejection sampling &
\(O \br{ \br{\sqrt{s\log(1/\delta)}+\log(1/\delta)\log s} / \|\epsilon\|_2}\)
(\Cref{thm:sparse-qrs-filter}) \\ \hline

\end{tabular}%
}

\caption{Additional application bounds obtained from sparse QROM and sparse
block encoding. Here $\varepsilon,\delta$ are the error parameters, \(d\) is the polynomial degree, \(\tau\) is the
simulation time, \(\kappa\) is the condition number and $\epsilon$ is the rejection sampling vector. Matrix entries are
approximated to the precision required by the target error.}
\label{tab:application-tcount-summary}
\end{table}

\subsection{Techniques}
\paragraph{Sparse QROM algorithms.}

To handle sparse QROMs, a natural idea is to compress the support before
loading the data. Let $S\subseteq \set{0,1}^n$ be the support set, with $|S|=s$.
In the promised sparse QROM setting, one would like to construct an injective
hash
$
h:S\to \set{0,1}^{\log s}
$
and store the value $d_x$ at the compressed address $h(x)$. If $h(x)$ can be evaluated coherently, then on an input $x\in S$ one can first compute $h(x)$,
and then invoke a dense QROM on the $\log s$-bit address $h(x)$ to load $d_x$.
Thus the main algorithmic problem is to construct a coherent perfect hash for $S$ with low $\T$-count. 
We approach this problem through three different
hashing-based constructions, each improving the complexity of the previous one:
random linear hashing gives a non-optimal $O(\sqrt{s}\,m)$ construction,
Pagh's Hash-and-Displace scheme improves this to
$O(\sqrt{sm}+\sqrt{s\log s})$, and our final multilevel hashing construction achieves the optimal bound

$$
\Theta(\sqrt{sm}).
$$

\begin{itemize}
\item The first construction is based on random linear hashing, which gives a sparse-QROM construction
with $\T$ count $O(\sqrt{s}\,m)$. This is a natural starting point, but it is
not optimal: compared with the target complexity $\Theta(\sqrt{sm})$, it loses
a multiplicative factor of $\sqrt m$. Concretely, one may choose two random
linear maps $h_1,h_2$, where $h_1$ partitions the support $S$ into buckets and
$h_2$ is used to process labels within each bucket in parallel. This approach
is attractive because linear maps contribute no $\T$-gates. However, random linear
hashing does not give sufficiently strong worst-case control over the bucket
sizes. And thus the coherent cost is still dominated by these heavy buckets. 

\item 
The second construction is based on Pagh's perfect hash scheme \cite{pagh1999hash}, which
improves the above multiplicative loss but still remains non-optimal. This approach gives a $\T$ count
$
O(\sqrt{sm}+\sqrt{s\log s}),
$
with an additive loss. In this scheme,
the hash has the form
$$
h(x)=h_1(x)\oplus g(h_2(x)),
$$
where $h_1$ and $h_2$ are linear maps, and the displacement table $g$ is chosen
so that the final map is collision-free on the support $S$. The displacement
table removes the heavy-bucket obstruction from the first approach, and hence
recovers the main $O(\sqrt{sm})$ loading cost. The difficulty is that coherent
evaluation of $h$ now requires querying $g(h_2(x))$ in superposition. This query is itself a
dense QROM on $\log s$-bit addresses, contributing an additional
$
O(\sqrt{s\log s})
$
$\T$ count overhead.

\item  Our final construction achieves the optimal $\T$ count $\Theta(\sqrt{sm})$ by
using a  multilevel hashing strategy. Unlike the previous two one-shot approaches, it does not try to construct a single global perfect hash at once. 
At level $i$, let $s_i$ be the size of indices that have not yet been resolved. We choose a linear
hash function $h_i$ with an appropriate output length and use the value $h_i(x)$ as a tentative compressed address. As in the random linear hashing approach, the worst-case bucket size is difficult to control. The key observation is that we do not need all buckets to be small. 
It is enough that a constant fraction of the
remaining indices fall into singleton buckets.

For every index $x$, if $h_i(x)$ is in a singleton bucket, we query the corresponding data at the index and sets a resolved, or ``deleted'', flag for $x$. This step requires a \T\ cost of $O(\sqrt{s_im})$ for a dense QROM lookup. If $h_i(x)$ is not in a singleton bucket, no data is loaded at this level, and the index remains unresolved. The next level is then applied only to the indices that have not been deleted. Thus, each level removes a constant fraction of the still-unresolved support.
The number of unresolved indices $s_i$ decreases geometrically across levels. Thus, the total \T\ count is bounded by
$$
\sum_{i\ge 0} O\left(\sqrt{s_i m}\right)
=
O(\sqrt{sm}).
$$
We note that the idea is close in spirit to recursive refinement schemes in
classical perfect hashing, such as RecSplit \cite{esposito2020recsplit}, but
here the recursion is designed specifically to control coherent QROM cost.

\end{itemize}

We then observe that the promised sparse QROM can be used to construct the general sparse QROM. The idea is to ask the promised sparse QROM to return not only the data value $d_x$, but also the index $x$ itself. This extra index is enough to certify that the input indeed lies in the support.
This in turn gives a sparse QROM construction with \T\ count
$ O(\sqrt{sm} + \sqrt{sn})$.

\paragraph{Lower bounds}

Our lower bounds use the adaptive Clifford+\(\T\) framework of
Gosset, Kothari, and Wu~\cite{david2026quantum}, which builds on the
postselected Clifford canonical form of Beverland, Campbell, Howard, and
Kliuchnikov~\cite{beverland2020lowerbounds}. The relevant result is that any
adaptive Clifford+\(\T\) circuit with \T\ count \(t\) that prepares a
state \(\ket{\phi}\) can be converted into a Pauli-postselected Clifford
canonical form with \(O(t)\) magic states that prepares a state close to
\(\ket{\phi}\) in trace distance without ancilla. 
We then reduce each task to state preparation and construct an exponentially
large family of instances whose chosen output states have constant pairwise
trace distance. Distinct instances must therefore give distinct canonical
forms, and counting these forms yields the \T-count lower bounds.

For QROM and sparse block encoding, we do not use the general reduction from unitary synthesis to
Choi-state preparation in Gosset, Kothari, and Wu~\cite{david2026quantum}. The reason is that 
a Choi-state reduction averages the action of the unitary over the full input space, which can make the distance between instances exponentially smaller. We therefore reduce these tasks directly to state preparation by choosing appropriate input states that produce well-separated output states.

\begin{itemize}
\item  \textbf{Sparse state preparation.}
We use two hard families. First, by fixing the support and varying the
amplitudes, the problem reduces to dense state preparation on an
\(s\)-dimensional subspace. The lower bound of
Gosset, Kothari, and Wu~\cite{david2026quantum} then gives the
\(\sqrt{s\log(1/\varepsilon)}+\log(1/\varepsilon)\) terms. Second, by varying
the support itself, we capture the \(\sqrt{sn}\) term. 

\item \textbf{Sparse QROM.}
We fix a support set of the form $
\set{z || 0 \cdots 0}$ and take the input state to be the uniform superposition over
the support. We then choose a good family of functions such that every two
distinct functions differ at least half on the support set. Consequently, the output states produced by the corresponding QROMs are pairwise well separated.
For unitary sparse QROM, the argument can also accommodate exponentially small errors. Thus, we use the QROMs corresponding to the indicator functions $1_S$ and apply them to the maximally mixed input state
to obtain well-separated output states.

\item \textbf{Sparse block encoding.}
We reduce the implementation of a block-encoding
unitary for a matrix \(A\) to the preparation of \(\text{vec}(A)\). 
We restrict the input to the block-encoding subspace. Here, to avoid a lower bound depending on the number of ancillas, we incorporate the output projection into the postselection circuit.
We then construct three families of hard instances: the first varies the support, the
second varies the nonzero values, and the third is obtained from the
single-qubit state preparation hard instances of Beverland, Campbell, Howard,
and Kliuchnikov~\cite{beverland2020lowerbounds}. 

\end{itemize}

\subsection{Related works}
\paragraph{QROM.}
Prior work studies coherent classical-data access in several related
models~\cite{jaques2025qram,zhangCircuitComplexityQuantum2024a}, including
tradeoffs in query
time~\cite{giovannetti2008quantum,giovannetti2008architectures,paler2020parallelisingqram,xu2025fattreeqram,cesa2025resourcestateqram},
ancilla
size~\cite{dimatteo2019qramft,haner2022tablelookup,zhu2024qlut}, noise
resilience~\cite{hann2021resilience,cesa2025resourcestateqram}, and
architectural
assumptions~\cite{giovannetti2008quantum,giovannetti2008architectures,jaques2025qram}.
We focus on the \T\ count of compiling a fixed classical data table into a
QROM circuit. For dense QROM, Low, Kliuchnikov, and
Schaeffer~\cite{Low2024tradingtgatesdirty} introduced the SELECT-SWAP
architecture, which yields square-root \T-count scaling for dense QROM.
Motlagh and Pocrnic~\cite{motlagh2026halving} recently improved the leading
constant by about a factor of two. Low, Kliuchnikov, and Schaeffer also gave a
matching lower bound in a space-constrained sense. 
Prior work on QROM focused primarily on dense QROM. In this work, we study
sparse QROM and give asymptotically optimal constructions. Our sparse-QROM
lower bounds also recover dense QROM, yielding an unconditional \T-count
lower bound in the dense setting.

\paragraph{State preparation.}

For general \(n\)-qubit states, Low, Kliuchnikov, and
Schaeffer~\cite{Low2024tradingtgatesdirty} proved an ancilla-assisted upper
bound \(\widetilde{O}(\sqrt{2^n\log(1/\varepsilon)})\) on \T\ count. Gosset,
Kothari, and Wu~\cite{david2026quantum} later showed that the optimal
worst-case \T\ count is
\(\Theta(\sqrt{2^n\log(1/\varepsilon)}+\log(1/\varepsilon))\), and their lower
bound already holds for adaptive Clifford+\(\T\) circuits.
For sparse states, prior work did not give a matching \T-count bound. There
is, however, a substantial literature on sparse state preparation in other cost
models, including circuit size~\cite{gleinig2021state,deveras2022double,mozafari2022state,malvetti2021quantumcircuits,ramacciotti2024state,mao2024state,li2025nearly}
and circuit depth~\cite{zhang2022state,sun2023state,yeo2025state,zi2025state}.
Rupprecht and W\"olk~\cite{rupprecht2026sparsequantumstatepreparation} give a
construction with \T\ count
\(O(s+\sqrt{s\log(1/\varepsilon)}+\log(1/\varepsilon))\). Li and
Luo~\cite{li2025nearly} proved nearly optimal circuit-size bounds for sparse
state preparation, but not tight \T-count bounds. To the best of our knowledge, prior to this work, no matching asymptotic
upper and lower bounds were known for the \(\T\) count of sparse state
preparation.

\paragraph{Block encoding.}
Circuit-level implementations of block encodings have also been studied. For
dense matrices, SELECT--SWAP-based constructions have \T\ count linear in the
matrix dimension~\cite{clader2022dense}. For sparse matrices, Gily\'en, Su,
Low, and Wiebe~\cite{GSLW19} construct block encodings from sparse query-access
oracles. Zhang and Yuan~\cite{zhangCircuitComplexityQuantum2024a} analyze the
gate complexity of implementing such oracles, obtaining nearly linear
dependence on the matrix dimension and sparsity.
More efficient circuits are known when the nonzero positions can be computed
efficiently~\cite{Sunderhauf2024blockencoding}, including banded circulant
matrices~\cite{campsExplicitQuantumCircuits2024a} and matrices with periodic
diagonal structure~\cite{zecchi2026blockencodingsparsematrices}.
Dictionary-based sparse block encodings give another way to reduce circuit
depth~\cite{Yang2025dictionarybased}. Our block-encoding results apply to
general sparse matrices specified by classical lists of nonzero positions and
values, and bound the fault-tolerant \T\ count for both uncontrolled and
controlled block encodings.

\subsection{Discussion and future work}

Several questions remain open. One is to determine the optimal adaptive $\T$ count for general sparse QROM. In the non-adaptive model, support identification incurs an additional $\Theta(\sqrt{sn})$ cost, while probabilistic Toffoli implementations~\cite{gosset2025multi} suggest that this overhead may be reduced in the adaptive model.

The nearly dense regime is also not fully understood. A natural goal is to obtain a single bound that interpolates between sparse and dense QROM.

Another direction is to exploit structure in the support or in the stored data. Many applications have supports with algebraic, geometric, or combinatorial descriptions, and such structure may lead to lower $\T$ count.

It would also be valuable to characterize the tradeoff between $\T$ count, total gate count, and  circuit depth. $\T$-optimal constructions need not be optimal for a full fault-tolerant implementation.

\paragraph{Organization.}
After introducing the models and the adaptive implementation framework in
Section~2, we present the sparse-QROM constructions in Section~3 and then
their applications to sparse state preparation, block encoding, and related
tasks in Section~4. We defer all lower-bound arguments to Section~5.

\section*{Acknowledgments}

Tongyang Li and Xinzhao Wang are supported by the National Natural Science
Foundation of China under Grant No.~62372006.
Penghui Yao and Fengning Ou are supported by the National Natural Science Foundation of China (Grant Nos.~62332009 and 12347104), the Quantum Science and Technology--National Science and Technology Major Project (Grant No.~2021ZD0302901), the NSFC/RGC Joint Research Scheme (Grant No.~12461160276), the Natural Science Foundation of Jiangsu Province (No.~BK20243060), the Fundamental and Interdisciplinary Disciplines Breakthrough Plan of the Ministry of Education of China (No.~JYB2025XDXM118), the ``111 Center''(No.~B26023), and the Fundamental Reseach Funds for the Central Universities (Grant No.~2026300376).

OpenAI's ChatGPT was used for language editing and to help check the clarity
and completeness of parts of the proofs.

\section{Preliminaries}

For a positive integer $N$, we write $[N]=\{0,1,\ldots,N-1\}$. For an indexed
object \(f\), such as a data vector or a function, we write
\(\supp(f)=\{x:f_x\neq 0\}\), where \(0\) denotes the all-zero string when the
entries are bit strings. For a state
\(\ket{\psi}=\sum_x\alpha_x\ket{x}\), its support is the computational-basis
set \(\{x:\alpha_x\neq0\}\). For a matrix \(A\), we write
\(\supp(A)=\{(i,j):A_{ij}\neq0\}\); when needed, we call this the entry
support of \(A\). For a probability distribution, support means the set of
outcomes with nonzero probability.

An $N\times N$ matrix is row- and column-$s$-sparse if every row and every
column contains at most $s$ nonzero entries. We write $\|A\|$ for the operator
norm of a matrix $A$, namely $\|A\|=\max_{\|x\|_2=1}\|Ax\|_2$.
For an $N\times N$ matrix $A$, we write
$\operatorname{vec}(A)=\sum_{i,j=0}^{N-1}A_{ij}\ket{i}\ket{j}$. We also write
$\|A\|_F=\big(\sum_{i,j=0}^{N-1}|A_{ij}|^2\big)^{1/2}$ for the Frobenius
norm, so that $\|\operatorname{vec}(A)\|_2=\|A\|_F$. For an operator $X$, we
write $\|X\|_1=\operatorname{Tr}\sqrt{X^\dagger X}$ for the trace norm. For a
linear map $\mathcal{M}$ on operators, we write
\[
    \|\mathcal{M}\|_\diamond
    =
    \sup_{k\ge 1}\ \sup_{X\neq 0}
    \frac{\|(\mathcal{M}\otimes I_k)(X)\|_1}{\|X\|_1}
\]
for the diamond norm, where \(I_k\) denotes the identity map of dimension $k$.

\subsection{Clifford+\T\ circuits}

\T\ count is the complexity measure used throughout this paper. We first define Clifford+\T\ circuits and then introduce the adaptive Clifford+\T\ model.

\begin{definition}[Clifford+\texorpdfstring{\(\T\)}{T} circuit]
A Clifford\,$+$\,\T\ circuit is a quantum circuit composed of Clifford gates,
\T\ gates, and clean ancillas initialized to \(\ket{0}\). The \T\ count of such
a circuit is the number of \T\ gates appearing in the circuit.
\end{definition}

We use the adaptive Clifford+\(\T\) model and the expected \T-count
notation of Gosset, Kothari, and Wu~\cite{david2026quantum}. An adaptive Clifford+\(\T\) circuit
extends the non-adaptive model by allowing single-qubit
computational-basis measurements, where later operations may depend on earlier measurement
outcomes.
If the circuit performs \(\ell\) measurements, then each measurement record
\(r=(r_1,\ldots,r_\ell)\in\{0,1\}^\ell\) fixes the subsequent operations that
are controlled on measurement outcomes. Let \(U^{(j)}(r_{<j})\) be the unitary
applied between the \((j-1)\)-st and \(j\)-th measurements after observing
\(r_{<j}=(r_1,\ldots,r_{j-1})\), and let \(\Pi^{(j)}_{r_j}\) be the projector
corresponding to outcome \(r_j\) of the \(j\)-th measurement. The corresponding
Kraus operator is
\[
    K_r
    =
    U^{(\ell+1)}(r_{\le \ell})
    \Pi^{(\ell)}_{r_\ell}
    U^{(\ell)}(r_{<\ell})
    \cdots
    \Pi^{(1)}_{r_1}
    U^{(1)} .
\]
This operator acts on the input register together with the clean ancillas, and
the output channel is
\[
    \mathcal{A}(\rho)
    =
    \sum_r
    K_r
    \bigl(\rho\otimes \ket{0^a}\!\bra{0^a}\bigr)
    K_r^\dagger .
\]
For an input state \(\rho\) on the designated input register, the expected
\T\ count of an adaptive Clifford+\T\ algorithm \(\mathcal{A}\) on input
\(\rho\) is
\[
    \mathcal{T}_{\rho}(\mathcal{A})
    :=
    \sum_r
    \Tr\!\left[
        K_r
        \bigl(\rho\otimes \ket{0^a}\!\bra{0^a}\bigr)
        K_r^\dagger
    \right]
    \T_r ,
\]
where \(\T_r\) is the number of \(\T\) gates used when the measurement record is
\(r\), and the trace term is the probability of observing \(r\). Thus, a lower
bound \(t\) on the expected \T\ count means that
\(\mathcal{T}_{\rho}(\mathcal{A})\ge t\).

If \(\mathcal{H}\) is a subspace of inputs, we write
\[
    \mathcal{T}_{\mathcal{H}}(\mathcal{A})
    :=
    \sup_{\rho\in\mathcal{D}(\mathcal{H})}
    \mathcal{T}_{\rho}(\mathcal{A}),
\]
where \(\mathcal{D}(\mathcal{H})\) denotes the set of density operators
supported on \(\mathcal{H}\).
For state-preparation tasks we always take the input state to be
\(\rho=\ket{0^n}\!\bra{0^n}\), and abbreviate
\(\mathcal{T}(\mathcal{A})=\mathcal{T}_{\ket{0^n}\!\bra{0^n}}(\mathcal{A})\).

Below, we say that \(\mathcal{A}\) prepares the state \(\ket{\psi}\) within
trace-distance error \(\varepsilon\) if
\[
    \frac12
    \left\|
        \mathcal{A}(\ket{0^n}\!\bra{0^n})
        -
        \ket{\psi}\!\bra{\psi}\otimes
        \ket{0^a}\!\bra{0^a}
    \right\|_1
    \le \varepsilon .
\]
A Clifford circuit with \(m\) Pauli postselections, \(a\) ancillas, and
\(t\) magic states, denoted by \(\mathcal{C}\), maps
\[
    \mathcal{C}(\ket{\phi_{\rm in}}\ket{T}^{\otimes t}\ket{0^a})
    =
    \ket{\phi_{\rm out}}\ket{0^t}\ket{0^a}.
\]
It can be explicitly written as
\begin{align*}
   \ket{\phi_{\rm out}}\ket{0^t}\ket{0^a}
   \propto
   C_{m+1}(I+P_m)C_m\cdots(I+P_1)C_1
   \ket{\phi_{\rm in}}\ket{T}^{\otimes t}\ket{0^a},
\end{align*}
where each \(C_j\) is a Clifford unitary and each \(P_j\) is a Hermitian
Pauli. Here \(\propto\) denotes equality up to a nonzero scalar. We will use
the following reduction from adaptive Clifford+\(T\) circuits to Clifford
circuits with Pauli postselections from Gosset, Kothari, and Wu~\cite{david2026quantum}.

\begin{lemma}[Adaptive-circuit reduction {\cite[Claim~4.5]{david2026quantum}}]
\label{lem:gkw-adaptive-reduction}
Let \(\mathcal{A}\) be an adaptive Clifford\,$+$\,\T\ circuit that prepares an
\(n\)-qubit state \(\ket{\phi}\) within trace-distance error \(\varepsilon\)
and has expected \T\ count at most \(t\). Then, there
are a state \(\ket{\psi}\) and a Clifford circuit \(\mathcal{C}\) with Pauli
postselections and \(a\) clean ancillas such that
\[
    \mathcal{C}
    \left(
        \ket{0^n}\ket{T}^{\otimes 2t}\ket{0^a}
    \right)
    =
    \ket{\psi}\ket{0^{2t+a}},
\]
and
\[
    \frac12
    \left\|
        \ketbra{\psi}-\ketbra{\phi}
    \right\|_1
    \le
    \sqrt{6\varepsilon}.
\]
\end{lemma}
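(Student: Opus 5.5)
The plan is to convert the adaptive circuit into a postselected Clifford circuit in three stages: truncate to a deterministic \T\ budget, replace \T\ gates by gadgets, and select a single good branch.

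\textbf{Step 1: truncation.} Let \(\mathcal{A}'\) be the circuit that runs \(\mathcal{A}\) but aborts, outputting a fixed state, as soon as the branch would use more than \(2t\) \T\ gates. By Markov's inequality the abort probability is at most \(\Pr[\T_r>2t]\le 1/2\). That bound alone is too weak, so I would instead argue about fidelity directly, as below.

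\textbf{Step 2: gadgetization.} Every \T\ gate is implemented by the standard gadget, which consumes a magic state \(\ket{T}\), uses a CNOT and a computational-basis measurement, and applies a classically controlled \(S\) correction. In the Kraus expansion \(\mathcal{A}(\ket{0^n}\!\bra{0^n})=\sum_r K_r(\cdot)K_r^\dagger\), each branch \(r\) is then a product of Cliffords and single-qubit \(Z\)-projectors \((I\pm Z)/2\), and these are Pauli postselections. The measurement outcomes of the gadgets can themselves be postselected to the trivial outcome, in which case no correction is needed. After this, each \(r\) gives a Clifford circuit with Pauli postselections acting on \(\ket{0^n}\ket{T}^{\otimes \T_r}\ket{0^a}\). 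Padding with unused magic states and postselecting those qubits back via Clifford resets lets every branch with \(\T_r\le 2t\) be written with exactly \(2t\) magic states. The final requirement that the magic-state and ancilla registers end in \(\ket{0}\) is obtained by appending Cliffords and postselections onto \(Z\) on those qubits.

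\textbf{Step 3: selecting a good branch (the main obstacle).} Let \(p_r\) be the probability of branch \(r\) and \(\ket{\psi_r}\) its normalized output on the system, with ancillas projected to \(\ket{0^a}\); the trace-distance error \(\varepsilon\) includes the ancilla part. The closeness condition gives
\[
\sum_r p_r\,|\langle\phi,0^a|\psi_r,\mathrm{anc}_r\rangle|^2\ge 1-\varepsilon .
\]
Let \(B\) be the set of branches with \(\T_r>2t\). Markov's inequality gives \(\sum_{r\in B}p_r\le 1/2\), so the complementary good branches carry weight at least \(1/2\) and average fidelity at least \(1-2\varepsilon\). Averaging then yields a single good branch with fidelity \(F\ge 1-2\varepsilon\) after conditioning on the ancillas being \(0^a\) (postselected). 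Converting fidelity to trace distance gives \(\sqrt{1-F}\le\sqrt{2\varepsilon}\).

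Accounting for the ancilla-projection renormalization loss is where I expect the constant to degrade to \(\sqrt{6\varepsilon}\). The key inequality is that projecting onto \(\ket{0^a}\) changes the state by at most \(\sqrt{\varepsilon}\)-type terms, which combined via the triangle inequality produce the stated bound. I would first try to carry out exactly this averaging over branches, with the fidelity \(\ge 1-3\varepsilon\), and then apply \(\tfrac12\|\cdot\|_1\le\sqrt{1-F}\) together with a factor-2 slack.
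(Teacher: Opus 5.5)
Your argument is correct and matches the standard route behind the cited claim of Gosset, Kothari, and Wu, which the paper imports without reproof: Markov's inequality on the \T\ count, \T\ gadgets postselected on the trivial outcome, measurements turned into $(I\pm Z)$ postselections, and averaging to pick a single branch with $\T_r\le 2t$ and $F_r=|\langle\phi,0^a|\Psi_r\rangle|^2\ge 1-2\varepsilon$. The ancilla projection you worry about in your last paragraph costs nothing: writing $\ket{\Psi_r}=\sqrt{q}\,\ket{\psi}\ket{0^a}+\ket{\mathrm{junk}}$ with $(I\otimes\bra{0^a})\ket{\mathrm{junk}}=0$ gives $F_r=q\,|\langle\phi|\psi\rangle|^2$, so $q>0$ (the postselection is well defined) and $|\langle\phi|\psi\rangle|^2\ge F_r$, which already yields the stronger bound $\sqrt{2\varepsilon}\le\sqrt{6\varepsilon}$.
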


The following result was established in \cite[Section~A.7]{beverland2020lowerbounds} and
we use the formulation of \cite[Lemma~4.8]{david2026quantum}. It converts a
Pauli-postselection Clifford circuit to a canonical form. Crucially, the
canonical form removes the clean ancillas and has size independent of the
number of Pauli postselections in the original circuit.

\begin{lemma}[{see \cite[Lemma~4.8]{david2026quantum}}]
\label{lem:gkw-canonical-form}
Let \(\mathcal{C}\) be a Clifford circuit with any number of Pauli
postselections and \(a\) clean ancillas. Suppose
\[
    \mathcal{C}
    \left(
        \ket{0^n}\ket{T}^{\otimes q}\ket{0^a}
    \right)
    =
    \ket{\phi}\ket{0^{q+a}}.
\]
Then there exist an \((n+q)\)-qubit Clifford unitary \(C\) and Hermitian
Paulis \(P_1,\ldots,P_q\) on \(n+q\) qubits such that
\[
    \ket{\phi}\ket{0^q}
    \propto
    C(I+P_q)\cdots(I+P_1)
    \left(\ket{0^n}\ket{T}^{\otimes q}\right).
\]
\end{lemma}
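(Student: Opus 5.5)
The plan is to prove the statement by induction over the postselections, using the stabilizer formalism to keep track of a stabilizer code that contains the (unnormalized) intermediate state. Write \(\mathcal{C}\) as \(C_{m+1}(I+P_m)C_m\cdots(I+P_1)C_1\). Commuting each Clifford to the left and conjugating the Paulis, we may assume the circuit has the form \(D(I+Q_m)\cdots(I+Q_1)\) for one Clifford \(D\) on \(n+q+a\) qubits and Paulis \(Q_j\). These act on the input \(\ket{0^n}\ket{T}^{\otimes q}\ket{0^a}\).

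This input lies in the stabilizer code \(\mathcal{S}_0\) stabilized by \(Z_i\) for each of the \(n+a\) qubits of the \(\ket{0^n}\) and \(\ket{0^a}\) registers. The code \(\mathcal{S}_0\) has dimension \(2^q\).

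During the induction we maintain three things: a Clifford \(D_k\), a list of at most \(r_k\) Paulis \(R_1,\ldots,R_{r_k}\), and a stabilizer group \(G_k\supseteq\langle Z_i\rangle\) of rank \(n+a+r_k\). The invariant is that the current state is proportional to
\[
D_k(I+R_{r_k})\cdots(I+R_1)\ket{0^n}\ket{T}^{\otimes q}\ket{0^a},
\]
that it lies in the code of \(G_k\), and that each \(R_i\) commutes with \(\langle Z_i\rangle\). When the next postselection \(I+Q\) arrives, conjugate it into the frame of \(D_k\) to get a Pauli \(Q'\), and split into three cases.

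\textbf{Case 1: \(\pm Q'\in G_k\).} The factor acts as a scalar. The scalar is \(2\) or \(0\), and \(0\) is excluded because the output is nonzero, so we drop the factor.

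\textbf{Case 2: \(Q'\) anticommutes with some \(g\in G_k\).} Since \(g\) stabilizes the state, \((I+Q')=(I+Q')(I+g)/2\). The operator \(W=(Q'+g)/\sqrt2\) is a Hermitian Clifford unitary, and \((I+Q')(I+g)=\sqrt2\,W(I+g)\). Hence the factor equals a Clifford applied to the state, up to scalar, and we absorb \(W\) into \(D_k\). We then update \(G_k\) to the stabilizer group of the new state, which has the same rank: replace \(g\) by \(Q'\) and multiply the other anticommuting generators by \(g\).

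\textbf{Case 3: \(Q'\) commutes with \(G_k\) but is not in \(\pm G_k\).} This is a genuine postselection. Append \(R_{r_k+1}=Q'\) and enlarge \(G_k\) by \(Q'\), which halves the code dimension. Since \(\dim\mathcal{S}_0=2^q\), Case 3 can occur at most \(q\) times. We then pad with trivial factors \((I+Z_1)\) acting on a \(\ket{0^n}\) qubit until there are exactly \(q\) factors.

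In Case 3, \(R_i\) commutes with every ancilla \(Z_i\), so its restriction to the ancilla register is a product of \(I\)'s and \(Z\)'s. These act on \(\ket{0^a}\) as \(\pm1\). Therefore we can replace each \(R_i\) by an \((n+q)\)-qubit Pauli \(P_i\), absorbing the sign into it, and drop the ancillas from the product of projectors.

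It remains to replace the final Clifford \(D\) on \(n+q+a\) qubits by a Clifford \(C\) on \(n+q\) qubits. To do this, also append the output condition "ancillas and the last \(q\) qubits end in \(\ket0\)" as extra postselections \(Z_i\), conjugated back through \(D\). They cannot be of Case 3 type, since the output is exactly \(\ket{\phi}\ket{0^{q+a}}\) with no additional rank reduction available. Consequently they only modify \(D\) via Case 2 Cliffords, and give a final Clifford that maps the code of \(G\) into the space of states with the ancilla register in \(\ket{0^a}\).

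A Clifford that maps states of the form \(\ket{v}\ket{0^a}\) (with \(v\) in a stabilizer code on \(n+q\) qubits) to states \(\ket{w}\ket{0^a}\) restricts to a Clifford isometry between stabilizer codes on \(n+q\) qubits. Any such isometry extends to an \((n+q)\)-qubit Clifford \(C\), by extending the images of the stabilizer and logical generators to a full symplectic basis. This yields
\[
\ket{\phi}\ket{0^q}\propto C(I+P_q)\cdots(I+P_1)\ket{0^n}\ket{T}^{\otimes q}.
\]

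I expect this last step to be the main obstacle: removing the ancillas from the final Clifford and proving that the restricted map is realized by a Clifford on \(n+q\) qubits. It needs careful bookkeeping of the symplectic structure. A cleaner route I would try first is to note that, on the ancilla register, both the stabilizer group before \(D\) and the one after \(D\) contain \(\langle Z_{\rm anc}\rangle\). Then multiply \(D\) on the right and on the left by Cliffords fixing the code, so that \(D\) acts trivially on the ancillas, and discard them.
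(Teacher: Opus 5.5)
The paper gives no proof of this lemma. It imports it from \cite[Lemma~4.8]{david2026quantum} (via \cite{beverland2020lowerbounds}). Your stabilizer-tracking scheme, with three cases and the Clifford $W=(Q'+g)/\sqrt2$ absorbing a postselection that anticommutes with the stabilizer group, is the standard route to such canonical forms, and the plan is sound. However, the ancilla-removal step rests on a claim that is false, in fact backwards.

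\emph{Appended output postselections are never Case~2.} Let $\hat D$ be the current total Clifford and $\ket{\chi}$ the current input-frame vector. Then $\hat D\ket{\chi}\propto\ket{\phi}\ket{0^{q+a}}$ gives $Q'\ket{\chi}=\ket{\chi}$ for $Q'=\hat D^\dagger Z_j\hat D$. But any $Q'$ anticommuting with some $g\in G$ has $\langle\chi|Q'|\chi\rangle=0$, a contradiction.

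\emph{They can be Case~3.} The vector $\ket{\chi}$ may be a $+1$ eigenvector of a logical Pauli of the code of $G$ that is not in $G$. For example, $(I+Z_1Z_2)\ket{T}^{\otimes 2}\propto\ket{00}+i\ket{11}$ is fixed by $X_1Y_2\notin\pm\langle Z_1Z_2\rangle$. A Clifford sending this state to $\ket{000}$ can route to the ancilla exactly the qubit whose $Z$ pulls back to $X_1Y_2$. In that situation the code of the unenlarged $G$ is not mapped into $\{\ket{w}\ket{0^a}\}$: the orthogonal logical state ends with the ancilla in $\ket{1}$. So your restriction argument fails as written.

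The repair uses only your own machinery.
\begin{itemize}
\item Since appended postselections are Case~1 or Case~3, $\hat D$ never changes during them.
\item The Case~3 ones join the $R$-list and still commute with the ancilla $Z$'s, so your ancilla stripping applies.
\item The bound $r\le q$ survives, because the rank argument only uses that the code of $G$ contains the nonzero vector $\ket{\chi}$.
\item Afterwards $\hat D^\dagger Z_j\hat D\in G$ for every ancilla $j$. Writing the code of $G$ as $\mathcal V\otimes\ket{0^a}$, this means $\hat D$ maps the whole code into $\{\ket{w}\ket{0^a}\}$, which is what the extension step needs.
\end{itemize}

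Your extension sketch is then fine. Let $V$ be the isometry induced by $\hat D$ on $\mathcal V$. Write $\hat D(L\otimes I)\hat D^\dagger=\tilde L\otimes Z^{v}$ for each logical $L$. Choose an $(n+q)$-qubit Clifford $C$ matching the stabilizer generators and a symplectic basis of logicals, fixing signs by composing with a Pauli. Then $C^\dagger V$ commutes with the logical Paulis, which act irreducibly on $\mathcal V$, so it is a scalar.

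Finally, fix the Case~2 bookkeeping. Once $W$ is absorbed ($D_{k+1}=D_kW$), the input-frame vector is unchanged, so $G_{k+1}=G_k$. Replacing $g$ by $Q'$ instead computes $WG_kW$, the group for the old frame. It also breaks your invariant $G_k\supseteq\langle Z_i\rangle$ whenever $g$ is one of the $Z_i$.
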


Combining \cref{lem:gkw-adaptive-reduction,lem:gkw-canonical-form} gives
\begin{corollary}
\label{lem:gkw-pauli-postselection}
Let \(\mathcal{A}\) be an adaptive Clifford\,$+$\,\T\ circuit that prepares an
\(n\)-qubit state \(\ket{\phi}\) within trace-distance error \(\varepsilon\)
and has expected \T\ count at most \(t\). Then, there
exist a state \(\ket{\psi}\), an \((n+2t)\)-qubit Clifford unitary \(C\), and
Hermitian Paulis \(P_1,\ldots,P_{2t}\) such that
\[
    \ket{\psi}\ket{0^{2t}}
    \propto
    C(I+P_{2t})\cdots(I+P_1)
    \left(\ket{0^n}\ket{T}^{\otimes 2t}\right),
\]
and
\[
    \frac12
    \left\|
        \ketbra{\psi}-\ketbra{\phi}
    \right\|_1
    \le
    \sqrt{6\varepsilon}.
\]
\end{corollary}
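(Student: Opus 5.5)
The plan is to compose the two lemmas directly. First apply \cref{lem:gkw-adaptive-reduction} to the adaptive circuit \(\mathcal{A}\). It prepares \(\ket{\phi}\) within trace-distance error \(\varepsilon\) with expected \T\ count at most \(t\), so this yields a state \(\ket{\psi}\), a number \(a\) of clean ancillas, and a Clifford circuit \(\mathcal{C}\) with Pauli postselections such that
\[
    \mathcal{C}\left(\ket{0^n}\ket{T}^{\otimes 2t}\ket{0^a}\right)=\ket{\psi}\ket{0^{2t+a}},
    \qquad
    \frac12\left\|\ketbra{\psi}-\ketbra{\phi}\right\|_1\le\sqrt{6\varepsilon}.
\]
The distance bound is already the one required by the corollary. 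It remains to put \(\mathcal{C}\) into canonical form without changing \(\ket{\psi}\).

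Next, apply \cref{lem:gkw-canonical-form} with \(q=2t\) and with the state \(\ket{\psi}\) in place of \(\ket{\phi}\). The hypothesis \(\mathcal{C}(\ket{0^n}\ket{T}^{\otimes q}\ket{0^a})=\ket{\psi}\ket{0^{q+a}}\) is exactly the output of the first step, since \(\ket{0^{2t+a}}=\ket{0^{q}}\ket{0^{a}}\). The lemma then gives an \((n+2t)\)-qubit Clifford unitary \(C\) and Hermitian Paulis \(P_1,\ldots,P_{2t}\) on \(n+2t\) qubits such that
\[
    \ket{\psi}\ket{0^{2t}}\propto C(I+P_{2t})\cdots(I+P_1)\left(\ket{0^n}\ket{T}^{\otimes 2t}\right).
\]
This is the claimed representation. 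The state \(\ket{\psi}\) is the same object in both steps, so the trace-distance bound carries over unchanged.

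The argument is a direct composition, so there is no real obstacle. The only points to check are bookkeeping ones:
\begin{itemize}
  \item \(2t\) must be an integer number of magic states. If \(t\) is only a real bound on the expected \T\ count, I would read \(2t\) as the integer count that \cref{lem:gkw-adaptive-reduction} provides, or equivalently replace \(t\) by \(\lceil t\rceil\) and \(2t\) by \(2\lceil t\rceil\); a larger bound on the expected count is still a valid hypothesis.
  \item The ancilla register \(\ket{0^a}\) is absorbed by \cref{lem:gkw-canonical-form}, so the final canonical form acts on exactly \(n+2t\) qubits.
\end{itemize}
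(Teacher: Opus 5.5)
Your proposal is correct and matches the paper, which obtains the corollary as the direct composition of the adaptive-circuit reduction (supplying $\ket{\psi}$, a postselected Clifford circuit on $2t$ magic states, and the $\sqrt{6\varepsilon}$ bound) with the canonical-form lemma applied at $q=2t$. Your remark about rounding $t$ up to an integer is harmless bookkeeping that the paper leaves implicit.
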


\subsection{QROM}

Quantum read-only memory (QROM) is a basic primitive for coherent access to
classical data. Given a classical table \(d\), a QROM implementation realizes
the map \(\ket{x}\ket{0}\mapsto \ket{x}\ket{d_x}\) coherently, so the
same map acts linearly when the address register is in superposition.
We mainly focus on the sparse QROM. A sparse QROM is simply a QROM whose underlying data table has only a small
support, namely at most \(s\) nonzero entries.

\begin{definition}[QROM and sparse QROM] \label{def:QROM}
     Suppose we have a data vector $(d_0,\cdots,d_{2^n - 1})$ where each $d_i$ is a binary string of length $m$.
     A \emph{QROM} for $d$ is a unitary oracle $O_d$ such that
     \begin{align*}
        O_d\ket{x}\ket{0^m} = \ket{x}\ket{d_x}, \quad \forall x \in \set{0,1}^n.
     \end{align*}
    If $|\supp(d)|\leq s$, we call $O_d$ an $s$-sparse QROM.  
\end{definition}
Low, Kliuchnikov, and Schaeffer provided a dense-QROM implementation whose \T\ count scales as the square root of the table size.

\begin{theorem}[Dense QROM \T\ count {\cite{Low2024tradingtgatesdirty}}]
\label{thm:LB-dense-qrom}
    Given $n \geq 1$, 
    suppose we have a data vector $d=(d_0,\cdots,d_{2^n-1})$ where each $d_x$ is a binary string of length $m$. Then the dense QROM oracle $O_d$ 
    can be implemented with $O(\sqrt{2^n m} + n + m)$ ancilla and \T\ count\footnote{The original paper states the bound as $O(\sqrt{2^n m})$, since it treats $m$ as a constant. If one keeps track of the dependence on $m$, there is also an additive $m$ term. In particular, when $2^n<m$, the bound should be written as $O(\sqrt{2^n m}+m)$.}
    \begin{align*}
        O\bigl(\sqrt{2^n m} + m\bigr).
    \end{align*} 
\end{theorem}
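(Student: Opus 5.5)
We will prove the statement with the SELECT--SWAP construction of Low, Kliuchnikov, and Schaeffer. Write $N=2^n$ and choose a batch size $\lambda=2^k$, where $k$ is an integer with $0\le k\le n$ and $\lambda\approx\sqrt{N/m}$. If $N<m$, take $\lambda=1$. Split the address as $x=(x_{\rm hi},x_{\rm lo})$, with $x_{\rm hi}\in[N/\lambda]$ and $x_{\rm lo}\in[\lambda]$. For each $x_{\rm hi}$, group the $\lambda$ entries $d_{(x_{\rm hi},j)}$, $j\in[\lambda]$, into one block of $\lambda m$ bits. The circuit has three stages.

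\textbf{Stage 1 (SELECT).} Load the whole block $(d_{(x_{\rm hi},0)},\dots,d_{(x_{\rm hi},\lambda-1)})$ into $\lambda m$ ancillas, controlled on $x_{\rm hi}$. We use the unary-iteration SELECT circuit over $N/\lambda$ addresses. It walks through the addresses with a chain of computed ANDs, and each address step costs $O(1)$ Toffoli (hence $O(1)$ \T) gates. For a fixed address, writing the classical bits is a multi-target CNOT fan-out from the iteration flag, which is Clifford. So this stage costs $O(N/\lambda+n)$ \T\ gates and $O(n)$ ancillas for the iteration.

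\textbf{Stage 2 (SWAP).} Move the block with index $x_{\rm lo}$ into the output register using a controlled-SWAP network. It has $k=\log\lambda$ layers. Layer $i$ is controlled by the $i$-th bit of $x_{\rm lo}$ and swaps $m$-bit registers pairwise. In total this uses $(\lambda-1)m$ controlled-SWAPs, i.e.\ $O(\lambda m)$ \T\ gates.

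\textbf{Stage 3 (uncompute).} Clean up the garbage so that $O_d$ is an exact unitary returning clean ancillas. First copy the selected $m$-bit word into the true output register with $m$ CNOTs. Then run Stages 1 and 2 in reverse. With clean ancillas the inverse costs the same, so the total \T\ count is $O(N/\lambda+\lambda m+n)$. Alternatively, measurement-based uncomputation of the SELECT stage removes constant factors, but it is not needed asymptotically.

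For the total, substitute $\lambda=\Theta(\sqrt{N/m})$ when $N\ge m$: both $N/\lambda$ and $\lambda m$ become $O(\sqrt{Nm})$. The additive $n$ satisfies $n\le\sqrt{Nm}$ since $N=2^n$. When $N<m$, take $\lambda=1$: the cost is $O(N+m)=O(m)$. Either way we get $O(\sqrt{2^nm}+m)$. The ancilla count is $\lambda m+O(n)+m=O(\sqrt{2^nm}+n+m)$.

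The main obstacle is bookkeeping rather than any single step. We must check that unary iteration really gives $O(1)$ \T\ gates per address, not $O(n)$, by sharing AND prefixes in the binary tree of address bits. We must also check that the rounding of $\lambda$ to a power of two with $1\le\lambda\le N$ loses only constant factors at the boundary regimes.
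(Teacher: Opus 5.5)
Your proposal is correct and is essentially the SELECT--SWAP construction of Low, Kliuchnikov, and Schaeffer, which the paper cites rather than reproves (and invokes again for ancilla bookkeeping in the proof of \cref{thm:Algorithm-promised-sparse-QROM}). Your accounting of $O(N/\lambda+\lambda m+n)$ \T\ gates with $\lambda=2^k\approx\sqrt{N/m}$ (or $\lambda=1$ when $N<m$) and $\lambda m+O(n)+m$ ancillas gives the stated bounds; the only nit is that $n\le\sqrt{2^n m}$ fails at $n=3$, $m=1$, but $n=O(\sqrt{2^n})$ is all you need.
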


It is also useful to distinguish the promised and non-promised settings. In the
promised setting, the input is assumed to lie in the support of the sparse data
table. Without this promise, the implementation is required to act correctly on
all inputs.

\begin{definition}[Promised sparse QROM] \label{def:promised-sparse-QROM}
     Suppose we have a data vector $d = (d_0,\cdots,d_{2^n - 1})$ where each $d_i$ is a binary string of length $m$, and assume $d$ is $s$-sparse.
     A \emph{promised sparse QROM} is a unitary oracle $O_d^{\mathrm{prom}}$ such that
     \begin{align*}
        O_d^{\mathrm{prom}}\ket{x}\ket{0^m} = \ket{x}\ket{d_x},\quad  \forall x \in \supp(d).
     \end{align*}
\end{definition}

We will also consider adaptive, channel-based versions of these models in the
Clifford+\T\ setting. Such an implementation may use mid-circuit
measurements and classical feedforward, and therefore induces a quantum
channel rather than a single unitary.

\begin{definition}[Adaptive QROM]
\label{def:adaptive-QROM}
 Suppose we have a data vector $(d_0,\cdots,d_{2^n - 1})$ where each $d_i$ is a binary string of length $m$.
Let
$$
\mathcal{H}
=
\spn\set{\ket{x}\ket{0^m}\ket{0^a}:x\in\set{0,1}^n}.
$$
An adaptive Clifford\,$+$\,\T\ circuit \(\mathcal{A}_d\) induces a channel
\(\mathcal{E}_d\) on these registers. An adaptive QROM implementation
with error \(\varepsilon\) for \(d\) is such a circuit \(\mathcal{A}_d\) satisfying
$$
\left\|
\left.\mathcal{E}_d\right|_{\mathcal{H}}
-
\Phi_d
\right\|_\diamond
\le \varepsilon,
$$
where $\Phi_d(\rho)=O_d\rho O_d^\dagger$ for the ideal QROM unitary $O_d$ satisfying
$$
O_d\ket{x}\ket{0^m}\ket{0^a}
=
\ket{x}\ket{d_x}\ket{0^a},
\quad \forall x\in\{0,1\}^n,
$$ 
Its expected \T\ count is \(\mathcal{T}_{\mathcal{H}}(\mathcal{A}_d)\).
If $|\supp(d)|\le s$, we call \(\mathcal{A}_d\) an adaptive
\(s\)-sparse QROM.
\end{definition}

\begin{remark}
When \(\varepsilon=0\), the above condition becomes
\(\left.\mathcal{E}_d\right|_{\mathcal{H}}=\Phi_d\). 
Thus the adaptive implementation  recovers exact implementation of \(O_d\) on the relevant
inputs.
\end{remark}

\begin{definition}[Adaptive promised sparse QROM]
\label{def:adaptive-promised-sparse-QROM}
 Suppose we have a data vector $(d_0,\cdots,d_{2^n - 1})$ where each $d_i$ is a binary string of length $m$. 
 Let $S=\supp(d)$. Let
$$
\mathcal{H}_S
=
\spn\set{\ket{x}\ket{0^m}\ket{0^a}:x\in S}.
$$
An adaptive Clifford\,$+$\,\T\ circuit \(\mathcal{A}_d\) induces a channel
\(\mathcal{E}_d\) on these registers. An adaptive promised sparse QROM
implementation with error \(\varepsilon\) for \(d\) is such a circuit
\(\mathcal{A}_d\) satisfying
$$
\left\|
\left.\mathcal{E}_d\right|_{\mathcal{H}_S}
-
\Phi_d^{\mathrm{prom}}
\right\|_\diamond
\le \varepsilon,
$$
where \(\Phi_d^{\mathrm{prom}}(\rho)=O_d^{\mathrm{prom}}\rho\br{O_d^{\mathrm{prom}}}^\dagger\)
for the ideal promised sparse QROM unitary \(O_d^{\mathrm{prom}}\) satisfying
$$
O_d^{\mathrm{prom}}\ket{x}\ket{0^m}\ket{0^a}
=
\ket{x}\ket{d_x}\ket{0^a},
\quad \forall x\in S.
$$
Its expected \T\ count is \(\mathcal{T}_{\mathcal{H}_S}(\mathcal{A}_d)\).
\end{definition}

\subsection{Counting lemmas for lower bounds}
The following counting lemmas will be used in the counting lower
bound arguments. 
\begin{lemma}
\label{lem:greedy-independent-set}
Let \(G\) be a finite graph with \(M\) vertices and maximum degree at most
\(D\). Then \(G\) has an independent set of size at least \(M/(D+1)\).
\end{lemma}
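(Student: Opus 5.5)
The natural approach is a greedy construction that builds an independent set one vertex at a time. We maintain a set \(I\), initially empty, and a set \(R\) of remaining vertices, initially all of \(V(G)\). While \(R\) is nonempty, we pick an arbitrary vertex \(v\in R\), add it to \(I\), and delete from \(R\) both \(v\) and all of its neighbors. We then verify two things: that the final \(I\) is independent, and that it has size at least \(M/(D+1)\).

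For independence, note that when a vertex \(v\) is added to \(I\), every neighbor of \(v\) is removed from \(R\) at the same moment. No later-chosen vertex can therefore be adjacent to \(v\). Moreover, any vertex chosen earlier that was adjacent to \(v\) would already have removed \(v\) from \(R\), so \(v\) could not have been picked. Hence no two vertices of \(I\) are adjacent.

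For the size bound, each iteration removes \(v\) together with at most \(\deg(v)\le D\) neighbors, so at most \(D+1\) vertices leave \(R\) per step. Since all \(M\) vertices must eventually be removed, the number of iterations, which equals \(|I|\), is at least \(M/(D+1)\). If an integer bound is wanted, this gives \(|I|\ge\lceil M/(D+1)\rceil\).

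No step here is a real obstacle; the only point requiring care is to count removals rather than additions, since a step may remove fewer than \(D+1\) vertices if some neighbors were already deleted, and this only helps the bound. An alternative I would mention in passing is to take a proper \((D+1)\)-coloring, obtained by greedy coloring, and choose the largest color class. By pigeonhole this class has size at least \(M/(D+1)\), and every color class is independent.
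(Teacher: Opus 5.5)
Your proof is correct and is essentially the paper's argument: repeatedly pick a remaining vertex, delete it together with its at most \(D\) neighbors, and observe that each step removes at most \(D+1\) vertices, so at least \(M/(D+1)\) vertices get chosen. The greedy-coloring alternative you mention is also valid, but it is not needed here.
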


\begin{proof}
Repeatedly choose one remaining vertex and delete it together with all of its
neighbors. Each step deletes at most \(D+1\) vertices, and the chosen vertices
form an independent set.
\end{proof}

For a finite set $S$ and functions $c,c': S \to \set{0,1}^m$, define their Hamming
distance by
\[
    d_H(c,c') = \left|\{t\in S:c(t)\neq c'(t)\}\right| .
\]

\begin{lemma}
\label{lem:alphabet-packing}
For every finite set \(S\) of size \(s\) and every integer \(m\ge 3\), there
exists a family \(\mathcal{C}\) of functions
\(S\to\{0,1\}^m\setminus\{0^m\}\) with minimum Hamming distance at least
\(s/2\), that is,
\[
    d_H(c,c')\ge \frac{s}{2},
    \quad
    \forall c\neq c'\in\mathcal{C},
\]
and
\[
    \log_2|\mathcal{C}|=\Omega(sm).
\]
\end{lemma}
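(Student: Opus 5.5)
We prove the statement with a Gilbert--Varshamov-type greedy argument, using \cref{lem:greedy-independent-set}. Let \(q=2^m-1\ge 7\) be the size of the alphabet \(\Sigma=\{0,1\}^m\setminus\{0^m\}\), and consider the set \(\Sigma^S\) of all functions \(S\to\Sigma\). This set has \(M=q^s\) elements. Build the graph \(G\) on \(\Sigma^S\) in which \(c\) and \(c'\) are adjacent iff \(0<d_H(c,c')<s/2\). Any independent set of \(G\) is a family with minimum Hamming distance at least \(s/2\). By \cref{lem:greedy-independent-set}, it therefore suffices to bound the maximum degree \(D\) and check that \(\log_2\bigl(M/(D+1)\bigr)=\Omega(sm)\).

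The degree of any vertex \(c\) is at most the size of a Hamming ball of radius \(r=\lceil s/2\rceil-1<s/2\). A function \(c'\) within distance \(r\) of \(c\) is determined by the set of at most \(r\) disagreement positions together with new values there. Hence
\[
    D+1\le \sum_{j=0}^{r}\binom{s}{j}(q-1)^j\le 2^s q^{s/2}.
\]
This gives
\[
    \log_2\frac{M}{D+1}\ge s\log_2 q-s-\frac{s}{2}\log_2 q=\frac{s}{2}\log_2 q-s .
\]

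It remains to check that \(\frac{s}{2}\log_2 q-s=\Omega(sm)\). Since \(m\ge 3\), we have \(\log_2 q=\log_2(2^m-1)\ge m-1\). Therefore
\[
    \frac{s}{2}\log_2 q-s\ge \frac{s(m-3)}{2}.
\]
This is \(\Omega(sm)\) once \(m\ge 4\), say. For \(m=3\) the bound degenerates, since \(q=7\) gives \(\tfrac12\log_2 7<1\).

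The main obstacle is exactly this small-alphabet case, where the crude factor \(2^s\) for choosing positions is too lossy. The fix is to use the sharper estimate \(\sum_{j\le s/2}\binom{s}{j}\le 2^{sH(\alpha)}\) with radius fraction \(\alpha\) strictly below \(1/2\). For instance, require distance at least \(s/2\) but bound the ball of radius \(<s/2\) via the \(q\)-ary volume bound \(q^{sH_q(1/2)}\). Here \(H_q(1/2)<1\) for \(q\ge 3\), so \(1-H_q(1/2)\) is a positive constant bounded away from \(0\) uniformly in \(q\ge 7\). That yields \(\log_2|\mathcal C|\ge s(1-H_q(1/2))\log_2 q=\Omega(sm)\) for all \(m\ge 3\).

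Concretely, I would verify \(H_7(1/2)=\tfrac12\log_7 6+\log_7 2<1\), which is \(\approx 0.816\). I would also note that \(H_q(1/2)\le \tfrac12+\log_q 2\), which makes the constant uniform in \(q\). Edge cases, such as \(s/2\) being non-integer or \(s\) small, only affect constants and are absorbed into the \(\Omega\).
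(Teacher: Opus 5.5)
Your argument reaches the right conclusion, and your first two paragraphs are exactly the paper's proof: a greedy independent set in the graph joining functions at Hamming distance $<s/2$, with each ball bounded by $2^s q^{s/2}$. Those two paragraphs already finish the job for every $m\ge 3$; the paper simply notes that $\frac12\log_2(2^m-1)-1=\Omega(m)$ for $m\ge 3$.

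Your claim that $q=7$ gives $\frac12\log_2 7<1$ is an arithmetic slip. Since $\log_2 7\approx 2.81$, we have $\frac12\log_2 7-1\approx 0.40>0$, so at $m=3$ the greedy family already has size at least $(\sqrt7/2)^s\approx 2^{0.40s}$. The apparent degeneration came only from your lossy estimate $\log_2 q\ge m-1$, which discards exactly $\log_2(7/4)$ at $m=3$.

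Your repair also rests on the same fact. The uniform bound you invoke at the end, $H_q(1/2)\le\frac12+\log_q 2$, gives $s\,(1-H_q(1/2))\log_2 q\ge \frac{s}{2}\log_2 q-s$, which is the very quantity you declared degenerate. So the positivity of your repaired constant depends on the same inequality $\sqrt{q}/2>1$ (equivalently $\log_2 q>2$) that the crude bound needs.

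The $q$-ary volume bound $\sum_{j\le ps}\binom{s}{j}(q-1)^j\le q^{sH_q(p)}$ for $p\le 1-1/q$ is valid and slightly sharper. It would even handle $m=2$, where the crude bound genuinely fails because $\frac12\log_2 3<1$. It is not needed here, though: the hypothesis $m\ge 3$ is exactly what makes $\sqrt{q}/2>1$, so you can drop the entropy detour.
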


\begin{proof}
Let \(q=2^m-1\), and consider the graph whose vertices are all functions
\(c:S\to\{0,1\}^m\setminus\{0^m\}\). We put an edge
between two functions if they differ in fewer than \(s/2\) coordinates. For a
fixed \(c\), the number of vertices at Hamming distance at most \(s/2\) from
\(c\), including \(c\), is at most
\[
    \sum_{j=0}^{\lfloor s/2\rfloor}\binom{s}{j}q^j
    \le
    2^s q^{s/2}.
\]
The graph has \(q^s\) vertices. By \cref{lem:greedy-independent-set}, it has
an independent set \(\mathcal{C}\) of size at least
\[
    \frac{q^s}{2^s q^{s/2}}
    =
    \left(\frac{\sqrt q}{2}\right)^s.
\]
For \(m\ge3\), we have
\(\frac12\log_2(2^m-1)-1=\Omega(m)\), so this is
\(2^{\Omega(sm)}\). By construction, any two
distinct functions in \(\mathcal{C}\) differ in at least \(s/2\) coordinates.
\end{proof}

\begin{fact}
\label{fact:binomial-estimate}
For integers \(1\le r\le N\),
\[
    \log\binom Nr = r\log(N/r)+O(r).
\]
\end{fact}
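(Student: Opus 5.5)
The plan is to derive both directions from the standard entropy estimate for binomial coefficients, $\log\binom{N}{r}=N\,H(r/N)$ up to lower-order corrections, where $H$ is the binary entropy function. Alternatively, one can use the elementary sandwich
\[
    \left(\frac{N}{r}\right)^r\le\binom{N}{r}\le\left(\frac{eN}{r}\right)^r ,
\]
which avoids Stirling's formula entirely. We fix the logarithm base to be $2$; any other base only changes the constant hidden in $O(r)$.

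For the lower bound, we write $\binom{N}{r}=\prod_{i=0}^{r-1}\frac{N-i}{r-i}$. Each factor satisfies $\frac{N-i}{r-i}\ge\frac{N}{r}$, since this is equivalent to $r(N-i)\ge N(r-i)$, i.e.\ $Ni\ge ri$, which holds because $N\ge r$. Taking logarithms gives
\[
    \log\binom{N}{r}\ge r\log(N/r).
\]

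For the upper bound, we use $\binom{N}{r}\le N^r/r!$ together with $r!\ge(r/e)^r$. The latter follows from $e^r=\sum_k r^k/k!\ge r^r/r!$. Combining these,
\[
    \log\binom{N}{r}\le r\log(N/r)+r\log e .
\]

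Together the two bounds give $\log\binom{N}{r}=r\log(N/r)+\theta$ with $0\le\theta\le r\log_2 e$, which is the claimed $r\log(N/r)+O(r)$ with an explicit constant. Nothing here is a genuine obstacle. The only points to check are the edge case $r=N$, where both sides are $0$ up to $O(r)$, and the fact that the $O(r)$ term is in fact nonnegative. This nonnegativity is what the later counting arguments will use when they lower bound the number of supports.
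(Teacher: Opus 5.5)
Your proof is correct and follows the same approach as the paper. Both use the sandwich $(N/r)^r\le\binom{N}{r}\le N^r/r!\le(eN/r)^r$, with the lower bound coming from the factor-wise inequality $(N-j)/(r-j)\ge N/r$ and the upper bound from $r!\ge(r/e)^r$.
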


\begin{proof}
Note that
\[
    \left(\frac Nr\right)^r \leq   \prod_{j=0}^{r-1}\frac{N-j}{r-j} = \binom Nr  \le
    \frac{N^r}{r!}
    \le
    \left(\frac{eN}{r}\right)^r
\]
where we use the inequalities \((N-j)/(r-j)\ge N/r\) for $0 \leq j < r$ and \(r!\ge (r/e)^r\). Taking logarithms gives that \[
    r\log(N/r)
    \le
    \log\binom Nr
    \le
    r\log(eN/r).
\] 
And the final estimate follows immediately.
\end{proof}

\begin{lemma}
\label{lem:sparse-support-packing}
Let \(0<\delta<1\) be a constant. There exists a constant \(C>0\) such that
the following holds for all sufficiently large \(N\). Let \(s\) be an integer
satisfying \(C\log N\le s\le N^{1-\delta}\).
There exists a family \(\mathcal{F}\) of \(s\)-element subsets of $[N]$ such that
\[
    |S\cap S'|\le \frac{s}{2},
    \quad
    \forall S\neq S'\in\mathcal{F},
\]
and
\[
    \log_2|\mathcal{F}|
    =
    \Omega\!\left(s\log(N/s)\right).
\]
The hidden constant may depend on \(\delta\).
\end{lemma}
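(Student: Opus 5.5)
We prove the statement with a greedy packing argument, the same way as in \cref{lem:alphabet-packing}. Form the graph whose vertices are all $s$-element subsets of $[N]$. Join two subsets $S\neq S'$ by an edge when $|S\cap S'|>s/2$. The graph has $M=\binom Ns$ vertices. By \cref{lem:greedy-independent-set} it has an independent set $\mathcal F$ with $|\mathcal F|\ge M/(D+1)$, where $D$ is the maximum degree. An independent set is exactly a family in which any two members meet in at most $s/2$ points. So it suffices to show
\[
\log_2\binom Ns-\log_2(D+1)=\Omega(s\log(N/s)).
\]

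Next we bound the degree. Fix $S$. Every neighbour $S'$ is obtained by choosing $j>s/2$ elements of $S$ and $s-j$ elements of $[N]\setminus S$. Hence
\[
D+1\le\sum_{j\ge \lceil s/2\rceil}\binom sj\binom{N-s}{s-j}\le 2^s\binom{N}{\lfloor s/2\rfloor}\cdot s,
\]
where we use $\binom{N-s}{s-j}\le\binom N{\lfloor s/2\rfloor}$ for $s-j\le s/2\le N/2$. By \cref{fact:binomial-estimate},
\[
\log\binom Ns\ \ge\ s\log(N/s),
\qquad
\log(D+1)\ \le\ \tfrac s2\log(2N/s)+O(s)+\log s\ =\ \tfrac s2\log(N/s)+O(s).
\]
The difference is therefore at least $\tfrac s2\log(N/s)-O(s)$.

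The main obstacle is to make $\tfrac s2\log(N/s)$ dominate the $O(s)$ error. The upper bound $s\le N^{1-\delta}$ handles this: it gives $\log(N/s)\ge\delta\log N$, which tends to infinity. So for all sufficiently large $N$ the $O(s)$ term is at most $\tfrac s4\log(N/s)$, and then
\[
\log_2|\mathcal F|\ge\tfrac s4\log(N/s)=\Omega(s\log(N/s)),
\]
with the hidden constant depending on $\delta$ only through the threshold on $N$. The hypothesis $s\ge C\log N$ is not needed for this counting. It only ensures that $s/2$ is a meaningful threshold (with $s\ge 2$) and keeps the $\log s$ term negligible, so the constant $C$ can be taken to be any fixed value such as $C=1$.
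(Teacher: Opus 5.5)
Your proof is correct and follows essentially the same route as the paper: build the graph on $s$-subsets with edges for $|S\cap S'|>s/2$, apply the greedy independent-set lemma, bound the degree by $2^{O(s)}(N/s)^{s/2}$, and absorb the $O(s)$ term using $\log(N/s)\ge\delta\log N$. The differences are cosmetic: you index neighbours by the number of kept elements rather than swapped ones, and you carry a harmless extra factor $s$. Your observation that the hypothesis $s\ge C\log N$ is never used also holds for the paper's proof, which needs a lower bound on $s$ only later, to dominate the $O(n^2)$ correction in the lower-bound theorems.
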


\begin{proof}

Consider the graph \(G\) whose vertices are the \(s\)-element subsets of
\([N]\). We put an edge between two distinct subsets \(S\) and \(S'\) when
\(|S\cap S'|>s/2\). An independent set in this graph is exactly the desired
family.

We first bound the maximum degree of \(G\). Fix a subset \(S\). If
\(|S\cap S'|=s-j\), then \(S'\) is obtained by removing \(j\) elements from
\(S\) and adding \(j\) elements from $S^c$. Thus, the number of
subsets \(S'\) with \(|S\cap S'|>s/2\), including \(S'=S\), is at most
\begin{align*}
 D+1 \le \sum_{j=0}^{\lfloor s/2\rfloor}
    \binom{s}{j}\binom{N-s}{j}
    \le
    2^s(s+1)\binom{N-s}{\lfloor s/2\rfloor}
    \le
    2^{O(s)}\left(\frac{N}{s}\right)^{s/2}.
\end{align*}
The second inequality follows from \(s\le N^{1-\delta}\le N/2\) for all
sufficiently large \(N\), which implies \(N-s\geq s\), so
\(\binom{N-s}{j}\) are increasing for \(0\le j\le s/2\).

By \cref{lem:greedy-independent-set}, \(G\) has an independent set
\(\mathcal{F}\) satisfying
\[
    \log |\mathcal{F}|
    \ge
    \log \binom{N}{s} - \log(D+1)
    \ge
    \log \binom{N}{s} - \frac{s}{2}\log(N/s)-O(s).
\]
By \cref{fact:binomial-estimate},
\[
    \log\binom{N}{s}=s\log(N/s)+O(s).
\]
Substituting this into the last display gives
\[
    |\mathcal{F}| \ge
    \exp\!\left(\frac{s}{2}\log(N/s)-O(s)\right).
\]
Since \(s\le N^{1-\delta}\), we have
\(\log(N/s)\ge \delta\log N\). For sufficiently large \(N\), the \(O(s)\)
term is absorbed into \(\frac{s}{2}\log(N/s)\), so
\(|\mathcal{F}|=2^{\Omega(s\log(N/s))}\).

\end{proof}
For sets \(A\) and \(B\), write
\[
    A\triangle B=(A\setminus B)\cup(B\setminus A)
\]
for their symmetric difference.
\begin{lemma}
\label{lem:sparse-matrix-support-packing}
There exist constants \(C,\gamma>0\)
such that the following holds for all sufficiently large \(n\). Let
\(N=2^n\), and let \(s\) be an integer satisfying
\( Cn\le s\le N\).
Then there exists a family \(\mathcal{P}\) of subsets of \([N]\times[N]\)
such that every \(P\in\mathcal{P}\) contains \(\lfloor s/2\rfloor\)
positions in each row and at most \(s\) positions in each column,
\[
    |P\triangle P'|\ge\gamma Ns,
    \quad
    \forall P\neq P'\in\mathcal{P},
\]
and
\[
    \log|\mathcal{P}|=\Omega(Ns\log(N/s)).
\]
\end{lemma}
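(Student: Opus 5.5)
Fix \(k=\lfloor s/2\rfloor\) and build each \(P\) row by row from structured sets, so that the column constraint holds automatically. The obstacle to fully random rows is column loads. Partition \([N]\) into \(B=\lfloor N/s\rfloor\) blocks (or a similar number) of size about \(s\), and let row \(i\) pick one block \(b_i\in[B]\). Inside that block it chooses a \(k\)-subset. Controlling column counts through block choices alone is still awkward, so I will use a cleaner device. Split the rows into \(N/s\) groups of \(s\) rows each, assuming divisibility up to rounding. Fix a bijection between groups and blocks, and let every row in group \(g\) choose any \(k\)-subset of block \(g\). Every column lies in exactly one block, and that block is used only by the \(s\) rows of its group, so each column contains at most \(s\) positions. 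Each row contains exactly \(k\) positions.

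The trouble is that this only gives \(\log(\#\text{choices})\approx N\cdot s\), with no \(\log(N/s)\) factor. To recover that factor I will let each row choose its positions more freely. Fix a family \(\mathcal{F}\) of \(k\)-subsets of \([N]\) with pairwise intersections at most \(k/2\) and \(\log|\mathcal{F}|=\Omega(s\log(N/s))\), supplied by \cref{lem:sparse-support-packing} when \(s\le N^{1-\delta}\) with \(k\ge C\log N\) guaranteed by \(s\ge Cn\). In the regime \(s>N^{1-\delta}\) the target is only \(\Omega(Ns\log(N/s))\), and the block construction above combined with \cref{lem:alphabet-packing}-style packing suffices; I would treat this as a separate easy case.

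Column loads are the main obstacle, and I will handle them by a random argument followed by the greedy independent set. Take \(M\) independent uniformly random maps \(i\mapsto P_i\in\mathcal{F}'\). Here \(\mathcal{F}'\) is the set of \(k\)-subsets of \([N]\), and I define the candidate set \(\mathcal{G}\) to be all assignments of a \(k\)-subset to each row such that each column receives at most \(s\) positions. Roughly, I count \(|\mathcal{G}|\) from below: the expected column load is \(Nk/N=k\le s/2\), so by Chernoff plus a union bound over columns, a uniformly random assignment is valid with probability \(2^{-O(N)}\). Even crudely, a product-of-independent-rows model with a Bernoulli-type load bound gives validity probability at least \(e^{-O(N)}\). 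Hence
\[
\log|\mathcal{G}|\ge N\log\binom{N}{k}-O(N)=\Omega(Ns\log(N/s)),
\]
and here \(s\ge Cn\) absorbs the \(O(N)\) term. If that step is hard, the fallback is to use the block-group device together with a random permutation of columns inside each group of blocks.

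Next I define a graph on \(\mathcal{G}\) with an edge between two assignments whenever \(|P\triangle P'|<\gamma Ns\), and bound its degree. Given \(P\), any \(P'\) within distance \(\gamma Ns\) is determined by choosing at most \(\gamma Ns\) positions of \(P\) to remove and at most \(\gamma Ns\) new positions to add, giving
\[
\binom{Nk}{\le\gamma Ns}\binom{N^2}{\le\gamma Ns}\le 2^{O(\gamma Ns\log(N/(\gamma s)))}.
\]
By \cref{fact:binomial-estimate}, \(\log\binom{N^2}{\gamma Ns}=\gamma Ns\log(N/(\gamma s))+O(\gamma Ns)\). For small constant \(\gamma\) this is at most half of \(\log|\mathcal{G}|\), provided \(\log(N/s)\) is not tiny. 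When \(N/s=O(1)\), the bound \(\Omega(Ns\log(N/s))\) is \(O(Ns)\) and is again handled by direct counting with the \(O(\cdot)\) slack. Finally, \cref{lem:greedy-independent-set} gives the required \(\mathcal{P}\). The hardest part is the lower bound on \(|\mathcal{G}|\), that is, showing the column constraint costs only \(2^{-O(N)}\) fraction, which I would prove via Chernoff bounds per column and a union bound (or positive correlation/Harris inequality).
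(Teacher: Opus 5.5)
Your core argument is exactly the paper's proof: give each row an independent uniformly random \(\lfloor s/2\rfloor\)-subset, bound every column load by a Chernoff bound plus a union bound over the \(N\) columns, bound the degree of the graph joining supports at symmetric-difference distance below \(\gamma Ns\) by choosing removed and added positions, and finish with \cref{lem:greedy-independent-set}; note that the Chernoff--union step already shows at least half of the \(\binom{N}{\lfloor s/2\rfloor}^N\) assignments are admissible once \(s\ge Cn\), which is much better than the \(2^{-O(N)}\) you aim for. The detours through \cref{lem:sparse-support-packing}, the block construction and the regime split are unnecessary (the block construction alone would not suffice when \(\log(N/s)\) is unbounded), and the case \(N/s=O(1)\) needs no separate treatment: since \(\log\binom{N}{\lfloor s/2\rfloor}\ge\lfloor s/2\rfloor\log(2N/s)\) and \(\log(2N/s)\ge 1\), the degree exponent \(\gamma Ns\cdot O(\log(2N/s)+\log(1/\gamma))\) is absorbed uniformly for small \(\gamma\), as in the paper.
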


\begin{proof}
Let $r = \lfloor s/2\rfloor$.
Choose independently and uniformly \(r\) positions in every row. For a fixed
column, its number of selected positions has distribution
\(\operatorname{Bin}(N,r/N)\), so a Chernoff bound and a union bound show that
the probability that some column contains more than \(s\) positions is at
most \(Ne^{-r/3}\). Since \(s\ge Cn\), this is at most \(1/2\) for sufficiently
large \(C\). Thus, at least half of the \(\binom Nr^N\) choices satisfy both
sparsity conditions. Moreover, by \cref{fact:binomial-estimate},
\[
    \log\binom Nr
    =
    \Omega(s\log(2N/s)).
\]
Here we use \(r=\lfloor s/2\rfloor=\Omega(s)\), \(r\le s/2\), and
\(\log(N/r)=\Theta(\log(2N/s))\) for \(s\le N\).
Hence the logarithm of the number of admissible supports is
\(\Omega(Ns\log(2N/s))\).

It remains to impose pairwise separation. Put a graph on the admissible
supports, joining two supports whose symmetric-difference distance is at most
\(\gamma Ns\). For a fixed support, the number of admissible supports within
this distance is at most
\[
    D
    :=
    \sum_{j\le\gamma Ns}\binom{Nr}{j}\binom{N^2}{j}
    \le
    2^{O(\gamma Ns\log(2N/s)+\gamma Ns\log(1/\gamma))}.
\]
The first binomial chooses positions removed from a support and the
second chooses positions added to it. To get the displayed bound, set
\(M=\gamma Ns\) and use
\[
    \sum_{j\le M}\binom{R}{j}\le (M+1)(eR/M)^M,
    \qquad M\le R/2,
\]
with \(R=Nr\) and \(R=N^2\). The factor \(M+1\) is absorbed into the
exponent above. 

By \cref{lem:greedy-independent-set}, the graph has an independent set of size
\[
    |\mathcal{P}|
    \ge
    \frac{2^{\Omega(Ns\log(2N/s))}}{D+1}
    =
    2^{\Omega(Ns\log(2N/s))}
\]
where we choose \(\gamma\) sufficiently small 
so that the factor  $D+1$ is absorbed into the implicit constant in the exponent. 
Since \(\log(2N/s)\ge\log(N/s)\), this implies the claimed bound.
\end{proof}

\section{Sparse QROM}

In this section, we prove \cref{thm:Algorithm-promised-sparse-QROM} and \cref{thm:Algorithm-sparse-QROM}.

\subsection{From dense QROM to promised sparse QROM}

\begin{restatable}{theorem}{promisedQROM}\label{thm:Algorithm-promised-sparse-QROM}
    Given a set $S \subseteq \{0,1\}^n$ of size $s$, suppose $d$ is an $s$-sparse data vector with support $S$ and message length $m$, and let $\delta>0$. Then, with failure probability at most $\delta$, we can implement the promised sparse QROM oracle $O_d^{\mathrm{prom}}$ with \T\ count
    $$  O\br{\sqrt{sm} + m \log s} $$ 
    and 
    using $ O\br{\sqrt{sm} + m + \log s}$ ancilla qubits.
    If $s \geq m \log^2 m$, this becomes 
    $   O\br{\sqrt{sm}}. $
    Moreover, there exists a classical processing algorithm for promised sparse QROM
    that succeeds with probability at least $1-\delta$ and runs in time
    $$  O\br{s n \log s\log\br{\frac{\log s}{\delta}}}. $$
\end{restatable}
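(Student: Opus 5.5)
The plan is to build the multilevel hashing scheme sketched in the introduction. Set $s_0=s$ and $S_0=S$. At level $i$ choose a uniformly random linear map $h_i:\{0,1\}^n\to\{0,1\}^{k_i}$ over $\mathbb{F}_2$ with $2^{k_i}=\Theta(s_i)$, for example $2^{k_i}\ge 2s_i$. Call $x\in S_i$ \emph{resolved at level $i$} if no other $y\in S_i$ has $h_i(y)=h_i(x)$. Pairwise independence of random linear hashing gives, for fixed $x\ne y$, $\Pr[h_i(x)=h_i(y)]=2^{-k_i}$. Hence the expected number of unresolved elements is at most $s_i(s_i-1)2^{-k_i}\le s_i/2$, and by Markov, with constant probability at most $3s_i/4$ elements remain. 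The classical algorithm resamples $h_i$ until this holds. Using $O(\log(\log s/\delta))$ trials per level, each costing $O(s_i n k_i)$ or so to evaluate and bucket, and a union bound over the $O(\log s)$ levels, we get failure probability $\delta$. The total time is $O(sn\log s\log(\log s/\delta))$, since $\sum_i s_i=O(s)$ and each evaluation costs $O(n\log s)$. We stop once $s_L=O(1)$, or more conveniently once $s_L\le$ some threshold; then $L=O(\log s)$.

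Coherently, the circuit keeps a flag qubit $f$ marking ``already resolved'', initially $0$. At level $i$ it computes $h_i(x)$ into a $k_i$-bit register using CNOTs only (no \T\ gates). It then applies a dense QROM (\Cref{thm:LB-dense-qrom}) on address $h_i(x)$, controlled on $f=0$. For each singleton bucket $b=h_i(x)$ this QROM stores $(1,d_x)$: it XORs $d_x$ into the output register and sets a bit $r$. For non-singleton buckets it stores $0$. Two points make this correct on every $x\in S$. First, if $x$ is unresolved at this level but its bucket is a singleton, then the unique element of that bucket is $x$ itself, because every $x\in S_i$ lies in $h_i(S_i)$. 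Second, elements already resolved are in $S\setminus S_i$, possibly colliding with singleton buckets, and they are protected by the control on $f$. We then swap or OR $r$ into $f$ (a Toffoli-free step, since $r=1$ only when $f=0$, so a CNOT works) and uncompute $h_i(x)$ with CNOTs. The data is written at most once, so the output is exactly $d_x$. At the end, $f=1$ for all $x\in S$. We uncompute $f$ by rerunning the levels with QROMs that store only the flag bit, which costs $O(\sqrt{s_i})$ \T\ gates per level; alternatively, we clear $f$ per level by a reverse pass. A controlled dense QROM costs the same asymptotically as an uncontrolled one, since the control can be folded in as an extra address bit.

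For the cost, level $i$ uses $O(\sqrt{2^{k_i}(m+1)}+m)=O(\sqrt{s_im}+m)$ \T\ gates. Since $s_i\le(3/4)^is$, the square-root terms sum geometrically to $O(\sqrt{sm})$, and the additive $m$ terms contribute $O(mL)=O(m\log s)$. Ancillas are $O(\sqrt{sm}+m)$ for the largest dense QROM, reused across levels, plus $O(\log s)$ for hash registers and flags. If $s\ge m\log^2 m$, then $m\log s\lesssim\sqrt{sm}$ as long as $\log s\le\sqrt{s/m}$. This holds because $s/m\ge\log^2 m$, and when $s\gg m$ we have $\log s=O(\log(s/m)+\log m)$, which is at most $O(\sqrt{s/m})$. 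To be safe we can also truncate the recursion: once $s_i\le m$, finish with one dense QROM on the hash value, which removes the $\log$ factor from the $m$ terms.

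The main obstacle is the coherent bookkeeping of the resolved flag. Resolved elements of earlier levels collide arbitrarily with singleton buckets of later levels, so the control on $f$ must be exact. The final cleanup of $f$ and $r$ also must not spoil the \T\ budget or leave garbage entangled with $x$. I would handle this first, by storing flag bits inside the QROM payload and uncomputing them with a mirrored pass whose cost is $\sum_i O(\sqrt{s_i})$.
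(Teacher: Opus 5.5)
Your proposal is sound, and its core is the paper's construction. It uses the same random linear hashes with $2^{k_i}\ge 2s_i$ and the same expectation-plus-Markov step giving $|S_{i+1}|\le 3s_i/4$ with probability at least $1/3$. It also folds the control bit into the dense-QROM address, obtains the geometric sum $\sum_i O(\sqrt{s_im}+m)=O(\sqrt{sm}+m\log s)$, and analyzes preprocessing by the same resampling and union bound.

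The real difference is the coherent cleanup. The paper XORs $d_x$ into a temporary register $T$, using per-level active flags $A_i$ and singleton flags $C_i$. It then copies $T$ into the output and reverses every level. This makes uncomputation automatic, but it doubles the cost and needs $m$ extra qubits.

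You write $d_x$ directly into the output, which is legitimate because it is written exactly once for $x\in S$. You then undo only the flags with a mirrored pass of one-bit-payload QROMs costing $\sum_i O(\sqrt{s_i})$. This saves the register $T$ and roughly half the \T\ count.

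For this to work, three things must be made explicit:
\begin{itemize}
\item You must keep a separate resolution bit $r_i$ for each level, the analogue of $C_i$.
\item You must combine $r_i$ into $f$ with a CNOT, not a swap. A swap would reset $f$ to $0$ on inputs resolved earlier, since there the QROM is inactive and $r_i=0$.
\item You must clean up in reverse level order: CNOT $r_i$ into $f$, then let the flag-only QROM controlled on $f=0$ XOR $c_i(x)$ into $r_i$. Forward order fails, because earlier-resolved inputs get $f=0$ too soon and pick up spurious singleton flags.
\end{itemize}
A single shared $r$ would stay set after the resolving level, so the next level's CNOT would flip $f$ back to $0$. It also cannot be cheaply reset in place. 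From $h_i(x)$ and $f$ you cannot tell whether $x$ was resolved at level $i$, or was resolved earlier and merely hashes into a singleton bucket of $S_i$.

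Finally, your ``truncate the recursion'' aside is unnecessary, since your direct estimate already gives $m\log s=O(\sqrt{sm})$ when $s\ge m\log^2 m$. As stated it would also fail. One dense QROM needs a hash that is injective on $S_i$, and a linear map into $\Theta(s_i)$ buckets generally is not.
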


The construction uses multilevel hashing to reduce promised sparse QROM to
dense QROM. At level \(i\), we choose a linear hash for which at least one
quarter of the elements of \(S_i\) lie in singleton buckets. These elements
are resolved at that level, while the remaining elements form \(S_{i+1}\).
Thus \(|S_{i+1}|\le 3|S_i|/4\), and the procedure terminates after
\(O(\log s)\) levels.

We present the algorithm in two stages. Let $S_i$ denote the set of inputs that remain unresolved when the computation reaches level $i$ and let $L$ be such that $S_L=\emptyset$. The first stage is a level-$i$ subroutine, written as \cref{alg:promised-qrom-level}, which processes a single hash level.
The second stage is the overall promised sparse QROM, written as \cref{alg:promised-qrom}, which chains these level subroutines across all levels and then uncomputes the auxiliary registers.

For each level $i$, define $D_i(h_i(x))=(c_i(x),d_i(x))$,
where $c_i(x)\in\{0,1\}$ and $d_i(x)\in \{0,1\}^m$ are defined by
$$
c_i(x)=
\begin{cases}
1, & \text{if $x$ is the unique element of $S_i$ in the bucket } h_i(x),\\
0, & \text{otherwise,}
\end{cases}, \quad d_i(x)=
\begin{cases}
d_x, & \text{if } c_i(x)=1,\\
0^m, & \text{if } c_i(x)=0.
\end{cases}
$$
Thus, $D_i$ records exactly whether the current input is resolved at level $i$, and if it is, which data value should be written into the temporary answer register.
The algorithms below specify their action on the input states used by the
construction. Since every step is reversible, these actions extend to
unitaries on the full register space.

\LinesNumbered
\begin{algorithm}[H]
\caption{Level-$i$ subroutine for promised sparse QROM}
\label{alg:promised-qrom-level}

\DontPrintSemicolon
\KwInput{The level-$i$ table $D_i$.}
\KwRegister{\(I,T,A_i,A_{i+1},C_i,H\), assumed to be in
\(\ket{x}_I\ket{t}_T\ket{a_i}_{A_i}
\ket{0}_{A_{i+1}}\ket{0}_{C_i}\ket{0^{m_i}}_H\).}
\KwOutput{The level-$i$ unitary on registers \(I,T,A_i,A_{i+1},C_i,H\).}

\AlgoDisplayStep{Hash into $H$}{
\ket{x}_I\ket{0^{m_i}}_H
\mapsto
\ket{x}_I\ket{h_i(x)}_H .
}
\AlgoDisplayStep{Apply the dense QROM of \Cref{thm:LB-dense-qrom} for $D_i$ on $(A_i,H)$ with targets $(C_i,T)$\nllabel{lin:promised-level-dense-qrom}}{
\ket{a_i}_{A_i}\ket{h_i(x)}_H\ket{0}_{C_i}\ket{t}_T
\mapsto
\ket{a_i}_{A_i}\ket{h_i(x)}_H
\ket{a_i c_i(x)}_{C_i}
\ket{t\oplus a_i d_i(x)}_T .
}
\AlgoDisplayStep{Compute the next active flag}{
\ket{a_i}_{A_i}\ket{a_i c_i(x)}_{C_i}\ket{0}_{A_{i+1}}
\mapsto
\ket{a_i}_{A_i}\ket{a_i c_i(x)}_{C_i}
\ket{a_i \oplus a_i c_i(x)}_{A_{i+1}} .
}
\AlgoDisplayStep{Uncompute the hash register}{
\ket{x}_I\ket{h_i(x)}_H
\mapsto
\ket{x}_I\ket{0^{m_i}}_H .
}
\end{algorithm}
\LinesNotNumbered

\cref{alg:promised-qrom-level} computes one level of the construction.

\LinesNumbered
\begin{algorithm}[H]
\caption{Promised sparse QROM}
\label{alg:promised-qrom}
\DontPrintSemicolon
\KwInput{The tables $D_0,\dots,D_{L-1}$.}
\KwRegister{\(I,O,T,A_0,\ldots,A_L\), and \(C_i,H_i\) for
\(0\le i<L\), assumed to be in
\(\ket{x}_I\ket{0^m}_O\ket{0^m}_T\ket{1}_{A_0}\), with
\(A_1,\ldots,A_L,C_i,H_i\) in \(\ket0\).}
\KwOutput{The promised sparse QROM unitary on registers \(I\) and \(O\).}

\For{$i\gets 0$ \KwTo $L-1$}{
    Run \cref{alg:promised-qrom-level} for level $i$\;
}

\AlgoDisplayStep{Copy $T$ into $O$}{
\ket{t}_T\ket{0^m}_O
\mapsto
\ket{t}_T\ket{t}_O .
}

\For{$i\gets L-1$ \KwTo $0$}{
    Reverse \cref{alg:promised-qrom-level} for level $i$\;
}
\end{algorithm}
\LinesNotNumbered

We now explain why the above algorithms are correct. 

Fix a promised input $x\in S$. Since the sets $S=S_0\supseteq S_1\supseteq \cdots$ record the unresolved inputs, there is a unique level $i^\star$ such that $x\in S_{i^\star}\setminus S_{i^\star+1}$. 
For every level $i<i^\star$, the input $x$ is still active, but it does not lie in a singleton bucket, so the corresponding singleton flag is zero and nothing is written into the temporary register $T$. 
At level $i^\star$, the input $x$ lies in a singleton bucket, so the algorithm writes $d_x$ into $T$.
After level $i^\star$, the active flag becomes zero. Hence, all later levels are inactive and no further data are written into $T$. Therefore, after the forward pass of \cref{alg:promised-qrom}, the temporary register contains exactly the desired value $d_x$.

The algorithm then copies $T$ into the output register $O$ and runs all level subroutines in reverse order, which uncomputes all other auxiliary registers while leaving the copied value in $O$ unchanged. Hence, on every promised input $x\in S$, the final output is $\ket{x}\ket{d_x}$, so the construction indeed implements the promised sparse QROM.

To complete the proof of \cref{thm:Algorithm-promised-sparse-QROM}, it remains to show that a suitable sequence $\{S_i\}$ exists and to analyze the \T\ count of the above algorithm.

\begin{proof}
We now use the probabilistic method to establish the existence of the sets $\{S_i\}$. Let $S_0=S$, and write $s_i=|S_i|$. Suppose $S_i$ has been defined. If $s_i=0$, we stop. Otherwise, choose $m_i=\left\lceil \log (2s_i) \right\rceil$ and let $h_i:\{0,1\}^n\to\{0,1\}^{m_i}$ be a uniformly random linear hash function. Define
$$
S_{i+1}
=
\{x\in S_i:\exists y\in S_i\setminus\{x\}\text{ such that }h_i(x)=h_i(y)\}.
$$
Thus, $S_{i+1}$ is exactly the set of elements of $S_i$ that lie in non-singleton buckets.

For every fixed $x\in S_i$, we have
$$
\Pr_{h_i}[x\in S_{i+1}]
\leq
\sum_{y\in S_i\setminus\{x\}}\Pr_{h_i}[h_i(x)=h_i(y)].
$$
Since $h_i$ is a uniformly random linear map, for every $y\neq x$,
$$
\Pr_{h_i}[h_i(x)=h_i(y)]
=
2^{-m_i}.
$$
Therefore,
$$
\mathbb{E}_{h_i}[|S_{i+1}|]
=
\sum_{x\in S_i}\Pr_{h_i}[x\in S_{i+1}]
\leq
s_i(s_i-1)2^{-m_i}
\leq
\frac{s_i}{2}.
$$
Hence, by Markov's inequality,
$$
    \Pr_{h_i}\br{|S_{i+1}|>3s_i/4}
    \le
    \frac{\mathbb{E}_{h_i}[|S_{i+1}|]}{3s_i/4}
    \le
    \frac{2}{3}.
$$
Therefore, a uniformly random choice of $h_i$ satisfies
$|S_{i+1}|\le 3s_i/4$ with probability at least $1/3$. In particular, this
condition implies the existence of a valid choice of $S_{i+1}$ at every level.
Conditioned on making such a valid choice at every level, we have
$s_{i+1}\le 3s_i/4$. Hence, after $L=O(\log s)$ levels, we reach $S_L=\emptyset$.

We next analyze the classical processing time. At each level $i$, we sample
independent random linear hash functions $O(\log(L/\delta))$ times. Since each
sample succeeds with probability at least $1/3$, the probability that all
samples fail at this level is at most $O(\delta/L)$. With a union bound over the $L$
levels, the total failure probability is at most $\delta$.

For each sampled hash function at level $i$, we only need to check which
buckets are singleton buckets, and this can be done in time $O(s_i n \log s_i)$. Hence, the total classical processing time is
$$
    \sum_{i=0}^{L-1} O\br{s_i n \log s_i \log(L/\delta)}
    =
    O\br{s n \log s\log(L/\delta)}.
$$

We then analyze the \T\ count. In \cref{alg:promised-qrom-level}, the hash computation and the uncomputation of $H$ use only CNOT gates because each $h_i$ is linear over $\mathbb{F}_2$, so these steps contribute no \T\ gates. The update of the active flag uses only a constant number of Toffoli/CNOT gates, hence costs $O(1)$ \T\ gates per level.

The main cost comes from the dense QROM step in
\lin{promised-level-dense-qrom}. Formally, \cref{alg:promised-qrom-level} asks
for a dense QROM controlled by the active flag $A_i$. Instead of implementing a
controlled dense QROM directly, we absorb the control bit into the input and
implement the dense QROM for
$$
\tilde{D_i} : \set{0,1}^{1+m_i}\to \set{0,1}^{1+m}, \quad (a_i,h_i(x))\mapsto a_i\cdot D_i(h_i(x)).
$$
By \cref{thm:LB-dense-qrom}, this costs $O\left(\sqrt{2^{m_i+1}(m+1)} + m\right)$. Since $m_i=\left\lceil \log (2s_i) \right\rceil$, we have $2^{m_i}=O(s_i)$, and hence the dense QROM cost at level $i$ is $O(\sqrt{s_i m} + m)$.

Summing over all levels, the total forward cost is
$$
\sum_{i=0}^{L-1} O(\sqrt{s_i m} + m)
\leq
O(\sqrt{sm})\sum_{i=0}^{L-1}(3/4)^{i/2} + O(mL)
=
O(\sqrt{sm} + m\log s).
$$
where we use the fact $s_i\leq s(3/4)^i$ and $L = O(\log s)$.

The reverse pass contributes only a constant factor, and the copying step from $T$ to $O$ uses only CNOT gates. Therefore, the total \T\ count of \cref{alg:promised-qrom} is 
$$O(\sqrt{sm} + m\log s).$$

It remains to bound the ancilla count. The temporary answer register $T$
contributes $m$ clean ancillas. The active-flag registers
$A_0,\ldots,A_L$ and singleton flags $C_0,\ldots,C_{L-1}$ contribute
$O(L)=O(\log s)$ additional clean ancillas. Since each level uncomputes its
hash register before the next level begins, a single hash register of size
$\max_i m_i=O(\log s)$ can be reused throughout the circuit.

For the dense QROM in \lin{promised-level-dense-qrom}, we use the
SELECT-SWAP implementation underlying \cite{Low2024tradingtgatesdirty}. For an
input of length $m_i+1$ and output length $m+1$, it uses
$O\br{\log s_i + \sqrt{s_i m}}$ ancillas beyond the registers
$(A_i,H,C_i,T)$. This workspace can also be reused across levels and
between the forward and reverse passes. Since $s_i\le s$ and $m_i=O(\log s)$
for all $i$, the total ancilla count is
$$O\br{\sqrt{sm} + m + \log s}.$$
\end{proof}

\subsection{From promised sparse QROM to general QROM}

\begin{restatable}{theorem}{sparseQROM} \label{thm:Algorithm-sparse-QROM}
   Given a set $S \subseteq \{0,1\}^n$ of size $s$, suppose $d$ is an $s$-sparse data vector with support $S$ and message length $m$, and let $\delta>0$. Then, with failure probability at most $\delta$, we can implement the sparse QROM oracle $O_d$ with \T\ count and ancilla size both
    $$ O\br{\sqrt{sn + sm} + (n+m)\log s}. $$
    If $s \geq (n+m)\log^2 (n+m)$, this becomes
    $ O\br{\sqrt{sn + sm}}. $
    Moreover, there exists a classical processing algorithm for sparse QROM
    that succeeds with probability at least $1-\delta$ and runs in time
    $$ O\br{sn \log s\log\br{\frac{\log s}{\delta}}}. $$
\end{restatable}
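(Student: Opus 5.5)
We reduce the general sparse QROM to the promised sparse QROM of \cref{thm:Algorithm-promised-sparse-QROM}, following the idea sketched in the techniques section: the promised oracle returns the data together with a certificate of its address. Define the augmented table \(d'\) on the same support \(S\) with message length \(m'=n+m+1\) by \(d'_x=(1,x,d_x)\) for \(x\in S\). Prepending the flag bit \(1\) makes every \(d'_x\) nonzero, so \(\supp(d')=S\) even when some address is \(0^n\). Apply \cref{thm:Algorithm-promised-sparse-QROM} to \(d'\) with failure probability \(\delta\). This gives a unitary \(U\) with \(U\ket{x}\ket{0^{m'}}=\ket{x}\ket{1,x,d_x}\) for \(x\in S\). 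Its \T\ count is \(O(\sqrt{s(n+m)}+(n+m)\log s)\) and it uses \(O(\sqrt{s(n+m)}+n+m+\log s)\) ancillas. The classical preprocessing is exactly that of the promised construction, because only the stored table changes and the hashes depend on \(S\) alone. It therefore runs in time \(O(sn\log s\log(\log s/\delta))\).

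The circuit for \(O_d\) runs as follows on input \(\ket{x}\ket{0^m}\) with clean ancillas.
\begin{enumerate}
\item Apply \(U\) to \(\ket{x}\) and an \(m'\)-bit scratch register, obtaining \(\ket{x}\ket{b,y,e}\) for some unitary-defined (possibly garbage) values when \(x\notin S\).
\item Compute a flag \(f=[\,b=1 \wedge y=x\,]\) into a fresh qubit. This uses CNOTs for \(y\oplus x\) followed by an \((n+1)\)-controlled Toffoli, costing \(O(n)\) \T\ gates, and then the XORs are undone.
\item Controlled on \(f\), copy \(e\) into the output register using \(m\) Toffolis, costing \(O(m)\) \T\ gates.
\item Uncompute the flag, then apply \(U^\dagger\).
\end{enumerate}
Step 4 restores the scratch register exactly, because steps 2 and 3 do not modify \(x\) or the scratch contents: the flag is recomputed and erased, and the output register is only a copy target.

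Correctness needs two cases. If \(x\in S\), the scratch holds \((1,x,d_x)\), so \(f=1\) and the output becomes \(d_x\). If \(x\notin S\), we need the output to stay \(0^m=d_x\), so we must show \(f=0\). The point to argue is that \(U\ket{x}\ket{0^{m'}}\) cannot have any amplitude on a basis state whose scratch equals \((1,x,\ast)\). This is the main obstacle, since the promised oracle is unconstrained off \(S\). There are two ways to handle it.
\begin{itemize}
\item \emph{Via the explicit algorithm.} \cref{alg:promised-qrom} only ever writes data \(d'\) of some \(z\in S\) into \(T\), and only when \(h_i(x)\) hits \(z\)'s singleton bucket. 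Moreover, the input \(x\) stays a basis state throughout, and the algorithm is a classical reversible circuit on basis states. Hence the output is a single basis state \(\ket{x}\ket{d'_z}\) or \(\ket{x}\ket{0}\), or an XOR of several such values if multiple levels fire.
\item \emph{Via unitarity, which is cleaner.} For \(x\notin S\), the input \(\ket{x}\ket{0^{m'}}\) is orthogonal to every \(\ket{z}\ket{0^{m'}}\) with \(z\in S\), but this does not immediately exclude \(\ket{x}\ket{1,x,\ast}\) as an output. So we fall back on the basis-state argument above.
\end{itemize}

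To make the basis-state argument airtight, I would write the output label as a parity (XOR) of \(d'_z\) over the \(z\in S\) written along the path. Then I would modify the construction so the certificate is robust under XOR. One option is to store \(x\) together with a consistency check. The simplest is to observe that the active flags ensure at most one write per path: after the first write the flag drops to \(0\), exactly as in the promised case. So the output is either \(0\), giving \(b=0\), or exactly one \(d'_z=(1,z,d_z)\) with \(z\in S\). Since \(z\ne x\), this gives \(f=0\).

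With \(f=0\) established off \(S\), the total \T\ count is \(O(\sqrt{s(n+m)}+(n+m)\log s)\). The extra \(O(n+m)\) is absorbed, and when \(s\ge (n+m)\log^2(n+m)\) the second term is dominated, giving the claimed simplification.
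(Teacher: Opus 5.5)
Your proposal is correct and follows the paper's proof. Both augment the promised table with the address itself, and both use the at-most-one-write property of \cref{alg:promised-qrom} (the paper's \cref{prop:promised-qrom-behavior}) to conclude that, off $S$, the scratch register holds either $0$ or the augmented entry of some $z\in S$ with $z\neq x$; then both run an equality test, a controlled copy, and uncomputation. The only difference is how the all-zero output is disambiguated: you prepend a flag bit so that $d'_x=(1,x,d_x)$, whereas the paper adjoins $0^n$ to $S$ with data $0^m$, so that a zero output decodes to the index $0^n\neq x$. Both choices give the same cost.
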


We first state a simple consequence of the promised sparse QROM construction.

\begin{proposition}
\label{prop:promised-qrom-behavior}
Given a set $S \subseteq \{0,1\}^n$ of size $s$, and suppose $d$ is an $s$-sparse data vector with support $S$ and message length $m$. 
For every $x\in S$, the promised sparse QROM construction of \cref{alg:promised-qrom} satisfies 
$\ket{x}\ket{0^m}\mapsto \ket{x}\ket{d_x}.$ 
Moreover, for every $x\notin S$, its output is either of the form 
$\ket{x}\ket{0^m}\mapsto \ket{x}\ket{d_y}$ for some $y\in S$, or of the form 
$\ket{x}\ket{0^m}\mapsto \ket{x}\ket{0^m}.$
\end{proposition}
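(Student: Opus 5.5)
The first claim, for $x\in S$, is exactly the correctness argument already given after \cref{alg:promised-qrom}, so the plan is to reuse that reasoning. The only new work is to track a basis input $\ket{x}\ket{0^m}$ with $x\notin S$ through the forward pass. The key observation is that every step of \cref{alg:promised-qrom-level} maps computational basis states to computational basis states. The hash and its uncomputation are linear maps over $\mathbb{F}_2$, the dense QROM for $\tilde D_i$ is a classical reversible lookup, and the flag update is a Toffoli/CNOT. Hence, for any $x\in\{0,1\}^n$, the forward pass yields a single basis state, in which $T$ holds some well-defined string $t(x)$. After the copy into $O$, the reverse pass exactly undoes the forward pass on all registers other than $O$, and none of those steps touch $O$. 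Consequently, the final state is $\ket{x}\ket{t(x)}_O$ with all ancillas restored to $\ket{0}$. It therefore suffices to show that $t(x)\in\{0^m\}\cup\{d_y:y\in S\}$.

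To determine $t(x)$, I would track the active flags $a_i$ along the forward pass. Start with $a_0=1$ and $T=0^m$. At level $i$, if $a_i=0$, the lookup of $\tilde D_i$ returns $0$, so $C_i=0$, $T$ is unchanged, and $a_{i+1}=0$. If $a_i=1$, the lookup returns $D_i(h_i(x))=(c_i,d_i)$, where $c_i$ and $d_i$ are the table entries at bucket $h_i(x)$. Here I use that $D_i$ is defined as a function of the bucket value: $c_i=1$ exactly when the bucket $h_i(x)$ contains a unique element $y\in S_i$, and in that case $d_i=d_y$; otherwise the entry is $(0,0^m)$. If $c_i=1$, then $T\gets T\oplus d_y$ and $a_{i+1}=a_i\oplus a_ic_i=0$. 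If $c_i=0$, nothing is written and $a_{i+1}=1$.

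By induction on $i$, the flag $a$ stays at $1$ until the first level at which the bucket of $x$ is a singleton bucket of $S_i$, and it is $0$ from then on. Hence at most one nonzero XOR is ever applied to $T$. If such a level exists, $t(x)=d_y$ for the unique $y\in S_i\subseteq S$ occupying that bucket. Otherwise $t(x)=0^m$, which gives the second alternative.

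The main subtlety is making sure the table $D_i$ is genuinely indexed by the bucket value, so that its entry for $x\notin S$ is defined at all; the paper writes it as $D_i(h_i(x))$ via $c_i(x)$. I would make this explicit by defining $D_i(b)=(1,d_y)$ if $b$ is the hash of exactly one $y\in S_i$, and $D_i(b)=(0,0^m)$ otherwise. This also coincides with the paper's definition on $x\in S_i$. A second, minor point is checking that the reverse pass restores every ancilla even when $x\notin S$. This holds because the forward level unitaries are applied and then inverted in reverse order, and the copy step acts only on $O$ with $T$ as control, so it commutes with the uncomputation of everything except $O$.
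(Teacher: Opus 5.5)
Your proposal is correct and follows essentially the same route as the paper: a write to $T$ happens only at a level with $a_i c_i = 1$, there is at most one such level, the value written is $d_y$ for the unique element $y$ of a singleton bucket of some $S_i\subseteq S$, and otherwise $T$ stays $0^m$. Your explicit definition of $D_i$ as a function of the bucket value, and your check that the reverse pass restores all ancillas while leaving $O$ untouched, spell out details the paper's short proof leaves implicit.
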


\begin{proof}
Recall that, at level $i$, the promised construction writes into the temporary register only when $a_i c_i=1$.
There can be at most one such level. 
Whenever a write occurs, it comes from a singleton bucket of some set $S_i\subseteq S$. Hence, the value written into the temporary register is $d_y$ for the unique element $y\in S_i$ in that bucket, and in particular $y\in S$.
If no write ever occurs, the temporary register remains initialized to $\ket{0^m}$.

\end{proof}

\begin{proof}[Proof of \cref{thm:Algorithm-sparse-QROM}]
The key idea for obtaining the full sparse QROM is to augment the promised output with the index itself. We ask the promised sparse QROM to return the pair $(d_x,x)$ instead of only $d_x$. If $x\in S$, this returns exactly $(d_x,x)$. If $x\notin S$, \cref{prop:promised-qrom-behavior} implies that the promised procedure may still return some pair $(d_y,y)$ with $y\in S$, or it may return zero. In either case, we can distinguish the valid case by checking whether the recovered index is equal to the input.

If necessary, replace \(S\) by \(S\cup\{0^n\}\) and set the added data value
to \(0^m\). This changes the sparsity from \(s\) to at most \(s+1\), which
does not affect the asymptotic cost. Thus, we may assume \(0^n\in S\). This
guarantees that the all-zero output corresponds to an index different from
every \(x\notin S\).

We now describe the algorithm.
Define the augmented data vector $\tilde d$ by $\tilde d_x=(d_x,x)\in \{0,1\}^{m+n}$ for $x\in S$.

\LinesNumbered
\begin{algorithm}[H]
\caption{Sparse QROM}
\label{alg:sparse-qrom}
\DontPrintSemicolon
\KwInput{The augmented data vector $\tilde d$.}
\KwRegister{\(I,T_d,T_x,F,O\), together with the promised sparse QROM
workspace, assumed to be in
\(\ket{x}_I\ket{0^m}_{T_d}\ket{0^n}_{T_x}\ket{0}_F\ket{0^m}_O\).}
\KwOutput{The sparse QROM unitary on registers \(I\) and \(O\).}

\AlgoDisplayStep{Run the promised sparse QROM algorithm \cref{alg:promised-qrom} on the augmented data vector $\tilde d$, with outputs in $(T_d,T_x)$}{
\ket{x}_I\ket{0^m}_{T_d}\ket{0^n}_{T_x}
\mapsto
\ket{x}_I\ket{\tilde d}_{T_d}\ket{\tilde x}_{T_x}.
}
\AlgoDisplayStep{Compute the equality flag $F$ between $I$ and $T_x$}{
\ket{x}_I\ket{\tilde x}_{T_x}\ket{0}_F
\mapsto
\ket{x}_I\ket{\tilde x}_{T_x}\ket{\mathbf 1[x=\tilde x]}_F .
}
\AlgoDisplayStep{Controlled on $F$, copy $T_d$ into $O$}{
\ket{\tilde d}_{T_d}\ket{f}_F\ket{0^m}_O
\mapsto
\ket{\tilde d}_{T_d}\ket{f}_F\ket{f\cdot \tilde d}_O .
}
Reverse the equality test and then reverse \cref{alg:promised-qrom}\;
\end{algorithm}
\LinesNotNumbered

We now explain why \cref{alg:sparse-qrom} is correct. First, suppose $x\in S$. Then the promised sparse QROM applied to the augmented data vector $\tilde d$ returns
$$
\ket{x}_I\ket{0^m}_{T_d}\ket{0^n}_{T_x}
\mapsto
\ket{x}_I\ket{d_x}_{T_d}\ket{x}_{T_x}.
$$
Hence, the equality flag is set to $1$, and the controlled copy writes $d_x$ into the output register $O$.

Next, suppose $x\notin S$. By \cref{prop:promised-qrom-behavior}, after the first step the temporary registers $(T_d,T_x)$ contain either $(d_y,y)$ for some $y\in S$, or $(0^m,0^n)$. In the first case, we have $y\neq x$, since $y\in S$ while $x\notin S$. In the second case, the assumption $0^n\in S$ implies $x\neq 0^n$. Thus, in all cases the recovered index differs from the input, so the equality flag is $0$, and the controlled copy leaves the output register equal to $\ket{0^m}$.

Finally, the last step reverses the equality test and the promised sparse QROM computation. This cleans all temporary and auxiliary registers while leaving the output register unchanged. Therefore, on every input $x\in\{0,1\}^n$, the final state is
$
\ket{x}_I\ket{d_x}_O,
$
so the circuit indeed implements the sparse QROM oracle.

To complete the proof of \cref{thm:Algorithm-sparse-QROM}, it remains to analyze the \T\ count of the above algorithm.

We apply \cref{thm:Algorithm-promised-sparse-QROM} to the augmented data vector $\tilde d$, where $\tilde d_x=(d_x,x)\in \{0,1\}^{m+n}$ for $x\in S$. Its message length is $m+n$, so the promised sparse QROM step costs \T\ count
$O\bigl(\sqrt{sn+sm} + (n+m)\log s\bigr).$

The equality test between $I$ and $T_x$ costs $O(n)$ \T\ gates, and the controlled copy from $T_d$ to $O$ costs $O(m)$ \T\ gates. Hence, the additive $O(n+m)$ term can be absorbed into the asymptotic bound. Also, reversing these steps contributes only another constant factor. Therefore, the total \T\ count of \cref{alg:sparse-qrom} is 
$$O\bigl(\sqrt{sn+sm} + (n+m)\log s\bigr).$$

For the ancilla count, the temporary registers $T_d$ and $T_x$ contribute
$m+n$ clean ancillas, and the equality flag $F$ contributes one more qubit.
Applying \cref{thm:Algorithm-promised-sparse-QROM} to the augmented message length
$m+n$ shows that the promised sparse QROM subroutine uses
$O\br{\sqrt{s(n+m)} + n + m + \log s}$ additional ancillas. Therefore, the overall ancilla count  is
$$O\br{\sqrt{sn+sm} + n + m + \log s}.$$

The classical preprocessing time is the same as in
\cref{thm:Algorithm-promised-sparse-QROM}.

\end{proof}

\section{Applications}

\subsection{Sparse state preparation}

We first formalize the sparse state preparation problem. Let
\begin{align*}
    \ket{\psi}
    =
    \sum_{x=0}^{s-1}\alpha_x\ket{a_x},
    \qquad
    a_x\in\{0,1\}^n,
\end{align*}
be an $n$-qubit state, where the basis states \(a_0,\dots,a_{s-1}\) are
distinct and \(\sum_{x=0}^{s-1}|\alpha_x|^2=1\). The sparse state preparation
task is to
construct a quantum circuit that maps $\ket{0^n}$ to $\ket{\psi}$. The
\T\ count of sparse state preparation is the \T\ count of the circuit.

The algorithm first prepares the amplitudes on a compressed
\(\lceil\log s\rceil\)-qubit register, loads the \(n\)-bit support labels by
dense QROM, and erases the compressed labels by inverse promised sparse QROM. We
first recall the optimal dense state-preparation theorem.

\begin{theorem}[{\cite[Theorem~1.1]{david2026quantum}}]
\label{thm:dense-state-prepare}
Any $n$-qubit state can be prepared up to trace-distance error $\varepsilon$
with \T\ count
$$
    O\br{\sqrt{2^n\log(1/\varepsilon)}+\log(1/\varepsilon)}.
$$
\end{theorem}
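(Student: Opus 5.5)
Since this theorem is quoted from \cite{david2026quantum}, the plan is to reconstruct its upper bound with the tools already available here: the dense QROM of \Cref{thm:LB-dense-qrom} and standard Clifford$+$\T\ arithmetic. The skeleton is the Grover--Rudolph binary-tree construction. For $j=0,\dots,n-1$, conditioned on the first $j$ qubits holding a prefix $p\in\{0,1\}^j$, we rotate qubit $j+1$ by an angle $\theta_p$ that splits the probability mass of the subtree at $p$ between its two children. Complex phases are absorbed by one final diagonal phase layer of the same form. At level $j$ we load a $k$-bit approximation of $\theta_p$ with a dense QROM on $j$ address bits, costing $O(\sqrt{2^j k}+k)$ \T\ gates by \Cref{thm:LB-dense-qrom}. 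The loaded angle is then applied as a controlled rotation through a phase-gradient register and an adder, costing $O(k)$. Finally the QROM is uncomputed, at the same cost.

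\textbf{Choice of precision.} We take $k=\Theta(\log(n/\varepsilon))$, so that the $n$ rounding errors, each $O(2^{-k})$ in operator norm, add to at most $\varepsilon$. The QROM costs then form a geometric series:
\[
\sum_{j<n} O\bigl(\sqrt{2^j k}\bigr)=O\bigl(\sqrt{2^n k}\bigr).
\]
What remains is to show that this, together with the additive $O(nk)$ term, fits inside $O(\sqrt{2^n\log(1/\varepsilon)}+\log(1/\varepsilon))$.

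\textbf{The top levels are the main obstacle.} The levels with $2^j<k$ each pay $O(k)$ for rotation and QROM overhead, which naively gives $O(k\log k)$ rather than $O(k)$. The precision $\log(n/\varepsilon)$ instead of $\log(1/\varepsilon)$ also costs a $\log n$ term. I would attack both problems in three steps.

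\begin{enumerate}
\item \emph{Merge the top levels.} Treat the first $r=\lceil\log k\rceil$ qubits as one block. Load all $2^r=O(k)$ of their angles at once, or prepare that $O(k)$-dimensional block state by a single Clifford$+$\T\ synthesis. The aim is a total cost of $O(k)$ for the block, rather than $O(k)$ per rotation, using the fact that an $O(1)$-qubit-control rotation with precision $\eta$ costs $O(\log(1/\eta))$.
\item \emph{Use a precision schedule on the lower levels.} From level $r$ on, give level $j$ precision $k_j=\log(1/\varepsilon)+O(n-j)$, so that deep levels, where $2^j$ is large, absorb the extra bits. Since $\sum_{j}\sqrt{2^j (n-j)}=O(\sqrt{2^n})$, the sum $\sum_j\sqrt{2^j k_j}$ stays $O(\sqrt{2^n\log(1/\varepsilon)})$.
\item \emph{Control the error.} Bound the total error by summing the per-level errors, say $2^{-k_j}$ each, which add to $O(\varepsilon)$.
\end{enumerate}

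\textbf{Fallback.} If the per-level rotation overhead still does not fit within the budget, I would switch to the route of \cite{babbush2018encoding}. First prepare a uniform superposition, then load $k$-bit amplitude approximations by a single dense QROM on all $n$ address bits, costing $O(\sqrt{2^n k}+k)$. Then use an $O(k)$-cost inequality test against a uniform register, and finish with a constant number of rounds of exact amplitude amplification.
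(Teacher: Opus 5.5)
The paper does not prove this statement; it imports it as Theorem~1.1 of Gosset, Kothari, and Wu and uses it as a black box. As a self-contained reconstruction, however, your proposal has a genuine gap. It sits exactly at the additive $O(\log(1/\varepsilon))$ term, which is what that theorem adds over Low--Kliuchnikov--Schaeffer.

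Write $L=\log(1/\varepsilon)$. When $2^n\le L$, the target is $O(L)$ and every level lies in your ``top block.'' So in that regime Step~1 is the whole theorem, and ``prepare that $O(k)$-dimensional block state by a single Clifford$+$\T\ synthesis'' restates the goal rather than proving it.

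The other option in Step~1 does not merge anything. Even with all $2^r$ angles loaded at once, the $r$ rotations must still be applied one after another: level $j$ rotates a fresh qubit, controlled by a prefix that earlier levels have put in superposition. Each rotation needs its own $\Theta(L)$-bit phase-gradient addition, giving $\Theta(L\min\{n,\log L\})$ in total.

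The fact about $O(1)$-controlled rotations does not help either. The block has $\Theta(\log L)$ controls and $2^r-1=\Theta(L)$ distinct angles, each needed to precision about $\varepsilon$, so synthesizing them separately costs $\Theta(L^2)$. Your Step~2 schedule is sound and removes the $\log n$ loss on the levels with $2^j\ge L$, but it does not touch this regime.

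Two further points need repair. First, you charge $O(k)$ per rotation by adding into a phase-gradient register, but you never pay for the register itself. Prepared in the standard way (one synthesized phase per qubit, each accurate to about $\varepsilon/L$), a $\Theta(L)$-qubit phase-gradient state costs $\Theta(L\log(L/\varepsilon))=\Theta(L^2)$ \T\ gates. That already exceeds $\sqrt{2^nL}+L$ whenever $L^3\gg 2^n$.

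Second, the fallback fails. Load $k$-bit approximations of $|\alpha_x|/\max_y|\alpha_y|$ onto a uniform superposition and postselect on the inequality test. The success probability is then about $1/(2^n\max_x|\alpha_x|^2)$, which can be as small as $2^{-n}$. Amplitude amplification therefore needs $\Theta(2^{n/2})$ rounds, not $O(1)$. The deterministic alias-sampling variant of Babbush et al.\ avoids this only by leaving a garbage register entangled with $\ket{x}$, which is not allowed when preparing a pure state.

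As written, the plan gives roughly $O(\sqrt{2^nL}+L^2)$. The missing ingredient is a procedure of cost $O(\log(1/\varepsilon))$ for states (or diagonal unitaries) on $O(\log\log(1/\varepsilon))$ qubits, including whatever phase resource it uses. That is the substance of the cited result, and the Grover--Rudolph/QROM skeleton as described does not supply it.
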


We next show how to combine this theorem with our QROM constructions to obtain
an algorithm for sparse state preparation.

\begin{theorem}[Sparse state preparation]
\label{thm:sparse-state-preparation}
Any $s$-sparse $n$-qubit state can be prepared up to trace-distance error
$\varepsilon$ with \T\ count
\begin{align*}
    O\br{\sqrt{sn}+\sqrt{s\log(1/\varepsilon)}+\log(1/\varepsilon) + n}.
\end{align*}
\end{theorem}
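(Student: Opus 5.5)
We follow the three-step outline given in the introduction. Let \(k=\lceil\log s\rceil\) and define the compressed state \(\ket{\phi}=\sum_{x=0}^{s-1}\alpha_x\ket{x}\) on \(k\) qubits. Padding with zero amplitudes if \(s\) is not a power of two, this is a general \(k\)-qubit state.

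\textbf{Step 1: prepare \(\ket{\phi}\).} By \cref{thm:dense-state-prepare}, we prepare \(\ket{\phi}\) on a fresh \(k\)-qubit register, starting from \(\ket{0^k}\), to trace-distance error \(\varepsilon\). This uses \(O(\sqrt{2^k\log(1/\varepsilon)}+\log(1/\varepsilon))=O(\sqrt{s\log(1/\varepsilon)}+\log(1/\varepsilon))\) \T\ gates.

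\textbf{Step 2: load the labels.} We apply the dense QROM of \cref{thm:LB-dense-qrom} with address register \(\ket{x}\), \(x\in[2^k]\), and data \(a_x\in\{0,1\}^n\), setting the data to \(0^n\) for \(x\ge s\). On the ideal state this produces \(\sum_x\alpha_x\ket{x}\ket{a_x}\), where the second register is the \(n\)-qubit output register. The cost is \(O(\sqrt{2^k n}+n)=O(\sqrt{sn}+n)\) \T\ gates.

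\textbf{Step 3: erase the compressed label.} Consider the data vector \(e\) on \(n\)-bit addresses with \(e_{a_x}=(1,x)\in\{0,1\}^{1+k}\) for \(x<s\), and \(e_y=0\) otherwise. Its support is exactly \(\{a_0,\dots,a_{s-1}\}\), so it is \(s\)-sparse with message length \(k+1\); the flag bit guarantees that \(e_{a_0}\neq 0\). By \cref{thm:Algorithm-promised-sparse-QROM}, we fix a successful choice of hash functions; these exist with positive probability, so the construction is deterministic. This gives a promised sparse QROM \(O_e^{\mathrm{prom}}\) with \T\ count \(O(\sqrt{s\log s}+\log^2 s)\). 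We then prepend a flag qubit in \(\ket{1}\) to the \(k\)-qubit register, which costs an \(X\) gate. Applying \((O_e^{\mathrm{prom}})^\dagger\) maps \(\ket{a_x}\ket{(1,x)}\mapsto\ket{a_x}\ket{0^{k+1}}\) for each \(x<s\). Since every basis state in the superposition has address in the support, linearity gives exactly \(\ket{\psi}\ket{0^{k+1}}\). Every term of this cost is absorbed into the target bound:
\[
\sqrt{s\log s}\le\sqrt{sn}\quad\text{since } s\le 2^n,\qquad
\log^2 s\le n^2,
\]
and when \(n^2\gg\sqrt{sn}\), i.e.\ \(s\lesssim n^3\), we have \(\log^2 s=O(\log^2 n)=O(n)\). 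More directly, \(\log^2 s\le \sqrt{s}\cdot O(1)\) for all \(s\), so \(\log^2 s=O(\sqrt{sn})\).

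\textbf{Error analysis.} Steps 2 and 3 are unitaries that act exactly on the ideal state. The actual state produced in Step 1 is within trace distance \(\varepsilon\) of \(\ket{\phi}\), and trace distance is contractive under channels, so the final state is \(\varepsilon\)-close to \(\ket{\psi}\ket{0^{k+1}}\). Any garbage that Step 1 leaves in ancillas is already accounted for by the dense theorem's guarantee, which includes returning its ancillas to \(\ket{0}\).

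\textbf{Main obstacle.} The only real subtlety is in Step 3. The inverse promised QROM is specified only on supported inputs, and we must check that the state entering it really lies in the span of \(\ket{a_x}\ket{1,x}\). This holds exactly for the ideal state and holds approximately, via contractivity, for the actual state. We must also account for the additive \(m\log s\) term in \cref{thm:Algorithm-promised-sparse-QROM}, which we handle using \(\log^2 s=O(\sqrt{s})\). Summing the three step costs yields the claimed bound.
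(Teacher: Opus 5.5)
Your proposal is correct and follows the paper's proof essentially step for step. The steps are the same: dense state preparation of the compressed amplitudes on $\lceil\log s\rceil$ qubits, a dense QROM that loads the labels $a_x$, and an inverse promised sparse QROM with the flagged message $(1,x)$ that erases the compressed index, whose $O(\sqrt{s\log s}+\log^2 s)$ cost is absorbed into $O(\sqrt{sn})$. Your explicit contractivity argument for the Step~1 error is a detail the paper leaves implicit, and the aside about $s\lesssim n^3$ is unnecessary because $\log^2 s=O(\sqrt{s})$ already suffices.
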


Our algorithm first prepares a dense compressed state, and then scatters this
compressed state into the full $n$-qubit space.
Let $r=\lceil\log s\rceil$. Define a dense data vector $d$ on $ \{0,1\}^r$ of message length
$n$ by
\begin{align*}
    d_x
    =
    \begin{cases}
        a_x, & 0\le x<s,\\
        0^n, & \text{otherwise.}
    \end{cases}
\end{align*}
Also define an $s$-sparse data vector $b$ on $\{0,1\}^n$ of message length
\(r+1\) by
\begin{align*}
    b_y
    =
    \begin{cases}
        (1,x), & y=a_x \text{ for some } 0\le x<s,\\
        0^{r+1}, & \text{otherwise.}
    \end{cases}
\end{align*}
Here \(x\) is encoded in \(r\) bits, and the leading flag ensures that the
entry for \(x=0\) is not confused with the zero message.

\LinesNumbered
\begin{algorithm}[H]
\caption{Sparse state preparation}
\label{alg:sparse-state-preparation}
\DontPrintSemicolon
\KwInput{The support indices
\(\{a_x\}_{0\le x<s}\) and amplitudes \(\{\alpha_x\}_{0\le x<s}\).}
\KwRegister{\(L,F,S\), initialized to
\(\ket{0^r}_L\ket0_F\ket{0^n}_S\).}
\KwOutput{Prepare $\ket{\psi}$ in the state register $S$}

\AlgoDisplayStep{Prepare the compressed state on $L$ using \Cref{thm:dense-state-prepare}\nllabel{lin:ssp-compressed-state}}{
\ket{0^r}_L
\mapsto
\sum_{x=0}^{s-1}\alpha_x\ket{x}_L .
}
\AlgoDisplayStep{Run the dense QROM of \Cref{thm:LB-dense-qrom} for $d$ on registers $L,S$\nllabel{lin:ssp-dense-qrom}}{
\sum_{x=0}^{s-1}\alpha_x\ket{x}_L\ket{0^n}_S
\mapsto
\sum_{x=0}^{s-1}\alpha_x\ket{x}_L\ket{d_x}_S .
}
\AlgoDisplayStep{Set \(F\) to \(\ket1\), then run the inverse promised sparse QROM with \Cref{alg:promised-qrom} for $b$ on registers $S,F,L$\nllabel{lin:ssp-promised-qrom}}{
\sum_{x=0}^{s-1}\alpha_x\ket{x}_L\ket0_F\ket{a_x}_S
\mapsto
\ket{0^r}_L\ket0_F\sum_{x=0}^{s-1}\alpha_x\ket{a_x}_S .
}
\end{algorithm}
\LinesNotNumbered

\begin{proof}
We first explain why \Cref{alg:sparse-state-preparation} is correct. The
correctness of \lin{ssp-compressed-state} and \lin{ssp-dense-qrom} is
immediate from their definitions. It remains to justify
\lin{ssp-promised-qrom}. Since \(b_{a_x}=(1,x)\) for every \(0\le x<s\), the
promised sparse QROM algorithm \Cref{alg:promised-qrom} for $b$ maps
\begin{align*}
    \sum_{x=0}^{s-1}\alpha_x\ket{a_x}_S\ket0_F\ket{0^r}_L
    \mapsto
    \sum_{x=0}^{s-1}\alpha_x\ket{a_x}_S\ket1_F\ket{x}_L.
\end{align*}
After flipping \(F\) from \(\ket0\) to \(\ket1\), its reverse map exactly
accomplishes \lin{ssp-promised-qrom}.

We next analyze the \T\ count. By \Cref{thm:dense-state-prepare}, the \T\ count
of \lin{ssp-compressed-state} is
\begin{align*}
    O\br{\sqrt{2^r\log(1/\varepsilon)}+\log(1/\varepsilon)}
    =
    O\br{\sqrt{s\log(1/\varepsilon)}+\log(1/\varepsilon)}.
\end{align*}
By \Cref{thm:LB-dense-qrom}, the \T\ count of
\lin{ssp-dense-qrom} is
$O\br{\sqrt{2^r n} + n}=O(\sqrt{sn} + n)$. By
\Cref{thm:Algorithm-promised-sparse-QROM}, the \T\ count of
\lin{ssp-promised-qrom} is
\(O\br{\sqrt{s(r+1)}+(r+1)\log s}
=O\br{\sqrt{s\log s}+\log^2 s}\).
Since \(s\le2^n\), both \(\sqrt{s\log s}\) and \(\log^2 s\) are absorbed into
\(O(\sqrt{sn})\).
Therefore, the total \T\ count of \Cref{alg:sparse-state-preparation} is
\begin{align*}
    O\br{\sqrt{sn}+\sqrt{s\log(1/\varepsilon)}+\log(1/\varepsilon) + n}.
\end{align*}
\end{proof}

\subsection{Block encoding of sparse matrices}

\label{subsec:sparse-block-encoding}

In this subsection, we focus on the \T\ count of block encodings for sparse
matrices. We use the following definition.
\begin{definition}[Block encoding]
    Let $A$ be an operator acting on $n$ qubits. A unitary $U$ is an
    $(\alpha,a,\varepsilon)$-block-encoding of $A$ if
    \[
        \left\|
            A-\alpha
            \left(\bra{0^a}\otimes I\right)
            U
            \left(\ket{0^a}\otimes I\right)
        \right\|\le \varepsilon .
    \]
\end{definition}

Let \(N=2^n\) and \(A=(a_{ij})\in\mathbb{R}^{N\times N}\) be
row- and column-\(s\)-sparse with \(|a_{ij}|\le1\). Set
\(\bar{s}=2^{\lceil\log s\rceil}\), so \(s\le \bar{s}\le2s\), and pad every row and
column to length \(\bar{s}\).

For each row \(i\), choose distinct labels \(r_{i,0},\ldots,r_{i,\bar{s}-1}\in[N]\)
whose set contains the row support \(\{j:a_{ij}\neq0\}\); any remaining slots
are filled with zero entries of that row. Similarly, for each column \(j\),
choose distinct labels \(c_{0,j},\ldots,c_{\bar{s}-1,j}\in[N]\) whose set contains
the column support \(\{i:a_{ij}\neq0\}\), with any remaining slots filled by
zero entries of that column. For the inverse promised sparse QROM calls below, we
associate the address \((i,r_{i,k})\) with the nonzero message \((1,k)\), and
similarly associate \((c_{\ell,j},j)\) with \((1,\ell)\). The leading flag
ensures that the message is nonzero also when \(k=0\) or \(\ell=0\).
Choose a precision parameter \(b\) and set \(B=2^b\). For \(i,j\in[N]\), let
\[
    m_{ij}=\lfloor B|a_{ij}|\rfloor,
    \qquad
    \tau_{ij}=\mathbf{1}[a_{ij}<0]
\]
be the magnitude and sign data for \(a_{ij}\). And define
\begin{align}
\label{eq:def-tilde-a}
    \tilde a_{ij}=(-1)^{\tau_{ij}}\frac{m_{ij}}{B}.
\end{align}
Our implementation follows the Gram-matrix framework from
\cite[Lemma~27]{GSLW19}. In \algo{sparse-block-encoding-unitary}, we construct
two state-preparation unitaries \(U_r\ket{i}\ket{0^a}=\ket{\psi_i}\) and
\(U_c\ket{j}\ket{0^a}=\ket{\phi_j}\) such that
\(\braket{\psi_i}{\phi_j}=\tilde a_{ij}/\bar{s}\). Then
\(U_{\rm BE}=U_r^\dagger U_c\) is an \(\bar{s}\)-normalized block encoding of
\(\tilde A=(\tilde a_{ij})\). The subroutines in the algorithm are written on
the input states used in the block encoding. Since every step is reversible,
they extend to unitaries on the full register space.

\SetKwProg{Subroutine}{Subroutine}{:}{}

\LinesNumbered
\begin{algorithm}[H]
\caption{Sparse block-encoding unitary}
\label{algo:sparse-block-encoding-unitary}
\DontPrintSemicolon
\KwInput{The padded row and column tables \(r_{i,k}\), \(c_{\ell,j}\), and the
value data \(m_{ij},\tau_{ij}\).}
\KwRegister{\(X,Y,K,L,\Sigma,M,T,Q,F\).}
\KwOutput{The unitary \(U_{\rm BE}=U_r^\dagger U_c\).}

\Subroutine{\(U_r\) acting on \(\ket{i}_X\ket{0}_{K,\Sigma ,Y,Q,F}\)}{
Prepare the uniform superposition over
\((k,\sigma)\in\{0,\ldots,\bar{s}-1\}\times\{0,\ldots,B-1\}\)
on \(K\) and \(\Sigma\)\nllabel{lin:be-row-uniform}\;
Use the dense QROM of \Cref{thm:LB-dense-qrom} for
\((i,k)\mapsto r_{i,k}\) to write \(r_{i,k}\) into \(Y\)\nllabel{lin:be-row-dense}\;
Set \(F\) to \(\ket{1}\)\;
Use the inverse promised sparse QROM from \Cref{alg:promised-qrom} for
\((i,r_{i,k})\mapsto(1,k)\) on \(F,K\) to erase \(F,K\)\nllabel{lin:be-row-inverse}\;
Set \(Q\) to \(\ket{1}\)\nllabel{lin:be-row-q}\;
}
\BlankLine
\Subroutine{\(U_c\) acting on \(\ket{j}_Y\ket{0}_{L,\Sigma,X,M,T,Q,F}\)}{
Prepare the uniform superposition over
\((\ell,\sigma)\in\{0,\ldots,\bar{s}-1\}\times\{0,\ldots,B-1\}\)
on \(L\) and \(\Sigma\)\nllabel{lin:be-column-uniform}\;
Use the dense QROM of \Cref{thm:LB-dense-qrom} for
\((\ell,j)\mapsto c_{\ell,j}\) to write \(c_{\ell,j}\) into \(X\)\nllabel{lin:be-column-support}\;
Use the dense QROM of \Cref{thm:LB-dense-qrom} for
\((\ell,j)\mapsto(m_{c_{\ell,j},j},\tau_{c_{\ell,j},j})\)
to write the magnitude and sign data into \(M\) and \(T\)\nllabel{lin:be-column-value}\;
Compare \(\sigma\) with \(m_{c_{\ell,j},j}\) and write
\(\mathbf{1}[\sigma<m_{c_{\ell,j},j}]\) into \(Q\)\nllabel{lin:be-column-compare}\;
Apply the phase \((-1)^{\tau_{c_{\ell,j},j}}\) controlled by \(T\)\nllabel{lin:be-column-phase}\;
Run the same dense QROM again to erase \(M\) and \(T\)\nllabel{lin:be-column-value-uncompute}\;
Set \(F\) to \(\ket{1}\)\;
Use the inverse promised sparse QROM from \Cref{alg:promised-qrom} for
\((c_{\ell,j},j)\mapsto(1,\ell)\) on \(F,L\) to erase \(F,L\)\nllabel{lin:be-column-inverse}\;
}
\BlankLine
Apply \(U_c\), then apply \(U_r^\dagger\), obtaining
\(U_{\rm BE}=U_r^\dagger U_c\)\nllabel{lin:be-compose}\;

\end{algorithm}
\LinesNotNumbered

We now show that this unitary block-encodes \(A\).

\begin{theorem}
\label{thm:block-encoding}
Let $A\in\mathbb{R}^{2^n\times 2^n}$ be a row- and column-$s$-sparse matrix
with $|a_{ij}|\le 1$, and set \(\bar{s}=2^{\lceil\log s\rceil}\). There exists an
$(\bar{s}, O(n+\log(s/\varepsilon_{\rm BE})), \varepsilon_{\rm BE})$
block encoding of $A$ with \T\ count and qubit count both bounded by 
$$
O\br{ \sqrt{2^n s\,n} + \sqrt{2^n s\log(s/\varepsilon_{\rm BE})} + \log(s/\varepsilon_{\rm BE}) }.
$$
\end{theorem}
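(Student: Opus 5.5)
We establish three things about \algo{sparse-block-encoding-unitary}, in order: the states $\ket{\psi_i}=U_r\ket{i}\ket{0}$ and $\ket{\phi_j}=U_c\ket{j}\ket{0}$ satisfy $\braket{\psi_i}{\phi_j}=\tilde a_{ij}/\bar{s}$ exactly; the rounding error $\|A-\tilde A\|$ is at most $\varepsilon_{\rm BE}$ for a suitable $b=O(\log(s/\varepsilon_{\rm BE}))$; and the \T\ and qubit counts meet the stated bound.

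\textbf{Inner product.} We track $U_r$ on $\ket{i}_X$. After \lin{be-row-uniform} and \lin{be-row-dense} the state is
\[
\frac{1}{\sqrt{\bar{s}B}}\sum_{k,\sigma}\ket{i}_X\ket{k}_K\ket{\sigma}_\Sigma\ket{r_{i,k}}_Y .
\]
The pair $(i,r_{i,k})$ lies in the support of the padded table $(i,r_{i,k})\mapsto(1,k)$. Hence, after setting $F$ to $\ket1$, the inverse promised sparse QROM erases $F$ and $K$ exactly, by \Cref{def:promised-sparse-QROM} and the correctness of \Cref{alg:promised-qrom}. We obtain
\[
\ket{\psi_i}=\frac{1}{\sqrt{\bar{s}B}}\sum_{k,\sigma}\ket{i}_X\ket{\sigma}_\Sigma\ket{r_{i,k}}_Y\ket{1}_Q .
\]
By the symmetric argument, $\ket{\phi_j}$ is the following state, plus terms with $Q=\ket0$:
\[
\frac{1}{\sqrt{\bar{s}B}}\sum_{\ell}\sum_{\sigma<m_{c_{\ell,j},j}}(-1)^{\tau_{c_{\ell,j},j}}\ket{c_{\ell,j}}_X\ket{\sigma}_\Sigma\ket{j}_Y\ket{1}_Q .
\]
Here the value registers $M,T$ have been uncomputed by the repeated dense QROM in \lin{be-column-value-uncompute}. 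Since $\bar{s}$ and $B$ are powers of two, the uniform superpositions are Clifford (Hadamards).

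Taking the inner product, only the terms with $Q=1$ contribute. They require $c_{\ell,j}=i$ and $r_{i,k}=j$. If $a_{ij}\neq 0$, there is exactly one such $k$ and one such $\ell$, by distinctness of the labels and the fact that the label sets contain the supports. If $a_{ij}=0$, either no match exists or $m_{ij}=0$. In both cases the sum over $\sigma$ gives $\braket{\psi_i}{\phi_j}=(-1)^{\tau_{ij}}m_{ij}/(\bar{s}B)=\tilde a_{ij}/\bar{s}$. Therefore $U_r^\dagger U_c$ is an exact $(\bar{s},a,0)$-block-encoding of $\tilde A$, following \cite[Lemma~27]{GSLW19}.

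\textbf{Rounding error.} We have $|a_{ij}-\tilde a_{ij}|\le 1/B$, and $A-\tilde A$ is row- and column-$s$-sparse. The Schur test (the geometric mean of the maximum row sum and the maximum column sum) therefore gives $\|A-\tilde A\|\le s/B$. Multiplying by the normalization, the block-encoding error is $\|A-\bar{s}\cdot\tilde A/\bar{s}\|\le s/B$. Choosing $b=\lceil\log(s/\varepsilon_{\rm BE})\rceil$ makes this at most $\varepsilon_{\rm BE}$. The ancilla count is then $O(n+\log s+b)=O(n+\log(s/\varepsilon_{\rm BE}))$.

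\textbf{Cost.} The dense QROMs are indexed by $(i,k)$ or $(\ell,j)$, so each address set has size $2^n\bar{s}=O(2^ns)$.
\begin{itemize}
\item The support lookups in \lin{be-row-dense} and \lin{be-column-support} have message length $n$. By \Cref{thm:LB-dense-qrom} they cost $O(\sqrt{2^nsn}+n)$.
\item The value lookups in \lin{be-column-value} and \lin{be-column-value-uncompute} have message length $b+1$. They cost $O(\sqrt{2^ns\,b}+b)$.
\item The inverse promised sparse QROMs range over the $2^n\bar{s}$ padded positions with message length $O(\log s)$. By \Cref{thm:Algorithm-promised-sparse-QROM} they cost $O(\sqrt{2^ns\log s}+\log^2 s)$. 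This is dominated by $\sqrt{2^nsn}$ because $s\le 2^n$.
\item The comparison in \lin{be-column-compare} costs $O(b)$, and the controlled phase in \lin{be-column-phase} is Clifford.
\end{itemize}
The qubit counts follow the same way from the ancilla bounds of \Cref{thm:LB-dense-qrom} and \Cref{thm:Algorithm-promised-sparse-QROM}. Summing over the forward steps and their inverses yields the claimed bound; the additive $n$ is absorbed into $\sqrt{2^nsn}$.

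The main obstacle is the exactness of the inverse promised sparse QROM erasure. It must act correctly on every basis state that occurs. This holds because every pair $(i,r_{i,k})$ fed to it, including the padded zero entries, was deliberately included in the support of the promised table. Getting this padding bookkeeping right is where care is needed.
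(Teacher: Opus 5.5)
Your proposal is correct and follows the paper's proof essentially step for step. It uses the same exact computation of $\braket{\psi_i}{\phi_j}=\tilde a_{ij}/\bar{s}$ via distinct padded labels and the flagged inverse promised sparse QROM, the same Schur-test bound $\|A-\tilde A\|\le s/B$, and the same line-by-line \T-count accounting. There is one harmless slip: with support size $2^n\bar{s}$, the additive term from \Cref{thm:Algorithm-promised-sparse-QROM} is $O(\log s\cdot(n+\log s))$ rather than $O(\log^2 s)$, but it is still absorbed into $O(\sqrt{2^n s\,n})$.
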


\begin{proof}
We first prove correctness. The row and column subroutines in
\algo{sparse-block-encoding-unitary}, from \lin{be-row-uniform} to
\lin{be-row-q} and from \lin{be-column-uniform} to \lin{be-column-inverse},
return the work registers \(K,L,M,T,F\) to \(\ket{0}\). Omitting these clean
registers and writing the remaining registers in the order \(X,Y,\Sigma,Q\),
they implement
\[
\begin{aligned}
    U_r\ket{i}_X\ket{0}_{Y\Sigma Q}
    &=
    \frac{1}{\sqrt{\bar{s}B}}
    \sum_{k=0}^{\bar{s}-1}\sum_{\sigma=0}^{B-1}
    \ket{i}_X\ket{r_{i,k}}_Y\ket{\sigma}_\Sigma\ket{1}_Q,\\
    U_c\ket{j}_Y\ket{0}_{X\Sigma Q}
    &=
    \frac{1}{\sqrt{\bar{s}B}}
    \sum_{\ell=0}^{\bar{s}-1}\sum_{\sigma=0}^{B-1}
    (-1)^{\tau_{c_{\ell,j},j}}
    \ket{c_{\ell,j}}_X\ket{j}_Y\ket{\sigma}_\Sigma
    \ket{\mathbf{1}[\sigma<m_{c_{\ell,j},j}]}_Q .
\end{aligned}
\]
By \lin{be-compose}, the block-encoding unitary is
\(U_{\rm BE}=U_r^\dagger U_c\). Its top-left block has entries
\[
    (\bra{i}\bra{0^a})U_r^\dagger U_c(\ket{j}\ket{0^a}).
\]
For $i,j\in[N]$, if $(i,j)\in\supp(A)$, then there is a unique
pair $(k,\ell)$ such that $r_{i,k}=j$ and $c_{\ell,j}=i$.
Therefore,
\begin{align*}
    (\bra{i}\bra{0^a})U_r^\dagger U_c(\ket{j}\ket{0^a})
    &=
    \frac{1}{\bar{s}B}
    \sum_{k,\ell=0}^{\bar{s}-1}\sum_{\sigma=0}^{B-1}
    (-1)^{\tau_{c_{\ell,j},j}}
    \braket{i, r_{i,k}, \sigma, 1}
    {c_{\ell,j}, j, \sigma, \mathbf{1}[\sigma<m_{c_{\ell,j},j}]} \\
    &=
    \frac{1}{\bar{s}B}
    \sum_{k,\ell=0}^{\bar{s}-1}\sum_{\sigma=0}^{B-1}
    (-1)^{\tau_{c_{\ell,j},j}}
    \mathbf{1}[r_{i,k}=j]\,
    \mathbf{1}[c_{\ell,j}=i]\,
    \mathbf{1}[\sigma<m_{c_{\ell,j},j}] \\
    &=
    \frac{1}{\bar{s}B}
    \sum_{\sigma=0}^{B-1}
    (-1)^{\tau_{ij}}
    \mathbf{1}[\sigma<m_{ij}]
    =
    \frac{m_{ij}(-1)^{\tau_{ij}}}{\bar{s}B}
    =
    \frac{\tilde a_{ij}}{\bar{s}}.
\end{align*}
If \((i,j)\notin\supp(A)\), then \(m_{ij}=0\). Thus the factor
\(\mathbf{1}[\sigma<m_{ij}]\) is zero, even if \((i,j)\) appears through
padded zero entries, and the overlap is zero. Therefore
$$
(\bra{0^a}\otimes I)U_{\rm BE}(\ket{0^a}\otimes I)=\tilde A/\bar{s},
$$
where $\tilde A=(\tilde a_{ij})$ is the entrywise approximation defined in
\eq{def-tilde-a}.

Since $|\tilde a_{ij}-a_{ij}|\le 2^{-b}$ for every entry, and
$A-\tilde A$ is still row- and column-$s$-sparse, every row and every column of
$A-\tilde A$ has at most $s$ nonzero entries, each of magnitude at most
$2^{-b}$. Hence
\(
    \|A-\tilde A\|_1\le s2^{-b}
\)
and
\(
    \|A-\tilde A\|_\infty\le s2^{-b}
\). Therefore,
$$
\|A-\tilde A\|
\le
\sqrt{\|A-\tilde A\|_1\|A-\tilde A\|_\infty}
\le
s2^{-b}
\le
\varepsilon_{\rm BE}.
$$
Hence $U_{\rm BE}$ is an
$(\bar{s}, O(n+\log s+b), \varepsilon_{\rm BE})$-block-encoding of $A$. Since
\(b=\lceil \log_2(4s/\varepsilon_{\rm BE})\rceil\), this is
$$
(\bar{s}, O(n+\log(s/\varepsilon_{\rm BE})), \varepsilon_{\rm BE}).
$$

We next analyze the \T\ count. In the row subroutine,
\lin{be-row-uniform} and \lin{be-row-q} cost no \T\ gates. The dense QROM in
\lin{be-row-dense} has table size \(2^n \bar{s}=O(2^n s)\) and message length \(n\),
so by \Cref{thm:LB-dense-qrom} it costs \(O(\sqrt{2^n s\,n})\). The promised
sparse QROM in \lin{be-row-inverse} has data support size \(2^n \bar{s}=O(2^n s)\) and
message length \(O(\log s)\), so by \Cref{thm:Algorithm-promised-sparse-QROM} it costs
\(O(\sqrt{2^n s\log s})\). Since \(\log s\le n\), the row subroutine has \T\
count \(O(\sqrt{2^n s\,n})\).

In the column subroutine, \lin{be-column-uniform} and
\lin{be-column-phase} cost no \T\ gates, while the comparator in
\lin{be-column-compare} costs \(O(b)\) \T\ gates by standard linear-size
adder constructions~\cite{gidney2018halvingcostof}. The dense QROM in
\lin{be-column-support} and the promised sparse QROM in
\lin{be-column-inverse} have the same total cost
\(O(\sqrt{2^n s\,n})\) as in the row subroutine. The dense value QROM in
\lin{be-column-value}, together with its inverse in
\lin{be-column-value-uncompute}, has table size \(2^n \bar{s}=O(2^n s)\) and message length
\(b+1\). By \Cref{thm:LB-dense-qrom}, these two lines cost
\(O(\sqrt{2^n s\,b}+b)\) \T\ gates. Hence, the column subroutine has total
\T\ count
\[
    O\br{ \sqrt{2^n s\,n} + \sqrt{2^n s\,b} + b }.
\]

Finally, substituting
\(b=\lceil \log_2(4s/\varepsilon_{\rm BE})\rceil
=O(\log(s/\varepsilon_{\rm BE}))\) gives the claimed \T\ count
\[
O\br{
    \sqrt{2^n s\,n}
    +
    \sqrt{2^n s\log(s/\varepsilon_{\rm BE})}
    +
    \log(s/\varepsilon_{\rm BE})
}.
\]
The same estimates also bound the number of qubits. The data and precision
registers contribute \(O(n+b)\) qubits. The QROMs in
\lin{be-row-dense} and \lin{be-column-support} use
\(
    O(\sqrt{2^n s\,n}+n+\log s)
\)
ancillas by \Cref{thm:LB-dense-qrom}. The QROM in
\lin{be-column-value}, and its inverse in \lin{be-column-value-uncompute}, use
\(
    O(\sqrt{2^n s\,b}+n+\log s+b)
\)
ancillas. The promised sparse QROMs in \lin{be-row-inverse} and
\lin{be-column-inverse} have support size \(2^n \bar{s}=O(2^n s)\) and message
length \(O(\log s)\), so \Cref{thm:Algorithm-promised-sparse-QROM} gives
\(
    O(\sqrt{2^n s\log s}+n+\log s)
\)
ancillas for each of these two calls. Substituting
\(b=O(\log(s/\varepsilon_{\rm BE}))\) gives the claimed qubit bound.
\end{proof}

We will use the following consequence of the construction of
Kim and Laakkonen~\cite[Theorem~6]{KimLaakkonen2025controlled}.
\begin{theorem}[Controlled Clifford+\T\ circuits]
\label{thm:controlled-clifford-t}
Let \(U\) be an \(n\)-qubit Clifford\,$+$\,\T\ circuit with \T\ count \(t\). Then the
controlled unitary
\(
    \ket{0}\bra{0}\otimes I+
    \ket{1}\bra{1}\otimes U
\)
can be implemented with \T\ count \(O(t+n)\).
\end{theorem}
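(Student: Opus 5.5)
The plan is to build the controlled circuit directly rather than control every gate of $U$ separately. Controlling each Clifford gate one by one would be too expensive, because a generic Clifford needs $\Theta(n^2/\log n)$ two-qubit gates. I will therefore handle the Clifford parts through a canonical layered form, and handle the $\T$ gates one at a time at constant cost each.

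\emph{Step 1: decompose $U$.} Write $U = C_t\,\T_{q_t}\,C_{t-1}\cdots \T_{q_1}\,C_0$, where each $C_j$ is an $n$-qubit Clifford unitary and $\T_{q}$ is a $\T$ gate on qubit $q$. Since the control acts on products factor by factor,
\[
    \ket0\bra0\otimes I+\ket1\bra1\otimes U=\prod_j \Lambda(C_j)\,\Lambda(\T_{q_j}),
\]
where $\Lambda(W)$ denotes the controlled version of $W$. As written this has $t+1$ controlled Cliffords, each of which will cost $O(n)$, giving only $O(tn)$. To fix this, I will first push all Cliffords to the end. Each conjugated $\T$ gate becomes a Pauli-axis gate $\T_P=\tfrac{I+P}{2}+e^{\i\pi/4}\tfrac{I-P}{2}$ for a Hermitian Pauli $P$. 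This gives
\[
    U=C\,\T_{P_t}\cdots\T_{P_1}
\]
for a single Clifford $C$, with no leftover global phase.

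\emph{Step 2: the $\T$ part.} For each $k$, pick a Clifford $V_k$ with $V_k P_k V_k^\dagger = Z_1$. Then $\T_{P_k}=V_k^\dagger \T_1 V_k$, and so $\Lambda(\T_{P_k})=V_k^\dagger\Lambda(\T_1)V_k$ with $V_k$ left uncontrolled. The gate $\Lambda(\T)$ is not exactly a Clifford$+\T$ gate without extra help, since it needs a $\sqrt{\T}$ phase. I will implement it with a clean ancilla instead: compute $c\wedge x_1$ into the ancilla with a Toffoli (constant $\T$ count), apply $\T$ to the ancilla, and uncompute. This costs $O(1)$ $\T$ gates per factor, so $O(t)$ in total.

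\emph{Step 3: the single controlled Clifford $\Lambda(C)$, which I expect to be the main obstacle.} The goal is $O(n)$ $\T$ gates. I will write $C$, up to a Pauli and a global phase, in a Bravyi--Maslov style canonical form with $O(1)$ layers. The layer types are: a Hadamard layer; a CNOT layer, which is a linear reversible map $x\mapsto Ax$; and a diagonal CZ/S layer, with phases $\i^{q(x)}$ for a quadratic form $q$.
\begin{itemize}
\item A Hadamard layer is controlled qubit-wise, using $O(1)$ $\T$ gates per controlled-$H$, so $O(n)$ per layer.
\item For any layer $W$ with $W\ket{0^n}=\ket{0^n}$ exactly (phase $1$), which covers both linear maps and diagonal quadratic phase layers, I will use the swap trick. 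Controlled on $c=0$, swap the data register with a fresh $n$-qubit register in $\ket{0^n}$ using $n$ Fredkin gates; apply $W$ uncontrolled; then swap back. When $c=0$, $W$ acts on $\ket{0^n}$ and leaves it unchanged, so the ancilla returns clean. This costs $O(n)$ $\T$ gates per layer.
\item The Pauli factor is handled by controlled Paulis, which are Clifford. The global phase $e^{\i\theta}$ is an eighth root of unity and becomes a power of $\T$ on the control qubit.
\end{itemize}
The delicate point is checking that the canonical form really reduces to this finite alphabet of layers, with the global phase tracked exactly. If that accounting becomes awkward, the fallback is to invoke \cite[Theorem~6]{KimLaakkonen2025controlled} directly for the Clifford part.

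\emph{Total.} Combining the three steps gives $O(t)+O(1)\cdot O(n)=O(t+n)$ $\T$ gates for the controlled unitary.
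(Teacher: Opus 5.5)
Your proof is correct, but it takes a different route from the paper. The paper gives no argument of its own: it cites the construction of Kim and Laakkonen \cite[Theorem~6]{KimLaakkonen2025controlled}, which controls an arbitrary Clifford$+$\T\ circuit with constant \T-depth overhead, and reads off the \T-count bound $O(t+n)$ as a consequence.

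Your argument is self-contained and elementary. Writing $U=C\,\T_{P_t}\cdots\T_{P_1}$ lets each controlled Pauli rotation be conjugated to $\Lambda(\T_1)$ by an uncontrolled Clifford. Each such factor then costs $O(1)$ \T\ gates via a Toffoli-computed ancilla, and all remaining control work sits in a single $\Lambda(C)$. A constant-layer canonical form handles that in $O(n)$: controlled Hadamards for the $H$ layers, and the swap trick for the CNOT and $S$/CZ layers, which fix $\ket{0^n}$ with phase exactly $1$.

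The step you flag as delicate is fine. The canonical-form layers are themselves products of $H$, $S$, CNOT and Paulis. So the leftover scalar is a scalar matrix in $\langle H,S,\mathrm{CNOT}\rangle$, hence of the form $e^{\i\pi k/4}I$, and controlling it costs at most one \T\ gate on the control qubit.

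The citation buys the paper brevity and the stronger \T-depth guarantee. Your version gives a transparent proof of exactly the \T-count bound that is used, at the price of $O(n)$ clean ancillas for the swap trick. That price is harmless here, because the paper applies the theorem with $n$ taken to be the full qubit count of the block-encoding circuit.
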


\begin{lemma}[Complex entries]
\label{lem:complex-sparse-block-encoding}
Let \(A\in\mathbb{C}^{2^n\times 2^n}\) be a row- and column-\(s\)-sparse matrix
with \(|a_{ij}|\le1\), and set \(\bar{s}=2^{\lceil\log s\rceil}\). There exists a
\((2\bar{s},O(n+\log(s/\varepsilon_{\rm BE})),\varepsilon_{\rm BE})\)-block-encoding
of \(A\) with \T\ count and qubit count both bounded by
\[
O\br{
    \sqrt{2^n s\,n}
    +
    \sqrt{2^n s\log(s/\varepsilon_{\rm BE})}
    +
    \log(s/\varepsilon_{\rm BE})
}.
\]
\end{lemma}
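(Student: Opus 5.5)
Write \(A=A_R+\i A_I\), where \(A_R=(\operatorname{Re}a_{ij})\) and \(A_I=(\operatorname{Im}a_{ij})\). Both are real, row- and column-\(s\)-sparse (their supports lie in \(\supp(A)\)), and have entries of magnitude at most \(1\). So \Cref{thm:block-encoding} applies to each of them, with precision \(\varepsilon_{\rm BE}/2\). We then combine the two resulting block encodings by a linear combination of unitaries (LCU) with one extra ancilla. The factor \(2\) in the normalization \(2\bar s\) is exactly the LCU overhead for the coefficient vector \((1,\i)\).

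\textbf{LCU combination.} Let \(U_R\) and \(U_I\) be \((\bar s,a,\varepsilon_{\rm BE}/2)\)-block-encodings of \(A_R\) and \(A_I\), with \(a=O(n+\log(s/\varepsilon_{\rm BE}))\). Pad the two ancilla counts to a common \(a\) using idle clean qubits. Introduce a selector qubit \(c\) and proceed in three steps:
\begin{itemize}
\item apply \(H\) to \(c\);
\item apply \(\ket0\bra0_c\otimes U_R+\ket1\bra1_c\otimes U_I\), followed by the Clifford phase \(S\) on \(c\), which attaches the factor \(\i\) to the \(U_I\) branch;
\item apply \(H\) to \(c\) again.
\end{itemize}
Projecting \(c\) and the other ancillas onto \(\ket0\) gives top-left block
\[
\tfrac12\bigl(\tilde A_R+\i\tilde A_I\bigr)/\bar s ,
\]
where \(\tilde A_R\) and \(\tilde A_I\) are the blocks produced by \Cref{thm:block-encoding}. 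Hence the circuit is a \((2\bar s,a+1,\cdot)\)-block-encoding of \(A\). For the error,
\[
\|A-(\tilde A_R+\i\tilde A_I)\|\le \|A_R-\tilde A_R\|+\|A_I-\tilde A_I\|\le\varepsilon_{\rm BE}.
\]

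\textbf{Cost of the select operation.} The only nontrivial point is implementing the select operation without doubling the ancilla space or incurring a large controlled-unitary overhead. The cleanest route is to avoid controlling whole block encodings. Run \algo{sparse-block-encoding-unitary} once, with the value QROMs in \lin{be-column-value} and \lin{be-column-value-uncompute} indexed by \((c,\ell,j)\). These QROMs load \((m^{(c)}_{c_{\ell,j},j},\tau^{(c)}_{c_{\ell,j},j})\), where \(c=0\) selects the real part and \(c=1\) the imaginary part. The support tables \(r_{i,k}\) and \(c_{\ell,j}\) are taken as the supports of \(A\), which contain the supports of both parts. Then \(U_r\) does not depend on \(c\), while \(U_c\) depends on \(c\) only through the value QROM.

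With this choice, the select unitary is exactly this single modified circuit. Its cost is as follows:
\begin{itemize}
\item The table size of the value QROM doubles to \(2\cdot 2^n\bar s\), so by \Cref{thm:LB-dense-qrom} it costs \(O(\sqrt{2^n s\,b}+b)\).
\item The remaining components (the dense support QROMs, the inverse promised sparse QROMs from \Cref{thm:Algorithm-promised-sparse-QROM}, and the \(O(b)\) comparator) are unchanged.
\end{itemize}

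\textbf{Correctness and totals.} The correctness computation of \Cref{thm:block-encoding} goes through verbatim branchwise in \(c\), giving overlap \(\tilde a^{(c)}_{ij}/\bar s\) on branch \(c\). Padded zero entries again contribute nothing because \(m^{(c)}_{ij}=0\) there. The \(T\) count and qubit count are therefore those of \Cref{thm:block-encoding} up to constant factors, with \(b=\lceil\log_2(8s/\varepsilon_{\rm BE})\rceil\). This yields the claimed bound, and only one additional ancilla is used.

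\textbf{Fallback.} If one prefers to use \(U_R\) and \(U_I\) as black boxes, \Cref{thm:controlled-clifford-t} gives controlled versions of each. The cost is \(O(t+\#\text{qubits})\), and by \Cref{thm:block-encoding} the qubit count is of the same order as the claimed bound, so the result is unchanged. This alternative is the step I would double-check most carefully: the additive qubit term in \Cref{thm:controlled-clifford-t} has to be absorbed, and that relies on the qubit-count bound of \Cref{thm:block-encoding}.
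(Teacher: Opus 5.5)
Your proposal is correct, but your main route differs from the paper's. The paper's proof is exactly your fallback. It takes $U_R$ and $U_I$ from \Cref{thm:block-encoding} as black boxes with error $\varepsilon_{\rm BE}/2$ each. It forms $(H\otimes I)\bigl(\ket0\bra0\otimes U_R+\ket1\bra1\otimes \i U_I\bigr)(H\otimes I)$ and pays for the two controlled calls with \Cref{thm:controlled-clifford-t}. The additive qubit-count term is absorbed using the qubit bound of \Cref{thm:block-encoding}, which is precisely the point you flagged.

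Your main route instead opens up \algo{sparse-block-encoding-unitary}. You take the row and column label tables from $\supp(A)$, which contains the supports of both parts. Then $U_r$, and every step of $U_c$ except the value QROM and its uncomputation, no longer depends on the selector. So the select operation factors as $(I\otimes U_r^\dagger)\bigl(\sum_{c\in\{0,1\}}\ket{c}\bra{c}\otimes U_c^{(c)}\bigr)$, and the controlled part costs only a value QROM with a doubled table.

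What each approach buys:
\begin{itemize}
\item Your version is self-contained: it needs only \Cref{thm:LB-dense-qrom}, not the external Kim--Laakkonen construction. It also shares the support QROMs and the promised-QROM erasures between the two branches, which saves a constant factor in \T\ count.
\item The paper's version is more modular. It works for any pair of block encodings whose qubit count is dominated by their \T\ count, and the same argument is reused for \Cref{thm:controlled-block-encoding}.
\end{itemize}

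Your bookkeeping is sound: the $S$ gate supplies the factor $\i$; $b=\lceil\log_2(8s/\varepsilon_{\rm BE})\rceil$ gives error at most $\varepsilon_{\rm BE}/2$ per part; and padded zeros contribute nothing because $m^{(c)}_{ij}=0$ off the support of the respective part.
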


\begin{proof}
Let \(U_R\) and \(U_I\) be the block encodings obtained by applying
\Cref{thm:block-encoding} to \(\operatorname{Re}(A)\) and
\(\operatorname{Im}(A)\), respectively, each with error
\(\varepsilon_{\rm BE}/2\). Thus, for the same value of \(a\),
\[
    (\bra{0^a}\otimes I)U_R(\ket{0^a}\otimes I)
    =\frac{\widetilde A_R}{\bar{s}},
    \qquad
    (\bra{0^a}\otimes I)U_I(\ket{0^a}\otimes I)
    =\frac{\widetilde A_I}{\bar{s}},
\]
where
\[
    \|\widetilde A_R-\operatorname{Re}(A)\|\le\varepsilon_{\rm BE}/2,
    \qquad
    \|\widetilde A_I-\operatorname{Im}(A)\|\le\varepsilon_{\rm BE}/2.
\]
Using one additional selector qubit, define
\[
    U_A
    =
    (H\otimes I)
    \left(
        \ket{0}\bra{0}\otimes U_R
        +
        \ket{1}\bra{1}\otimes iU_I
    \right)
    (H\otimes I).
\]
Then
\[
\begin{aligned}
    (\bra{0^{a+1}}\otimes I)U_A(\ket{0^{a+1}}\otimes I)=
    \frac{1}{2}
    (\bra{0^a}\otimes I)
    \left(U_R+iU_I\right)
    (\ket{0^a}\otimes I) =
    \frac{\widetilde A_R+i\widetilde A_I}{2\bar{s}}.
\end{aligned}
\]
Thus \(U_A\) is a \(2\bar{s}\)-normalized block encoding of \(A\) with error at most
\(\varepsilon_{\rm BE}\). The Hadamards and the factor \(i\) are Clifford gates,
and the selector construction uses one controlled call to each of \(U_R\) and
\(U_I\). By \Cref{thm:controlled-clifford-t}, each controlled real block
encoding has the same asymptotic \T\ count as in \Cref{thm:block-encoding},
because the qubit count is bounded by the same expression. Hence the two-term
combination has the claimed \T\ count and qubit count.
\end{proof}

\begin{theorem}[Controlled sparse block encoding]
\label{thm:controlled-block-encoding}
Let \(A\in\mathbb{C}^{2^n\times 2^n}\) be a row- and column-\(s\)-sparse matrix
with \(|a_{ij}|\le1\). Then the controlled
version of the block encoding in \Cref{lem:complex-sparse-block-encoding} can
be implemented with \T\ count
\[
O\br{
    \sqrt{2^n s\,n}
    +
    \sqrt{2^n s\log(s/\varepsilon_{\rm BE})}
    +
    \log(s/\varepsilon_{\rm BE})
}.
\]
\end{theorem}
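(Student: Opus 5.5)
The plan is to apply \Cref{thm:controlled-clifford-t} directly to the Clifford\,$+$\,\T\ circuit $U_A$ from \Cref{lem:complex-sparse-block-encoding}. That lemma gives a circuit whose \T\ count $t$ and total qubit count $q$ are both bounded by
\[
\Lambda := \sqrt{2^n s\,n}+\sqrt{2^n s\log(s/\varepsilon_{\rm BE})}+\log(s/\varepsilon_{\rm BE}).
\]
We view $U_A$ as a unitary on all $q$ qubits it touches: the system register, the $O(n+\log(s/\varepsilon_{\rm BE}))$ block-encoding ancillas, the selector qubit, and all clean QROM workspace. Every workspace register is returned to $\ket{0}$, so $U_A$ is a genuine $q$-qubit unitary. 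Applying \Cref{thm:controlled-clifford-t} with $n$ replaced by $q$ then gives a controlled implementation with \T\ count $O(t+q)=O(\Lambda)$.

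The one point to check is that \Cref{thm:controlled-clifford-t} needs a unitary circuit with no measurements and with ancillas treated as ordinary qubits. The construction in \algo{sparse-block-encoding-unitary} meets this. It uses the dense QROM of \Cref{thm:LB-dense-qrom}, the promised sparse QROM of \Cref{alg:promised-qrom}, comparators, Hadamards, and phase gates, and all of these are unitary Clifford\,$+$\,\T\ circuits. Clean workspace that is restored does no harm: since each ancilla returns to $\ket0$ on every input, the controlled circuit acts as $U_A$ on the relevant subspace whenever the control is $1$ and as the identity when it is $0$.

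The main obstacle, which is really an accounting issue, is that the additive term in \Cref{thm:controlled-clifford-t} depends on the number of qubits, not just on $t$. A QROM with large SELECT--SWAP workspace could in principle make this term dominate. This is exactly why \Cref{thm:block-encoding} and \Cref{lem:complex-sparse-block-encoding} bound the qubit count by the same expression $\Lambda$, so the additive $O(q)$ term is absorbed into $O(\Lambda)$.

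An alternative route avoids relying on the qubit bound. Control only the non-self-inverse parts, namely the uniform-superposition preparations and the phase gate, and leave the compute/uncompute QROM pairs uncontrolled. This works because conjugation $V W V^\dagger$ with only $W$ controlled yields controlled-$(VWV^\dagger)$. We would use this only if the direct route failed.
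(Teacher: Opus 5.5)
Your main argument is the paper's proof. You apply \Cref{thm:controlled-clifford-t} to the measurement-free circuit of \Cref{lem:complex-sparse-block-encoding}, whose \T\ count and qubit count are both $O(\Lambda)$, so the additive qubit term is absorbed and the controlled version costs $O(\Lambda)$.

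Drop the fallback route: it is not needed, and it is not justified as stated. The row and column QROM calls in $U_r^\dagger U_c$ use different tables and are not inverses of each other, so leaving them uncontrolled would not give the identity when the control is $0$.
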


\begin{proof}
By \Cref{lem:complex-sparse-block-encoding}, the uncontrolled block encoding
has \T\ count and qubit count both bounded by
\[
O\br{
    \sqrt{2^n s\,n}
    +
    \sqrt{2^n s\log(s/\varepsilon_{\rm BE})}
    +
    \log(s/\varepsilon_{\rm BE})
}.
\]
Applying \Cref{thm:controlled-clifford-t} gives the claimed bound.
\end{proof}

\subsubsection{Quantum singular value transformation}

We next present quantum singular value transformation (QSVT), introduced by
Gily\'en, Su, Low, and Wiebe~\cite{GSLW19}, as a representative application of
block encoding. QSVT generalizes the qubitization-based approach to optimal
Hamiltonian simulation of Low and Chuang~\cite{low2019hamiltonian}, supports
matrix inversion and related quantum matrix-arithmetic tasks, and has also
been used in quantum machine-learning settings such as regularized least
squares~\cite{Chakraborty2022QRLs}.

For a matrix \(A\) with singular value decomposition
\(A=\sum_i \sigma_i\ket{u_i}\bra{v_i}\), define the singular value
transformation of \(A\) with a degree-\(d\) polynomial \(P\) by
\[
    P^{(\mathrm{SV})}(A)
    =
    \begin{cases}
        \sum_i P(\sigma_i)\ket{u_i}\bra{v_i}, & \text{if } d \text{ is odd}, \\
        \sum_i P(\sigma_i)\ket{v_i}\bra{v_i}, & \text{if } d \text{ is even}.
    \end{cases}
\]
Given a block encoding of \(A\), QSVT implements a block encoding of
\(P^{(\mathrm{SV})}(A)\) using \(O(\deg(P))\) calls to the block encoding,
provided that \(P\) is bounded on \([-1,1]\) and has definite parity.

\begin{theorem}[QSVT,~{\cite[Corollary~11]{GSLW19}}]
\label{thm:qsvt}
Let \(A\) be an operator with \(\|A\|\le \alpha\), and let \(U\) be an
\((\alpha,a,0)\)-block-encoding of \(A\). Define
\(
    \Pi = \ket{0^a}\bra{0^a}\otimes I
\)
and
\(
    R_\Pi = 2\Pi-I.
\)
Let \(P\in \mathbb{R}[x]\) be a polynomial of degree \(d\) such that
\(|P(x)|\le 1\) for all \(x\in[-1,1]\) and
\(P(-x)=(-1)^dP(x)\). Then there exists a phase vector
\(\Phi=(\phi_1,\ldots,\phi_d)\in\mathbb{R}^d\) such that, defining
\begin{align}
\label{eq:qsvt-sequence}
    U_\Phi =
    \begin{cases}
        e^{i\phi_1 R_\Pi} U
        {\textstyle\prod_{j=1}^{(d-1)/2}}
        \left( e^{i\phi_{2j} R_\Pi} U^\dagger
        e^{i\phi_{2j+1} R_\Pi} U \right),
        & \text{if } d \text{ is odd}, \\
        {\textstyle\prod_{j=1}^{d/2}}
        \left( e^{i\phi_{2j-1} R_\Pi} U^\dagger
        e^{i\phi_{2j} R_\Pi} U \right),
        & \text{if } d \text{ is even},
    \end{cases}
\end{align}
and defining \(U_{-\Phi}\) analogously by negating all phases, we have
\[
\begin{aligned}
    (\bra{0^{a+1}}\otimes I)
    (H\otimes I)
    \left(
        \ket{0}\bra{0}\otimes U_\Phi
        +
        \ket{1}\bra{1}\otimes U_{-\Phi}
    \right)
    (H\otimes I)
    (\ket{0^{a+1}}\otimes I)
    =
    P^{(\mathrm{SV})}(A/\alpha).
\end{aligned}
\]
Hence, the unitary
\[
    (H\otimes I)
    \left(
        \ket{0}\bra{0}\otimes U_\Phi
        +
        \ket{1}\bra{1}\otimes U_{-\Phi}
    \right)
    (H\otimes I)
\]
is a \((1,a+1,0)\)-block-encoding of \(P^{(\mathrm{SV})}(A/\alpha)\).
\end{theorem}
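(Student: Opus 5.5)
The plan follows the standard quantum signal processing (QSP) route of \cite{GSLW19}. First reduce the action of $U_\Phi$ to a family of $2\times 2$ problems, one per singular value of $A$. Then invoke a QSP existence statement for the phases. Finally, use the $H$-conjugated selector to extract the real part.

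\textbf{Step 1: invariant subspaces.} Write $A/\alpha=\sum_i \sigma_i\ket{u_i}\bra{v_i}$ with $\sigma_i\in[0,1]$. Set $\Pi=\ket{0^a}\bra{0^a}\otimes I$ on both sides, with $\Pi U\Pi=\ket{0^a}\bra{0^a}\otimes A/\alpha$. Jordan's lemma (equivalently, the cosine--sine decomposition for the pair of projectors $\Pi$ and $U^\dagger\Pi U$) gives the following. For each $i$ with $0<\sigma_i<1$ there are orthonormal pairs $\{\ket{0^a}\ket{v_i},\ket{v_i^\perp}\}$ and $\{\ket{0^a}\ket{u_i},\ket{u_i^\perp}\}$ such that $U$ maps the first span to the second as the matrix
\[
\begin{pmatrix}\sigma_i & \sqrt{1-\sigma_i^2}\\ \sqrt{1-\sigma_i^2} & -\sigma_i\end{pmatrix}.
\]
The reflection $R_\Pi$ acts on each of these spans as $\mathrm{diag}(1,-1)$. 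The degenerate cases $\sigma_i\in\{0,1\}$, and the part of $\mathrm{Im}\,\Pi$ outside the span of the singular vectors, give invariant subspaces of dimension one. There the same formulas hold with $\sigma=0$ or $1$.

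\textbf{Step 2: reduction to QSP.} Restricted to these blocks, $e^{i\phi R_\Pi}$ is $e^{i\phi Z}$. Hence the alternating product in \eq{qsvt-sequence} becomes, on block $i$, a product
\[
e^{i\phi_1 Z}W(\sigma_i)e^{i\phi_2 Z}W(\sigma_i)\cdots,
\]
where $W(\sigma)$ is the reflection matrix above. An induction on $d$ shows that its top-left entry is a polynomial $P_\Phi(\sigma)$ of degree at most $d$ and parity $d\bmod 2$, with $|P_\Phi|\le1$ on $[-1,1]$. The alternation of $U$ and $U^\dagger$ matters here. For odd $d$ the sequence maps the right-singular blocks to the left-singular blocks, giving $\sum_i P_\Phi(\sigma_i)\ket{u_i}\bra{v_i}$. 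For even $d$ it maps the right-singular blocks to themselves, giving $\sum_i P_\Phi(\sigma_i)\ket{v_i}\bra{v_i}$. This matches the definition of $P^{(\mathrm{SV})}$, so $\Pi U_\Phi\Pi$ block-encodes $P_\Phi^{(\mathrm{SV})}(A/\alpha)$.

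\textbf{Step 3: real part via the selector.} Because $W(\sigma)$ is real, negating every phase complex-conjugates each factor of the product. Therefore $P_{-\Phi}=\overline{P_\Phi}$ for real arguments. Conjugating the selector by $H$ and postselecting the selector qubit on $0$ yields $\tfrac12(U_\Phi+U_{-\Phi})$ in the top-left block. This equals the block encoding of $(\operatorname{Re}P_\Phi)^{(\mathrm{SV})}(A/\alpha)$. It then suffices to find $\Phi$ with $\operatorname{Re}P_\Phi=P$.

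\textbf{Step 4: existence of phases (main obstacle).} The crux is the QSP characterization. A pair of complex polynomials $(P_\Phi,Q)$ arises from some phase sequence of length $d$ if and only if the following hold:
\begin{itemize}
\item $\deg P_\Phi\le d$ and $\deg Q\le d-1$;
\item $P_\Phi$ has parity $d$ and $Q$ has the opposite parity;
\item $|P_\Phi|^2+(1-x^2)|Q|^2=1$ on $[-1,1]$.
\end{itemize}
I would prove this by peeling off one phase and one $W$ factor at a time, using the degree-reduction induction.

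Given real $P$ with $|P|\le1$ and definite parity, I would complete it in two stages.
\begin{enumerate}
\item Choose a real polynomial $B$ of the same parity with $P^2+B^2\le1$, for instance $B=0$. Then $1-P^2-B^2$ is a nonnegative even polynomial on $[-1,1]$.
\item Apply a Fejér--Riesz / Lukács-type sum-of-squares factorization to write $1-P^2-B^2=(1-x^2)|Q|^2$ on $[-1,1]$ up to the required form. This may require splitting off $x^2$-parts and adjusting $B$ so the factorization has the correct parity.
\end{enumerate}
Setting $P_\Phi=P+iB$ then satisfies the characterization, which produces $\Phi$. The parity and degree bookkeeping in this factorization, together with the boundary behavior at $x=\pm1$, is the step I expect to require the most care.
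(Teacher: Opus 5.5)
The paper gives no proof of this statement; it imports it from \cite[Corollary~11]{GSLW19}. Your Steps~1--3 follow the same architecture as that argument and are fine: the $2\times2$ qubitization blocks, reflection-convention QSP, and extraction of $\operatorname{Re}P_\Phi$ from $U_{\pm\Phi}$ via the Hadamard-conjugated selector.

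The gap is in Step~4, which is where the content of the existence claim lies. You propose to fix $B$ first (e.g.\ $B=0$) and then factor $1-P^2-B^2=(1-x^2)|Q|^2$. This cannot work in general, because the right-hand side vanishes at $x=\pm1$ while $1-P(\pm1)^2$ need not. For example, take $P(x)=x/2$ with $d=1$: every completion has $B=\pm\tfrac{\sqrt3}{2}x$, i.e.\ $P_\Phi=e^{\pm i\pi/3}x$. So $B=\operatorname{Im}P_\Phi$ is not a free choice to be ``adjusted'' later. It must come out of the same decomposition that produces $Q$.

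The lemma you actually need is this. If $P\in\mathbb{R}[x]$ has degree at most $d$, parity $d \bmod 2$, and $|P|\le1$ on $[-1,1]$, then there are real polynomials $B$ (degree $\le d$, parity $d$) and $C$ (degree $\le d-1$, parity $d-1$) with $1-P^2=B^2+(1-x^2)C^2$. Then $P_\Phi=P+iB$ and $Q=iC$ satisfy your characterization.

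The parity requirement is the substance of this lemma, not bookkeeping. An unconstrained Lukács decomposition has the right degrees but mixed parities. Symmetrizing under $x\mapsto-x$ only yields $B_e^2+B_o^2+(1-x^2)(C_e^2+C_o^2)$, which has too many squares, half of them of the wrong parity. Either of the following two routes closes the gap.

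\emph{First route: factor in $x^2$.}
\begin{itemize}
\item Write $1-P(x)^2=\hat A(x^2)$ and factor $\hat A$ over its roots.
\item Write each factor of degree $j$ in $x^2$ as $a^2+(1-x^2)b^2$, with $a$ of parity $j$ and $b$ of parity $j-1$. For instance, $x^2+c=(\sqrt{1+c}\,x)^2+(1-x^2)c$ for $c\ge0$, and $s-x^2=(\sqrt{s-1}\,x)^2+(1-x^2)s$ for $s\ge1$.
\item Roots in $(0,1)$ have even multiplicity, so they contribute plain squares.
\item A conjugate pair gives $(x^2-s)(x^2-\bar s)=(\alpha x^2-|s|)^2+(1-x^2)(\alpha^2-1)x^2$, with $\alpha=|s|+|1-s|\ge1$.
\item Multiply the factors using $(a_1^2+tb_1^2)(a_2^2+tb_2^2)=(a_1a_2-tb_1b_2)^2+t(a_1b_2+a_2b_1)^2$ with $t=1-x^2$; this identity preserves the parity pattern.
\item Pad with $1=x^2+(1-x^2)$ to reach parity $d$.
\end{itemize}

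\emph{Second route: your Fejér--Riesz idea, set up correctly.} The function $1-P(\cos\theta)^2$ is a nonnegative cosine polynomial in $2\theta$ of degree $\le d$. So it equals $|h(e^{2i\theta})|^2$ for a real-coefficient polynomial $h$ of degree $\le d$. Expanding $e^{-id\theta}h(e^{2i\theta})=B(\cos\theta)+i\sin\theta\,C(\cos\theta)$ gives $B$ and $C$ with exactly the required degrees and parities.
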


We will also use the following robustness bound for singular value
transformation, which controls the error caused by applying QSVT to an
approximate block encoding.
\begin{lemma}[Robustness of singular value transformation, {\cite[Lemma~22]{GSLW19}}]
\label{lem:qsvt-robustness}
Let \(P\in\mathbb{C}[x]\) be a degree-\(d\) polynomial satisfying the
boundedness and parity assumptions needed for singular value transformation,
and let \(\|A\|,\|\tilde A\|\le 1\). Then
\[
    \big\|
        P^{(\mathrm{SV})}(A)-P^{(\mathrm{SV})}(\tilde A)
    \big\|
    \le
    4d\sqrt{\|A-\tilde A\|}.
\]
\end{lemma}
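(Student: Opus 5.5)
The plan is to realize both $P^{(\mathrm{SV})}(A)$ and $P^{(\mathrm{SV})}(\tilde A)$ as the top-left blocks of the \emph{same} QSVT circuit of \Cref{thm:qsvt}, run with the same phase vector $\Phi$ but with two different exact block encodings. The distance between the outputs is then controlled by the distance between the two block encodings. I will use exact $(1,1,0)$-block-encodings given by the standard unitary dilation
\[
    U_A=\begin{pmatrix} A & \sqrt{I-AA^\dagger}\\ \sqrt{I-A^\dagger A} & -A^\dagger\end{pmatrix},
\]
and define $U_{\tilde A}$ analogously. These are unitary whenever $\|A\|,\|\tilde A\|\le1$, and their top-left blocks are $A$ and $\tilde A$. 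By \Cref{thm:qsvt}, the same $\Phi$ (which depends only on $P$) gives
\[
    P^{(\mathrm{SV})}(A)=\Pi' W_A\Pi',\qquad P^{(\mathrm{SV})}(\tilde A)=\Pi' W_{\tilde A}\Pi',
\]
where $W_B$ denotes the Hadamard-conjugated controlled combination of $U_{B,\Phi}$ and $U_{B,-\Phi}$, and $\Pi'$ is the corresponding projector. For a complex $P$ satisfying the QSVT hypotheses, I would instead use the single sequence $U_\Phi$ directly; the argument below is unchanged.

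\textbf{Step 1: telescoping.} The sequence \eq{qsvt-sequence} contains exactly $d$ factors of $U$ or $U^\dagger$, interleaved with unitaries $e^{i\phi R_\Pi}$ that do not depend on $A$. A telescoping argument, using that all factors are unitary, gives
\[
    \|U_{A,\pm\Phi}-U_{\tilde A,\pm\Phi}\|\le d\,\|U_A-U_{\tilde A}\|.
\]
The controlled combination is block diagonal, so its norm difference is the maximum of the two branch differences. Hadamard conjugation and compressing by a projector do not increase the norm. Hence
\[
    \|P^{(\mathrm{SV})}(A)-P^{(\mathrm{SV})}(\tilde A)\|\le d\,\|U_A-U_{\tilde A}\|.
\]

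\textbf{Step 2: bounding $\|U_A-U_{\tilde A}\|$.} Write $\varepsilon=\|A-\tilde A\|$. I split the $2\times2$ block difference into its diagonal and off-diagonal parts. The diagonal part has norm $\max(\|A-\tilde A\|,\|A^\dagger-\tilde A^\dagger\|)=\varepsilon$. Since $\varepsilon\le\|A\|+\|\tilde A\|\le 2$, we get $\varepsilon\le\sqrt2\sqrt\varepsilon$. The off-diagonal part has norm equal to the larger of the two square-root differences. For these I use
\[
    \|AA^\dagger-\tilde A\tilde A^\dagger\|\le\|A\|\,\|A^\dagger-\tilde A^\dagger\|+\|A-\tilde A\|\,\|\tilde A^\dagger\|\le 2\varepsilon,
\]
and the same bound for $A^\dagger A$. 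I then apply the operator inequality $\|\sqrt X-\sqrt Y\|\le\sqrt{\|X-Y\|}$ for positive semidefinite $X,Y$, taking $X=I-AA^\dagger$ and $Y=I-\tilde A\tilde A^\dagger$. This bounds the off-diagonal part by $\sqrt{2\varepsilon}$. Altogether
\[
    \|U_A-U_{\tilde A}\|\le 2\sqrt2\,\sqrt\varepsilon\le 4\sqrt\varepsilon,
\]
which combined with Step 1 gives the claim.

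\textbf{Main obstacle.} The step requiring care is the square-root Hölder estimate $\|\sqrt X-\sqrt Y\|\le\sqrt{\|X-Y\|}$. I would prove it using operator monotonicity of $t\mapsto\sqrt t$. With $\eta=\|X-Y\|$ we have $X\le Y+\eta I$, so
\[
    \sqrt X\le\sqrt{Y+\eta I}\le\sqrt Y+\sqrt\eta\, I.
\]
The last inequality holds because $\sqrt{Y+\eta I}$ and $\sqrt Y+\sqrt\eta I$ commute and satisfy $\sqrt{y+\eta}\le\sqrt y+\sqrt\eta$ on the spectrum. Symmetrizing in $X$ and $Y$ gives $-\sqrt\eta I\le\sqrt X-\sqrt Y\le\sqrt\eta I$, which is the desired norm bound. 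Everything else is routine unitary telescoping.
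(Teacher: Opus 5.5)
Your proof is correct. It is essentially the standard argument behind the cited result: the paper gives no proof of its own and imports the statement from \cite[Lemma~22]{GSLW19}, whose argument likewise feeds the unitary dilations of $A$ and $\tilde A$, as exact block encodings, into the same phase sequence, telescopes over the $d$ uses of $U$ and $U^\dagger$, and controls $\|\sqrt{I-AA^\dagger}-\sqrt{I-\tilde A\tilde A^\dagger}\|$ through operator monotonicity of the square root. One point should be stated explicitly: the phases $\Phi$ depend only on $P$. This holds in GSLW, but the paper's restatement in \Cref{thm:qsvt} quantifies $\Phi$ after $A$ and $U$, and it is exactly what lets both circuits share $\Phi$.
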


We will use two elementary implementation costs below. A reflection about
\(\ket{0^a}\) can be implemented by a multi-controlled Toffoli and Clifford
gates with \(O(a)\) \T\ gates~\cite[Corollary~7.4]{adriano1995elementarygates}.
A controlled single-qubit rotation with precision \(\rho\) can be synthesized
using \(O(\log(1/\rho))\) \T\ gates~\cite{RossSelinger2014}.

Combining \Cref{lem:complex-sparse-block-encoding,thm:controlled-block-encoding}
with QSVT gives the following \T-count upper bound for a sparse matrix.
In this and the following applications, matrix entries are approximated to the
precision required by the final target error, rather than treated as a separate
fixed-width input parameter.
\begin{lemma}\label{lem:block-encoding-for-QSVT}
Given a row- and column-\(s\)-sparse matrix
\(A \in \mathbb{C}^{2^n \times 2^n}\) with \(\|A\|\le 1\), let
\(P\in \mathbb{R}[x]\) be a degree-\(d\)
polynomial such that \(|P(x)|\le 1\) for all \(x\in[-1,1]\) and
\(P(-x)=(-1)^dP(x)\). Let \(\alpha\) be the normalization factor in the
complex sparse block encoding of \(A\), so \(\alpha=\Theta(s)\). Then there is a
\((1,O(n+\log(sd/\varepsilon)),\varepsilon)\)-block-encoding of
\(P^{(\mathrm{SV})}(A/\alpha)\) with \T\ count
\[
    O\left(d \left(
        \sqrt{2^n s(n+\log(sd/\varepsilon))}+\log(sd/\varepsilon)
    \right)\right).
\]
\end{lemma}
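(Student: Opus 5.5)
We obtain \Cref{lem:block-encoding-for-QSVT} by feeding an approximate sparse block encoding into the QSVT circuit of \Cref{thm:qsvt} and controlling the resulting error with \Cref{lem:qsvt-robustness}.

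\textbf{Step 1: the approximate block encoding.} Apply \Cref{lem:complex-sparse-block-encoding} with precision parameter $\varepsilon_{\rm BE}$, to be fixed later. This gives a unitary $U$ with $(\bra{0^a}\otimes I)U(\ket{0^a}\otimes I)=\tilde A/\alpha$, where $\alpha=2\bar{s}=\Theta(s)$, $a=O(n+\log(s/\varepsilon_{\rm BE}))$ and $\|A-\tilde A\|\le\varepsilon_{\rm BE}$. So $U$ is an exact $(\alpha,a,0)$-block-encoding of $\tilde A$. Since $\tilde A/\alpha$ is a block of a unitary, $\|\tilde A/\alpha\|\le1$, and the hypothesis of \Cref{thm:qsvt} holds for $\tilde A$ with no extra argument. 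Also $\|A/\alpha\|\le 1/\alpha\le1$.

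\textbf{Step 2: the QSVT circuit.} Apply \Cref{thm:qsvt} to $U$. We obtain an exact $(1,a+1,0)$-block-encoding of $P^{(\mathrm{SV})}(\tilde A/\alpha)$ with the phase sequence $\Phi$.

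\textbf{Step 3: error.} By \Cref{lem:qsvt-robustness},
\[
\bigl\|P^{(\mathrm{SV})}(A/\alpha)-P^{(\mathrm{SV})}(\tilde A/\alpha)\bigr\|\le 4d\sqrt{\varepsilon_{\rm BE}/\alpha}\le 4d\sqrt{\varepsilon_{\rm BE}}.
\]
Choosing $\varepsilon_{\rm BE}=\varepsilon^2/(64d^2)$ gives total error at most $\varepsilon/2$. This leaves $\varepsilon/2$ for rotation synthesis in Step 4. With this choice, $\log(s/\varepsilon_{\rm BE})=O(\log(sd/\varepsilon))$, so the ancilla count is $O(n+\log(sd/\varepsilon))$, as claimed.

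\textbf{Step 4: \T\ count.} The final circuit is
\[
(H\otimes I)\bigl(\ket0\bra0\otimes U_\Phi+\ket1\bra1\otimes U_{-\Phi}\bigr)(H\otimes I).
\]
\begin{itemize}[leftmargin=*]
\item \emph{Calls to $U$.} We avoid controlling $U$ itself. $U_\Phi$ and $U_{-\Phi}$ use the same sequence of $U,U^\dagger$ calls and differ only in the phase rotations $e^{\pm i\phi_j R_\Pi}$. So the selector qubit needs to control only these rotations, and $U$ and $U^\dagger$ are applied uncontrolled, $d$ times in total. Each call costs $O(\sqrt{2^n s(n+\log(sd/\varepsilon))}+\log(sd/\varepsilon))$ by \Cref{lem:complex-sparse-block-encoding}. (Alternatively, \Cref{thm:controlled-block-encoding} gives the same bound for controlled calls.)
\item \emph{Phase rotations.} Each rotation $e^{\pm i\phi_j R_\Pi}$, conditioned on the selector, is implemented by computing the flag $\mathbf 1[\text{ancillas}=0^a]$ with a multi-controlled Toffoli, using $O(a)$ \T\ gates. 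We then apply a controlled single-qubit $Z$-rotation on that flag and uncompute the flag.
\item \emph{Rotation precision.} Synthesizing each rotation to precision $\varepsilon/(4d)$ costs $O(\log(d/\varepsilon))$ \T\ gates by \cite{RossSelinger2014}. The $O(d)$ synthesis errors add up to at most $\varepsilon/2$ in operator norm.
\end{itemize}

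\textbf{Total.} The $d$ block-encoding calls contribute the leading term. The $d$ rotations contribute $O(d(a+\log(d/\varepsilon)))=O(d(n+\log(sd/\varepsilon)))$, which is absorbed, since $n\le\sqrt{2^n s n}$. This gives the stated bound
\[
O\bigl(d(\sqrt{2^n s(n+\log(sd/\varepsilon))}+\log(sd/\varepsilon))\bigr).
\]

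The main points needing care are the squared precision $\varepsilon_{\rm BE}\sim\varepsilon^2/d^2$ forced by the square root in \Cref{lem:qsvt-robustness}, which only changes logarithmic factors by a constant, and the observation that only the phase rotations, not $U$, need to be controlled.
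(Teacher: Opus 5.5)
Your proposal is correct and follows the paper's proof: take $\varepsilon_{\rm BE}=\Theta(\varepsilon^2/d^2)$, apply \Cref{lem:qsvt-robustness} to the exactly encoded $\tilde A/\alpha$, and pay for $O(d)$ block-encoding calls plus $O(d)$ operators $e^{i\phi R_\Pi}$, each built from a multi-controlled Toffoli and one synthesized rotation. The only difference is minor: you observe that the selector needs to control only the phase rotations, whereas the paper allows controlled calls via \Cref{thm:controlled-block-encoding}; both give the same bound.
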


\begin{proof}
Set \(\delta_{\rm BE}=O(\varepsilon^2/d^2)\). By
\Cref{lem:complex-sparse-block-encoding,thm:controlled-block-encoding}, one
controlled or uncontrolled use of the sparse block encoding of \(A\) with error
\(\delta_{\rm BE}\) has \T\ count
\[
    O\left(
        \sqrt{2^n s(n+\log(sd/\varepsilon))}+\log(sd/\varepsilon)
    \right),
\]
and uses \(O(n+\log(sd/\varepsilon))\) ancilla qubits. Let \(\tilde A/\alpha\) be
the exact matrix encoded by this unitary. Then
\(\|A-\tilde A\|\le\delta_{\rm BE}\), so
\(\|A/\alpha-\tilde A/\alpha\|\le\delta_{\rm BE}/\alpha\le\delta_{\rm BE}\).
Thus
\Cref{lem:qsvt-robustness} implies
\[
    \big\|
        P^{(\mathrm{SV})}(A/\alpha)
        -
        P^{(\mathrm{SV})}(\tilde A/\alpha)
    \big\|
    =
    O(\varepsilon).
\]
By \Cref{thm:qsvt}, QSVT uses \(O(d)\) calls to the block encoding and its
inverse, together with \(O(d)\) unitaries of the form \(e^{i\phi R_\Pi}\), where
\(\Pi=\ket{0^a}\bra{0^a}\otimes I\). To implement \(e^{i\phi R_\Pi}\), compute
whether the \(a\) block-encoding ancillas are in \(\ket{0^a}\) using a
multi-controlled Toffoli, apply one controlled single-qubit rotation, and
uncompute. Using the elementary costs above and synthesizing the rotation to
precision \(O(\varepsilon/d)\), each \(e^{i\phi R_\Pi}\) costs
\(O(n+\log(sd/\varepsilon))\) \T\ gates, which is bounded by the displayed per-call
cost. Multiplying by \(O(d)\) gives the claimed bound.
\end{proof}

\subsubsection{Sparse Hamiltonian simulation}
The following theorem follows by substituting our sparse block encoding into
the Hamiltonian-simulation theorem of Gily\'en, Su, Low, and
Wiebe~\cite[Theorem~58]{GSLW19}.
\begin{theorem}[Sparse Hamiltonian simulation]
\label{thm:sparse-hamiltonian-simulation}
Let \(H\in\mathbb{C}^{2^n\times 2^n}\) be a Hermitian row- and
column-\(s\)-sparse matrix with \(|H_{ij}|\le 1\). For any
\(t\in\mathbb{R}\) with \(|t|\ge 1\), and any \(0<\varepsilon<1\), there is a
\((1,O(n+\log(s|t|/\varepsilon)),\varepsilon)\)-block-encoding of \(e^{-iHt}\) with
\T\ count
\[
    O\left(
        (s|t|+\log(1/\varepsilon))
        \left(
            \sqrt{2^n s(n+\log(s|t|/\varepsilon))}
            + \log\frac{s|t|}{\varepsilon}
        \right)
    \right).
\]
\end{theorem}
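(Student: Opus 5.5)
The plan is to follow the qubitization-based Hamiltonian simulation of \cite[Theorem~58]{GSLW19}, substituting the sparse block encoding of \Cref{lem:complex-sparse-block-encoding} and its controlled version from \Cref{thm:controlled-block-encoding}. First, apply \Cref{lem:complex-sparse-block-encoding} to \(H\) with precision \(\varepsilon_{\rm BE}\), to be fixed later. This gives a \((\alpha,a,\varepsilon_{\rm BE})\)-block-encoding \(U_H\) with \(\alpha=2\bar{s}=\Theta(s)\) and \(a=O(n+\log(s/\varepsilon_{\rm BE}))\). Because \(H\) is Hermitian, we can arrange for the encoded matrix \(\tilde H/\alpha\) to be Hermitian as well; the simplest way is to symmetrize the rounding so that \(\tilde a_{ji}=\overline{\tilde a_{ij}}\). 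Alternatively, we can rely on the version of \cite[Theorem~58]{GSLW19} that accepts a general block encoding of a Hermitian operator, using \(\|H-\tilde H\|\le\varepsilon_{\rm BE}\). Note that \(\|H\|\le s\) by the row/column sum bound, so \(\|H/\alpha\|\le 1\).

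Next, implement \(e^{-iHt}=e^{-i(H/\alpha)(\alpha t)}\). We use QSVT (\Cref{thm:qsvt}) with the Jacobi--Anger polynomial approximations of \(\cos(\alpha t x)\) and \(\sin(\alpha t x)\), which have definite parity and degree \(d=O(\alpha|t|+\log(1/\varepsilon))=O(s|t|+\log(1/\varepsilon))\). The two pieces are combined by a linear combination of unitaries, followed by one round of oblivious amplitude amplification to restore normalization \(1\). This is exactly the structure of \cite[Theorem~58]{GSLW19}, which uses \(O(d)\) controlled applications of \(U_H\) and \(U_H^\dagger\). It also uses \(O(d)\) reflections \(e^{i\phi R_\Pi}\) together with \(O(1)\) extra ancillas and Clifford/\T\ overhead.

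For the error analysis, we need a simulation error \(\varepsilon\) relative to \(e^{-iHt}\). The approximation error in \(H\) propagates to time evolution with size \(|t|\,\|H-\tilde H\|\) (Duhamel). Alternatively, the Hermitian-perturbation bound inside Theorem~58 of \cite{GSLW19} gives \(O(|t|\varepsilon_{\rm BE})\), which avoids the square-root loss of \Cref{lem:qsvt-robustness}. Choosing \(\varepsilon_{\rm BE}=\Theta(\varepsilon/|t|)\) gives \(\log(s/\varepsilon_{\rm BE})=O(\log(s|t|/\varepsilon))\). The polynomial and phase-rotation errors are set to \(\varepsilon/(Cd)\), so each rotation costs \(O(\log(d/\varepsilon))=O(\log(s|t|/\varepsilon))\) \T\ gates \cite{RossSelinger2014}. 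Each reflection also costs \(O(a)=O(n+\log(s|t|/\varepsilon))\) \T\ gates via a multi-controlled Toffoli.

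Combining these, each of the \(O(d)\) steps costs, by \Cref{thm:controlled-block-encoding},
\[
O\br{\sqrt{2^n s(n+\log(s|t|/\varepsilon))}+\log(s|t|/\varepsilon)}
\]
\T\ gates. The reflection and rotation costs are dominated by this per-step bound, since \(n\le\sqrt{2^n s n}\). Multiplying by \(d=O(s|t|+\log(1/\varepsilon))\) gives the claimed total. The ancilla count is \(a+O(1)=O(n+\log(s|t|/\varepsilon))\).

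The main obstacle is the error propagation. If we used the generic \Cref{lem:qsvt-robustness}, we would need \(\varepsilon_{\rm BE}=O(\varepsilon^2/d^2)\). That still only changes the logarithmic term by a constant factor, since \(\log(sd^2/\varepsilon^2)=O(\log(s|t|/\varepsilon))\) when \(|t|\ge1\). So even this fallback yields the stated bound, and I would take it if the Hermitian symmetrization proves awkward.
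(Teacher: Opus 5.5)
Your proposal is correct and follows essentially the same route as the paper: plug the controlled complex sparse block encoding (normalization $\alpha=\Theta(s)$, precision $\Theta(\varepsilon/|t|)$) into \cite[Theorem~58]{GSLW19}, bound the encoding error by $\|e^{-iHt}-e^{-i\tilde H t}\|\le |t|\,\|H-\tilde H\|$, and absorb the $O(q)$ reflections and synthesized rotations into the per-query cost. The paper uses this bound without checking that $\tilde H$ is Hermitian, but your concern is settled automatically and no extra symmetrization is needed: the rounding $m_{ij}=\lfloor B|a_{ij}|\rfloor$ with sign bits keeps the rounded $\operatorname{Re}(H)$ symmetric and the rounded $\operatorname{Im}(H)$ antisymmetric, so the encoded $\tilde H$ is Hermitian.
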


\begin{proof}
Let \(\alpha\) be the normalization factor of the sparse block encoding of
\(H\), so \(\alpha\le 4s\), and set \(\tau=\alpha |t|=O(s|t|)\). In
Theorem~58, Gily\'en, Su, Low, and
Wiebe~\cite{GSLW19} give a block encoding of \(e^{-iHt}\) using
\[
    q=
    O\left(
        \tau+
        \frac{\log(1/\varepsilon)}
        {\log\!\left(e+\log(1/\varepsilon)/\tau\right)}
    \right)
    =
    O(s|t|+\log(1/\varepsilon))
\]
queries to the controlled block encoding of \(H\) and its inverse,
plus \(O(q(n+\log(q/\varepsilon)))\) additional one- and two-qubit gates.

We implement each controlled query using
\Cref{thm:controlled-block-encoding}. Choose the block-encoding error
\(\delta_{\rm BE}=O(\varepsilon/|t|)\). Then
\(\log(s/\delta_{\rm BE})=O(\log(s|t|/\varepsilon))\), and one controlled query has
\T\ count
\[
    O\left(
        \sqrt{2^n s(n+\log(s|t|/\varepsilon))}
        +
        \log(s|t|/\varepsilon)
    \right).
\]
This choice also controls the error from replacing \(H\) by the exactly
encoded matrix \(\tilde H\): since
\(\|e^{-iHt}-e^{-i\tilde Ht}\|\le |t|\,\|H-\tilde H\|\), this contribution is
\(O(\varepsilon)\).
The additional operations in Theorem~58 consist of \(O(q)\) reflections about
block-encoding ancillas and controlled single-qubit rotations. Using the
elementary costs above, and synthesizing each rotation to precision
\(O(\varepsilon/q)\), these operations contribute \(O(q(n+\log(q/\varepsilon)))\) \T\
gates. This is absorbed in
\[
    O\left(
        q\left(
            \sqrt{2^n s(n+\log(s|t|/\varepsilon))}
            + \log(s|t|/\varepsilon)
        \right)
    \right),
\]
because \(n\le \sqrt{2^n s n}\). Substituting the bound on \(q\) gives the
claimed \T\ count.
\end{proof}

\subsubsection{Solving sparse linear systems}

We next estimate the \T\ count obtained by implementing the primitives in the
quantum linear-system solver of Costa, An, Sanders, Su, Babbush, and
Berry~\cite{Costa2022optimalQLSS} with dense state preparation and our sparse
block-encoding construction.

\begin{theorem}[Sparse quantum linear-system solving]
\label{thm:sparse-qlss}
Let $A\in\mathbb{C}^{2^n\times 2^n}$ be an invertible row- and
column-$s$-sparse matrix with $\|A\|\le 1$ and $\|A^{-1}\|=\kappa$. Let
\(\ket{b}\) be an arbitrary classically specified \(n\)-qubit input state.
For any \(0<\varepsilon<1\), the normalized solution state
\(A^{-1}\ket{b}/\|A^{-1}\ket{b}\|\) can be prepared to error \(\varepsilon\)
with \T\ count
\begin{align*}
    O\left(
        s\kappa\log(1/\varepsilon)
        \left(
            \sqrt{2^n s(n+\log(s\kappa/\varepsilon))}
            +\log(s\kappa/\varepsilon)
        \right)
    \right).
\end{align*}
\end{theorem}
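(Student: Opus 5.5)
We instantiate the optimal linear-system solver of Costa, An, Sanders, Su, Babbush, and Berry~\cite{Costa2022optimalQLSS} with our primitives and count \T\ gates per query. That solver, given a block encoding of \(A/\alpha\) and a state-preparation unitary for \(\ket{b}\), prepares the normalized solution to error \(\varepsilon\). It uses \(q=O(\kappa'\log(1/\varepsilon))\) queries to the controlled block encoding and its inverse, where \(\kappa'\) is the condition number of the encoded matrix. It also uses \(O(\kappa'\log(1/\varepsilon))\) calls to the state preparation of \(\ket{b}\), plus \(O(q)\) reflections and rotations. The plan is to substitute the controlled sparse block encoding of \Cref{thm:controlled-block-encoding} and the dense state preparation of \Cref{thm:dense-state-prepare}.

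\textbf{Normalization and condition number.} The block encoding of \Cref{lem:complex-sparse-block-encoding} encodes \(\tilde A/\alpha\) with \(\alpha=2\bar s=\Theta(s)\). We have \(\|\tilde A/\alpha\|\le 1\), and the smallest singular value is at least roughly \(1/(\alpha\kappa)\). So the effective condition number is \(\kappa'=O(s\kappa)\), and the query count is \(q=O(s\kappa\log(1/\varepsilon))\). One detail: \(\|A\|\le1\) does not directly give the entry bound \(|a_{ij}|\le1\), but every entry satisfies \(|a_{ij}|\le\|A\|\le1\), so the lemma applies.

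\textbf{Error budget.} This is the step needing care. Choose \(\varepsilon_{\rm BE}=c\,\varepsilon/\kappa\) for a small constant \(c\), which keeps \(\tilde A\) invertible with \(\|\tilde A^{-1}\|\le 2\kappa\). By the standard perturbation bound for normalized solutions,
\[
\Bigl\|\tfrac{A^{-1}b}{\|A^{-1}b\|}-\tfrac{\tilde A^{-1}b}{\|\tilde A^{-1}b\|}\Bigr\|\le O\bigl(\kappa\|A-\tilde A\|\bigr).
\]
The normalized-solution error is therefore \(O(\varepsilon)\). With this choice,
\[
\log(s/\varepsilon_{\rm BE})=O(\log(s\kappa/\varepsilon)).
\]
By \Cref{thm:controlled-block-encoding}, one controlled query then costs
\[
O\bigl(\sqrt{2^n s(n+\log(s\kappa/\varepsilon))}+\log(s\kappa/\varepsilon)\bigr)
\]
\T\ gates. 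Errors from approximately synthesizing the rotations in the solver (the adiabatic walk steps and the filtering step) are handled by allotting precision \(\varepsilon/q\) to each. Each rotation then costs \(O(\log(q/\varepsilon))\) \T\ gates, and each reflection about the \(O(n+\log(s\kappa/\varepsilon))\) ancillas costs linearly many \T\ gates. Both are dominated by the per-query cost, as in the proof of \Cref{thm:sparse-hamiltonian-simulation}.

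\textbf{State preparation of \(\ket b\).} Each call uses \Cref{thm:dense-state-prepare} with precision \(\varepsilon/q\), at cost
\[
O\bigl(\sqrt{2^n\log(q/\varepsilon)}+\log(q/\varepsilon)\bigr).
\]
This is at most the per-query block-encoding cost, since \(s\ge1\) and \(\log(q/\varepsilon)=O(\log(s\kappa/\varepsilon))\) up to a \(\log\log\) factor that is absorbed. The controlled version of state preparation, if the solver requires one, is handled by \Cref{thm:controlled-clifford-t} at no asymptotic overhead. Multiplying the per-query cost by \(q=O(s\kappa\log(1/\varepsilon))\) gives the claimed bound.

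The main obstacle is verifying that the error accumulation of the Costa et al.\ solver tolerates per-query errors \(\varepsilon_{\rm BE}/\alpha\) without degrading the \(\kappa'\log(1/\varepsilon)\) scaling. If that accounting is delicate, the fallback is to treat the solver as an exact-query algorithm. Its output then deviates by at most \(q\) times the diamond-norm error per query. Taking \(\varepsilon_{\rm BE}=\varepsilon/(s\kappa q)\) still gives only logarithmic dependence, namely
\[
\log(s/\varepsilon_{\rm BE})=O(\log(s\kappa/\varepsilon)),
\]
so the final bound is unchanged.
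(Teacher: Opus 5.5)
Your proposal is correct and follows essentially the same route as the paper. You instantiate Costa et al.\ with the controlled sparse block encoding, whose normalization is $\Theta(s)$, so the effective condition number is $O(s\kappa)$. You choose the block-encoding error $\Theta(\varepsilon/\kappa)$ using the perturbation bound $\|\tilde x-x\|\le\|\tilde A^{-1}\|\,\|A-\tilde A\|\,\|x\|$. You then absorb the costs of state preparation, reflections, and rotations into the per-query cost.

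The one difference is how you handle the $\ket b$ preparation error. You budget it per call at precision $\varepsilon/q$ via a hybrid argument. The paper instead treats $\ket{\tilde b}$ as a perturbed right-hand side with error $\Theta(\varepsilon/\kappa)$. Both give the same logarithmic factors.

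The ``main obstacle'' you flag does not arise. The construction in \Cref{lem:complex-sparse-block-encoding} is an exact block encoding of $\tilde A/\alpha$, so the solver sees an exact oracle and there is no per-query error to accumulate. Your main perturbation argument already covers this case, so the fallback is unnecessary.
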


\begin{proof}
Costa, An, Sanders, Su, Babbush, and Berry~\cite[Theorem~11]{Costa2022optimalQLSS}
solve the linear-system problem using
\(O(\tau\log(1/\varepsilon))\) oracle calls, where \(\tau\) is the condition
number of the block-encoded matrix. Their theorem counts calls to the
block encoding of \(A\), to the state-preparation oracle for \(\ket b\), and to
the corresponding inverse and controlled oracles. In our implementation the
block encoding normalizes \(A\) by a factor \(\alpha\le 4s\), so
\(\tau=\|(A/\alpha)^{-1}\|=O(s\kappa)\).

Choose the sparse block-encoding error and the state-preparation error to be
\(\eta_A=\Theta(\varepsilon/\kappa)\) and
\(\eta_b=\Theta(\varepsilon/\kappa)\), respectively. Suppose the implemented
block encoding encodes \(\tilde A\) with \(\|A-\tilde A\|\le\eta_A\), and the
implemented state-preparation oracle prepares \(\ket{\tilde b}\) with
\(\|\ket{\tilde b}-\ket b\|\le\eta_b\), after choosing the global phase of
\(\ket{\tilde b}\). This vector-norm guarantee follows, up to a constant
factor, from the trace-distance guarantee for pure states. Then \(\tilde A\)
is invertible for a
sufficiently small constant in \(\eta_A\), and
\(\|\tilde A^{-1}\|=O(\kappa)\). Writing
\(x=A^{-1}\ket b\) and \(\tilde x=\tilde A^{-1}\ket{\tilde b}\), we have
\[
    \|\tilde x-x\|
    \le
    \|\tilde A^{-1}\|\,\|\ket{\tilde b}-\ket b\|
    +
    \|\tilde A^{-1}\|\,\|A-\tilde A\|\,\|x\|.
\]
Since \(\|A\|\le1\), we have \(\|x\|\ge1\). Hence
\[
    \|\tilde x-x\|
    \le
    O(\kappa\eta_b+\kappa\eta_A\|x\|)
    =
    O(\varepsilon)\|x\|.
\]
It follows that
\[
    \left\|
        \frac{\tilde x}{\|\tilde x\|}
        -
        \frac{x}{\|x\|}
    \right\|
    =
    O(\varepsilon).
\]

We now bound the \T\ cost of these oracles. By
\Cref{thm:controlled-block-encoding}, a controlled block encoding of \(A\) with
error \(O(\eta_A)\) has \T\ count
\[
    C_A=
    O\!\left(
        \sqrt{2^n s(n+\log(s\kappa/\varepsilon))}
        +\log(s\kappa/\varepsilon)
    \right).
\]
Let \(U_b\ket{0^n}=\ket b\). By the dense state-preparation theorem
\Cref{thm:dense-state-prepare} and \Cref{thm:controlled-clifford-t},
controlled or uncontrolled uses of \(U_b\) cost
\[
    C_b=
    O\!\left(
        \sqrt{2^n\log(s\kappa/\varepsilon)}
        +\log(s\kappa/\varepsilon)
    \right)
\]
\T\ gates. Since \(s\ge1\), \(C_b=O(C_A)\). The non-oracle
operations in the construction of Costa, An, Sanders, Su, Babbush, and Berry
consist of
\(O(\tau\log(1/\varepsilon))\) reflections about block-encoding ancillas and
controlled single-qubit rotations. Using the elementary costs above, and
synthesizing each rotation to precision
\(O(\varepsilon/(\tau\log(1/\varepsilon)))\), these costs are
\(O(C_A)\) per use. The one-time preparation of \(\ket b\) is also absorbed by
the same bound. Multiplying \(C_A\) by
\(O(\tau\log(1/\varepsilon))=O(s\kappa\log(1/\varepsilon))\) gives the claimed
\T\ count.
\end{proof}

\subsection{Quantum rejection sampling}
In quantum rejection sampling \cite{QRS}, the basic task is to change the
probability distribution carried by a label register while preserving the
unknown quantum states attached to those labels. Concretely, we are given a
black-box procedure that prepares
\[
    \ket{\pi^{\xi}}
    :=
    \sum_{k=1}^{N}\pi_k\ket{\xi_k}\ket{k},
\]
where \(N\) is the number of labels, \(\pi\) is a
known distribution on \([N]\), and the states \(\{\ket{\xi_k}\}_{k=1}^{N}\) are
unknown. Given another known distribution \(\sigma\) on \([N]\), the goal is to prepare
\[
    \ket{\sigma^{\xi}}
    :=
    \sum_{k=1}^{N}\sigma_k\ket{\xi_k}\ket{k}.
\]
Thus, the objective is to change the label amplitudes from \(\pi\) to
\(\sigma\), without disturbing the unknown states \(\ket{\xi_k}\).

An exact transformation from \(\ket{\pi^\xi}\) to \(\ket{\sigma^\xi}\) is in
general difficult. The rounding scheme of
Ozols, Roetteler, and Roland~\cite{QRS} circumvents this by first replacing the target \(\sigma\) with an
intermediate distribution \(\epsilon\) satisfying
\(0\le \epsilon_k\le \pi_k\) for all \(k\), and then preparing
\[
    \ket{\epsilon^{\xi}}
    :=
    \frac{1}{\|\epsilon\|_2}
    \sum_{k=1}^{N}\epsilon_k\ket{\xi_k}\ket{k}.
\]
The rounding vector \(\epsilon\) is judged by two quantities. First,
\(\|\epsilon\|_2^2\) is exactly the success probability of the postselection
step below, so larger \(\|\epsilon\|_2^2\) means a more efficient rounding
procedure. Second, we want the rounded state \(\ket{\epsilon^\xi}\) to remain
close to the target state \(\ket{\sigma^\xi}\). We measure this by requiring
their overlap to be at least a prescribed parameter \(p\in[0,1]\) where larger
\(p\) corresponds to better rounding quality.
Since the label states \(\ket{k}\) are orthogonal, $
    \braket{\sigma^\xi}{\epsilon^\xi}
    = \left\langle \sigma, \epsilon/\|\epsilon\|_2 \right\rangle$.
Thus, Ozols, Roetteler, and Roland choose \(\epsilon\) by solving
\begin{align*}
     \max \ \ & \|\epsilon\|_2^2
    \\
    \text{subject to} \ \ & 0\le \epsilon_k\le \pi_k,  \quad \text{for all } k,
    \\
    & \braket{\sigma^\xi}{\epsilon^\xi}\ge p.
\end{align*}
Ozols, Roetteler, and Roland give an explicit solution to this optimization problem. 

After this
rounding step, the remaining task is to prepare \(\ket{\epsilon^\xi}\)
efficiently.
To prepare \(\ket{\epsilon^\xi}\), we define
\[
    \rho_k
    :=
    \begin{cases}
        \epsilon_k/\pi_k, & \pi_k>0,\\
        0, & \pi_k=0.
    \end{cases}
\]
Starting from \(\ket{\pi^\xi}\ket0\), we apply the map
\[
    \ket{k}\ket0
    \mapsto
    \ket{k}
    \left(
        \sqrt{1-\rho_k^2}\ket0+\rho_k\ket1
    \right),
\]
which produces
\[
    \sum_{k=1}^{N}\pi_k\ket{\xi_k}\ket{k}
    \left(
        \sqrt{1-\rho_k^2}\ket0+\rho_k\ket1
    \right)
    =
    \sum_{k=1}^{N}\ket{\xi_k}\ket{k}
    \left(
        \sqrt{\pi_k^2-\epsilon_k^2}\ket0+\epsilon_k\ket1
    \right).
\]
Postselecting the last qubit on \(\ket1\) succeeds with probability
\(\|\epsilon\|_2^2\) and yields \(\ket{\epsilon^\xi}\ket1\). Standard
amplitude amplification then boosts the success probability using
\(O(1/\|\epsilon\|_2)\) applications of this postselection subroutine and its inverse, together
with the black-box state-preparation unitary for \(\ket{\pi^\xi}\).

In previous work, the state preparation oracle is typically treated as a given black box, and the cost is measured in terms of oracle queries. By contrast, our focus is on the Clifford+\(\T\) cost of building the relevant oracle from sparse classical data. In particular, we include this oracle-construction cost in the overall \T\ count.

We now explain how to instantiate this rounding scheme when \(\pi\) is sparse.
Suppose $\pi$ is $s$-sparse with the support $S$.
The unknown states \(\ket{\xi_k}\) play no role in this postselection
subroutine; only the known ratios \(\rho_k\) are needed. We therefore implement
the amplitude-transformation step using our promised sparse QROM construction.

\begin{theorem}
\label{thm:sparse-qrs-filter}
Suppose \(S=\supp(\pi)\) and \(s=|S|\). Then,
for any \(0<\delta<1\), the postselection subroutine above can be implemented
with the following guarantee: conditioned on successful postselection, its
output state has trace-distance error \(O(\delta/\|\epsilon\|_2)\). Its \T\
count is
\[
    O\!\left(
        \sqrt{s\log(1/\delta)}
        +
        \log(1/\delta)\log s
    \right).
\]
Consequently, the state \(\ket{\epsilon^\xi}\) can be prepared to
trace-distance error \(O(\delta/\|\epsilon\|_2)\) using \T\ count
\[
    O\!\left(
        \frac{
            \sqrt{s\log(1/\delta)}
            +
            \log(1/\delta)\log s
        }{\|\epsilon\|_2}
    \right),
\]
together with
\(O(1/\|\epsilon\|_2)\) calls to the black-box preparation of
\(\ket{\pi^\xi}\) and its inverse.
\end{theorem}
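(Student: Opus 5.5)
We implement the controlled rotation $\ket{k}\ket0\mapsto\ket{k}(\sqrt{1-\rho_k^2}\ket0+\rho_k\ket1)$ in the same way as the value-loading step of \algo{sparse-block-encoding-unitary}, which avoids any arbitrary-angle rotation synthesis. Fix a precision $b=\lceil\log_2(1/\delta)\rceil$, set $B=2^b$, and for $k\in S$ let $\tilde\rho_k$ be a $b$-bit approximation of $\rho_k$. We take $\tilde\rho_k^2=\lfloor B\rho_k^2\rfloor/B$, so that the rotation becomes a comparison against a uniformly random index. The data vector $k\mapsto \lfloor B\rho_k^2\rfloor\in\{0,1\}^b$ has support contained in $S$; entries with $\rho_k=0$ can be treated as zero messages. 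Since the label register $\ket{\pi^\xi}$ only has amplitude on $k\in S$, the promise of \Cref{def:promised-sparse-QROM} holds, and \Cref{thm:Algorithm-promised-sparse-QROM} loads these values with \T\ count $O(\sqrt{sb}+b\log s)=O(\sqrt{s\log(1/\delta)}+\log(1/\delta)\log s)$.

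The circuit proceeds in four steps. First, prepare a uniform superposition $B^{-1/2}\sum_{\sigma<B}\ket\sigma$ using Hadamards, at no \T\ cost. Second, apply the promised sparse QROM to write $M_k=\lfloor B\rho_k^2\rfloor$. Third, compute $\mathbf 1[\sigma<M_k]$ into the flag qubit with an $O(b)$-\T\ comparator \cite{gidney2018halvingcostof}. Fourth, uncompute $M_k$ with the inverse QROM. On the flag being $1$, the $\Sigma$ register is then entangled, so conditioning on the flag does not immediately give $\ket{\epsilon^\xi}$. To fix this, we follow the subsampling trick of \cite{babbush2018encoding}: we apply the inverse Hadamards on $\Sigma$ and postselect $\Sigma$ on $\ket{0^b}$, and treat $(\text{flag}=1,\Sigma=0^b)$ as the ``good'' subspace. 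The amplitude on this subspace for label $k$ is $\pi_k\cdot M_k/B$, which equals $\pi_k\tilde\rho_k$ with $\tilde\rho_k=M_k/B$. For this reason we should load $M_k=\lfloor B\rho_k\rfloor$ rather than $\rho_k^2$, because the amplitude is linear in $M_k/B$. The good amplitude is then $\tilde\epsilon_k=\pi_k\tilde\rho_k$ with $|\tilde\epsilon_k-\epsilon_k|\le \pi_k 2^{-b}\le\pi_k\delta$. This is a block encoding of the diagonal $\operatorname{diag}(\tilde\rho_k)$ on $S$, and the postselection subroutine is this unitary followed by projection onto the good subspace.

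The error is controlled as follows. The unnormalized good vector $\tilde\epsilon$ satisfies $\|\tilde\epsilon-\epsilon\|_2\le\delta\|\pi\|_2\le\delta$. After normalization, the standard estimate $\|u/\|u\|-v/\|v\|\|\le 2\|u-v\|/\|v\|$ gives a state within $O(\delta/\|\epsilon\|_2)$ of $\ket{\epsilon^\xi}$, and hence the same bound in trace distance. This is the step I expect to need the most care. One concern is ensuring that the success amplitude $\|\tilde\epsilon\|_2=\Theta(\|\epsilon\|_2)$, which holds when $\delta\ll\|\epsilon\|_2$; otherwise the claimed error bound is trivial anyway. The other concern is checking that the unknown $\ket{\xi_k}$ factor through, which they do since the QROM acts only on the label register.

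Finally, amplitude amplification on the good subspace uses $O(1/\|\tilde\epsilon\|_2)=O(1/\|\epsilon\|_2)$ rounds. Each round applies the subroutine, its inverse, the black-box preparation of $\ket{\pi^\xi}$ and its inverse, and two reflections. A reflection about the good subspace costs $O(b)$ \T\ gates, and the reflection about the initial state costs $O(b+\log s)$ \T\ gates through a multi-controlled Toffoli, both absorbed into the per-round cost. If an exact final amplitude is needed, the standard fix is to pad with one extra rotation, which costs $O(\log(1/\delta))$ \T\ gates by \cite{RossSelinger2014} and is also absorbed. Multiplying the per-round cost $O(\sqrt{s\log(1/\delta)}+\log(1/\delta)\log s)$ by $O(1/\|\epsilon\|_2)$ gives the stated total \T\ count.
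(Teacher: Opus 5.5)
Your construction matches the paper's. You load $b$-bit integers with the promised sparse QROM of \Cref{thm:Algorithm-promised-sparse-QROM} and compare them against a uniform reference register. You then apply $H^{\otimes b}$ and postselect on $\ket{0^b}\ket1$, which gives amplitude $\pi_k M_k/B$, and finish with normalization and amplitude amplification. Your corrected choice $M_k=\lfloor B\rho_k\rfloor$ is the right one; store it in $b+1$ bits so that $M_k=B$ fits when $\rho_k=1$, as the paper does.

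The only real difference is that you round down, while the paper rounds up, $d_k=\min(\lfloor 2^m\rho_k\rfloor+4,\,2^m)$. Rounding up buys two things.
First, $\tilde\rho_k\ge\rho_k$ gives $\|\tilde\epsilon\|_2\ge\|\epsilon\|_2$ outright, so the $O(1/\|\epsilon\|_2)$ amplification count needs no case split, and postselection never has zero success probability.
Second, $d_k>0$ for every $k\in S$, so the data support is exactly $S$ and \Cref{thm:Algorithm-promised-sparse-QROM} applies verbatim.

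Your case split is valid: if $\delta>\|\epsilon\|_2/2$, the bound $2\delta/\|\epsilon\|_2$ exceeds $1$ and is vacuous. The promise, however, needs one more sentence than you give. $M_k=0$ whenever $\rho_k<2^{-b}$, not only when $\rho_k=0$. A QROM built on $\supp(M)\subsetneq S$ may then, by \Cref{prop:promised-qrom-behavior}, output some nonzero $M_y$ on those labels, while the label register has amplitude on all of $S$. To fix this, run the multilevel hashing over all of $S$ with possibly zero messages (the construction never uses $d_x\neq 0$), or prepend a flag bit to every message.

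Separately, your $O(b+\log s)$ cost for the reflection about the initial state undercounts. That reflection acts on the $\lceil\log N\rceil$-qubit label register and on the $\xi$ register. The paper likewise does not count this amplitude-amplification overhead separately.
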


Let \(m=\lceil\log(1/\delta)\rceil\). For each \(k\in S\), define
\[
    \rho_k^+
    =
    \begin{cases}
        \rho_k+4\cdot 2^{-m}, & \rho_k\le 1-4\cdot 2^{-m},\\
        1, & \rho_k>1-4\cdot 2^{-m}.
    \end{cases}
\]
Set \(d_k=\lfloor 2^m\rho_k^+\rfloor\), \(\tilde\rho_k=2^{-m}d_k\), and
\(\tilde\epsilon_k=\pi_k\tilde\rho_k\). We use the comparator-based method of
Sanders, Low, Scherer, and Berry~\cite{sanders2019black-box}. The algorithm is
written in \Cref{alg:sparse-qrs}.

\LinesNumbered
\begin{algorithm}[H]
\caption{Quantum rejection sampling}
\label{alg:sparse-qrs}
\DontPrintSemicolon
\KwInput{The support \(S=\supp(\pi)\) and the integers
\(\{d_k\}_{k\in S}\).}
\KwRegister{\(I_1,I_2,D,R,F\), initialized to $\sum_{k\in S}\pi_k\ket{\xi_k}_{I_1}\ket{k}_{I_2}
\ket{0^{m+1}}_D\ket{0^m}_R\ket0_F$.}
\KwOutput{Prepare \(\ket{\tilde\epsilon^\xi}\).}

\AlgoDisplayStep{Load \(d_k\) with the promised sparse QROM in \Cref{alg:promised-qrom}\nllabel{lin:qrs-load-amplitudes}}{
\ket{k}_{I_2}\ket{0^{m+1}}_D
\mapsto
\ket{k}_{I_2}\ket{d_k}_D .
}
\AlgoDisplayStep{Prepare a uniform reference register\nllabel{lin:qrs-uniform-ref}}{
\ket{0^m}_R
\mapsto
2^{-m/2}\sum_{z} \ket z_R .
}
\AlgoDisplayStep{Compare \(z\) with \(d_k\) and write the result to \(F\)\nllabel{lin:qrs-compare}}{
\ket{d_k}_D\ket z_R\ket0_F
\mapsto
\ket{d_k}_D\ket z_R\ket{\mathbf 1[z<d_k]}_F .
}

\AlgoDisplayStep{Apply \(H^{\otimes m}\) to the reference register \(R\) and uncompute the register \(D\)\nllabel{lin:qrs-uncompute-amplitudes}}{
\ket{k}_{I_2}\ket{d_k}_D
\mapsto
\ket{k}_{I_2}\ket{0^{m+1}}_D .
}

\AlgoDisplayStep{Postselect on \(R=\ket{0^m}\) and \(F=\ket1\)\nllabel{lin:qrs-postselect}}{
\sum_{k\in S}\pi_k\ket{\xi_k}\ket{k}
\left(
    \tilde\rho_k\ket{0^m}_R\ket1_F+\ket{\text{garbage}_k}_{RF}
\right)
\Longrightarrow
\ket{\tilde\epsilon^\xi}\ket{0^m}_R\ket1_F .
}
\end{algorithm}
\LinesNotNumbered

\begin{proof}
We analyze the \T\ count first. By \Cref{thm:Algorithm-promised-sparse-QROM},
\lin{qrs-load-amplitudes} has \T\ count
\(O\!\left(\sqrt{sm}+m\log s\right)\).
The same bound applies to \lin{qrs-uncompute-amplitudes}. The preparation and
unpreparation of the uniform reference register cost no \T\ gates. The comparator in \lin{qrs-compare} compares the
\((m+1)\)-bit integer \(d_k\) with the \(m\)-bit reference value \(z\), and can be implemented with \(O(m)\) \T\ gates.  Hence, one
application of the postselection subroutine costs
\(O\!\left(\sqrt{sm}+m\log s+m\right)\).

We next analyze the correctness. After
\lin{qrs-load-amplitudes}, \lin{qrs-uniform-ref}, and \lin{qrs-compare}, the
state is
\[
    \sum_{k\in S}\pi_k\ket{\xi_k}\ket{k}\ket{d_k}_D
    2^{-m/2}
    \left(
        \sum_{z=0}^{d_k-1}\ket z_R\ket1_F
        +
        \sum_{z=d_k}^{2^m-1}\ket z_R\ket0_F
    \right).
\]
Applying \(H^{\otimes m}\) to \(R\), the amplitude of the branch
\(\ket{0^m}_R\ket1_F\) is
\(2^{-m}\sum_{z=0}^{d_k-1}1=2^{-m}d_k=\tilde\rho_k\).
All other components are orthogonal to the postselected subspace
\(\ket{0^m}_R\ket1_F\), and we denote them by \(\ket{\text{garbage}_k}_{RF}\).  Therefore, postselecting on
\(R=\ket{0^m}\) and \(F=\ket1\) produces
\[
    \ket{\tilde\epsilon^\xi}
    =
    \frac{1}{\|\tilde\epsilon\|_2}
    \sum_{k\in S}\tilde\epsilon_k\ket{\xi_k}\ket{k}.
\]

It remains to bound the error. If
\(\rho_k\le 1-4\cdot 2^{-m}\), then $ \rho_k+3\cdot 2^{-m} \le \tilde\rho_k \le \rho_k+4\cdot 2^{-m}$.
If \(\rho_k>1-4\cdot 2^{-m}\), then \(\rho_k^+=1\), \(d_k=2^m\), and
\(\tilde\rho_k=1\). Thus, in all cases,
$ \rho_k\le \tilde\rho_k $ and $|\rho_k-\tilde\rho_k|\le 4\cdot 2^{-m}=O(\delta)$.

Therefore, if the postselection subroutine were implemented exactly using the
rounded values \(\tilde\rho_k\), then the resulting unnormalized state
\(\|\tilde\epsilon\|_2\ket{\tilde\epsilon^\xi}\) would differ from the ideal
state \(\|\epsilon\|_2\ket{\epsilon^\xi}\) by
\[
\left\|
    \sum_{k\in S}\pi_k(\rho_k-\tilde\rho_k)\ket{\xi_k}\ket{k}
\right\|_2^2
=
\sum_{k\in S}\pi_k^2|\rho_k-\tilde\rho_k|^2
\le
O(\delta^2)\sum_{k\in S}\pi_k^2
\le
O(\delta^2).
\]

Since \(\tilde\rho_k\ge\rho_k\) for every \(k\), we have
\(\|\tilde\epsilon\|_2\ge\|\epsilon\|_2\). Hence,
$$
\left\| \ketbra{\epsilon^{\xi}} - \ketbra{\tilde{\epsilon}^{\xi}} \right\|_1 \leq2 \| \ket{\epsilon^{\xi}} - \ket{\tilde{\epsilon}^{\xi}} \|_2 \leq 2\frac{\| \ket{\tilde{\epsilon}^{\xi}} \left\|\tilde{\epsilon}\|_2- \ket{\epsilon^{\xi}}  \|\epsilon\|_2 \right\|_2}{\|\epsilon\|_2} \leq 2\delta/\|\epsilon\|_2
$$

Finally, \lin{qrs-postselect} succeeds with probability
\(\|\tilde\epsilon\|_2^2\). Since  \(\|\tilde\epsilon\|_2\ge\|\epsilon\|_2\), the standard
amplitude amplification uses
\(O(1/\|\tilde\epsilon\|_2)=O(1/\|\epsilon\|_2)\) applications of the subroutine. The additional \T\ count incurred by amplitude amplification is only linear in the number of such applications and is therefore absorbed into the overall complexity bound.
\end{proof}

\section{Lower bounds}

In this section, we work in the adaptive Clifford+$\T$ model of
Gosset, Kothari, and Wu~\cite{david2026quantum} and establish lower bounds on the expected \T\ count
in the presence of adaptivity. For simplicity, we refer to them as adaptive
\T-count lower bounds.

\subsection{QROM}

\begin{theorem}[\T-count lower bound for adaptive promised sparse QROMs]
\label{thm:lb-promised-qrom}
If $m\ge3$, $\varepsilon < \frac{1}{128}$, and $s \ge m\log^2 m$, then for the
adaptive promised \(s\)-sparse QROM with $n$-bit input, $m$-bit output,
and error $\varepsilon$, the adaptive \T-count lower bound is
\[
    \Omega\br{\sqrt{sm}}.
\]
\end{theorem}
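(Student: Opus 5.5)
The plan is a counting argument: reduce the promised sparse QROM to state preparation on a fixed input, count Pauli-postselected canonical forms, and compare with a packing of well-separated instances.

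\textbf{Hard family.} Fix the support $S=\{z\|0^{n-k}: z\in\{0,1\}^k\}$ with $k=\lfloor\log s\rfloor$, so $|S|=2^k=\Theta(s)$; padding to exactly $s$ changes only constants. Take the family $\mathcal{C}$ of \cref{lem:alphabet-packing}: functions $c:S\to\{0,1\}^m\setminus\{0^m\}$ with pairwise Hamming distance at least $|S|/2$ and $\log|\mathcal{C}|=\Omega(sm)$. Each $c$ is a data vector $d^{(c)}$ with support exactly $S$, so the promise applies.

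\textbf{Reduction to state preparation.} Apply the adaptive implementation $\mathcal{A}_c$ to the input
\[
\ket{u}=|S|^{-1/2}\sum_{z}\ket{z\|0^{n-k}}\ket{0^m}\ket{0^a}\in\mathcal{H}_S .
\]
This input is prepared from $\ket{0}$ by Hadamards, at no \T\ cost. The diamond-norm guarantee gives an adaptive Clifford+\T\ circuit that prepares
\[
\ket{\phi_c}=|S|^{-1/2}\sum_{x\in S}\ket{x}\ket{c(x)}\ket{0^a}
\]
within trace distance $\varepsilon$, with expected \T\ count at most $t:=\mathcal{T}_{\mathcal{H}_S}(\mathcal{A}_c)$. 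For $c\neq c'$ we have $|\braket{\phi_c}{\phi_{c'}}|\le 1/2$, so the pure-state trace distance is at least $\sqrt{3}/2$. By \cref{lem:gkw-pauli-postselection}, each $c$ yields a state $\ket{\psi_c}$ within $\sqrt{6\varepsilon}<\sqrt{6/128}<1/4$ of $\ket{\phi_c}$. The triangle inequality then keeps the $\psi_c$ pairwise at distance bigger than $\sqrt3/2-1/2>0$, so the map $c\mapsto$ (canonical form) is injective.

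\textbf{Counting canonical forms.} Here I expect the main obstacle. The ancilla count $a$ and $n$ appear in the register size $N'=n+m+a$, so naive counting gives $\exp(O(t^2+tN'))$ forms, which depends on $a$. To avoid that, I would first restrict to the effective space, following the idea in \cref{lem:gkw-canonical-form} that canonical forms remove clean ancillas. Treat the state as living on $k+m$ qubits: the circuit starting from $\ket{0^{k+m}}$ (with the other qubits as clean ancillas that end in $\ket0$) prepares $\ket{\phi_c}$ on those $k+m$ qubits. The input Hadamards and the fixed zero padding $0^{n-k}$ are absorbed into the Clifford, and the circuit must restore the zeros it does not use; the closeness bound of \cref{lem:gkw-adaptive-reduction} tolerates ancilla residue in trace distance. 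With this, the canonical form is an $(k+m+2t)$-qubit Clifford plus $2t$ Paulis, and the number of forms is at most $2^{O((k+m+t)^2)}$.

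\textbf{Conclusion.} Comparing with $2^{\Omega(sm)}$ distinct instances gives $(k+m+t)^2=\Omega(sm)$. Since $s\ge m\log^2m$, we have $(k+m)^2=O(\log^2 s+m^2)=o(sm)$; this is the role of that hypothesis, once the constants are checked. Hence $t=\Omega(\sqrt{sm})$.
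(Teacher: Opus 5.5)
Your proposal is correct and follows essentially the same route as the paper: the same hard family on $S=\{0,1\}^{k}\times\{0^{n-k}\}$ from \cref{lem:alphabet-packing}, the same uniform-superposition input, injectivity from the $\sqrt3/2$ separation, and counting canonical forms on $k+m+O(t)$ qubits. The one difference is how you remove the $n-k$ padding qubits and the $a$ workspace qubits. You fold them into the clean-ancilla allowance of the approximate state-preparation definition, which already compares against $\ketbra{\psi}\otimes\ketbra{0^a}$. The paper instead postselects them on $0^{a'}$ and repeats until success, paying a factor $2$ in expected \T\ count and an error of $3\varepsilon/2$. Both are valid and yours is slightly more direct; just take $t$ to be a uniform bound over all $c\in\mathcal{C}$.
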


\begin{proof}[Proof of \cref{thm:lb-promised-qrom}]
Suppose every adaptive promised \(s\)-sparse QROM with error
\(\varepsilon\) can be implemented with adaptive \T\ count at most \(t\).
Let \(\bar s=2^{\lfloor\log_2 s\rfloor}\) and
\(r=\lfloor\log_2 s\rfloor \), so that \(s/2\le\bar s\le s\). We restrict attention to
QROMs supported on
\(S=\set{0,1}^r\times\set{0}^{n-r}\), which form a subclass of the
\(s\)-sparse instances.
We use $z || 0^{n-r}$ to denote the concatenation of the string
$z\in\set{0,1}^r$ with $n-r$ trailing zeros.
By \cref{lem:alphabet-packing},  there exists a family
\(\mathcal{C}\subseteq(\set{0,1}^{m}\setminus\{0^m\})^S\) of functions
such that any two distinct functions differ in at least \(\bar s/2\)
coordinates, and
$\log_2 |\mathcal{C}| = \Omega(\bar s m)=\Omega(sm)$.

We show that each function is associated with a unique state and a QROM, thus relating the lower bounds for states to the lower bounds for QROMs.
For each function $c\in\mathcal{C}$, define a data vector $d^{c}$ by
$
    d_x^{c}
    =
    \begin{cases}
        c(x), & x\in S,\\
        0^m, & x\notin S.
    \end{cases}
$
And define the states
$$
    \ket{\phi}
    = \frac{1}{\sqrt{\bar s}}
    \sum_{z\in\set{0,1}^r}\ket{z}\ket{0^m}, \qquad
        \ket{\phi_c}
        =
    \frac{1}{\sqrt{\bar s}}
    \sum_{z\in\set{0,1}^r}\ket{z}\ket{c(z || 0^{n-r})}.
$$
The ideal promised sparse QROM $O_{d^c}^{\mathrm{prom}}$ satisfies
$$
O_{d^c}^{\mathrm{prom}} \br{ \ket{\phi} \ket{0^{n-r}} \ket{0^a} } = \ket{\phi_c} \ket{0^{n-r}} \ket{0^a}.
$$
Thus, each $\ket{\phi_c}$ can be prepared from an ideal promised sparse QROM implementation with the same \T\ count.

We now extend the argument to adaptive promised sparse QROMs. For each \(c\in\mathcal C\), let \(\mathcal A_c\) be an adaptive promised sparse QROM implementation for \(d^c\), and let \(\mathcal E_c\) be the induced channel. For simplicity, let
\(a' = n-r+a\). Since \(\ket{\phi}\ket{0^{a'}}\in \mathcal H_S\), by
\cref{def:adaptive-promised-sparse-QROM} we have
$$
    \left\|
        \mathcal E_c\br{\ket{\phi,0^{a'}}\bra{\phi,0^{a'}}}
        -
        \ket{\phi_c,0^{a'}}\bra{\phi_c,0^{a'}}
    \right\|_1
    \le
    \varepsilon.
$$
Postselect the last \(a'\) qubits on \(0^{a'}\). Let \(p_c\) be the
success probability, and let \(\varphi_c\) be the resulting state on the
first \(r+m\) qubits. Since the ideal output always has these qubits equal to
\(0^{a'}\), we have \(p_c\ge 1-\varepsilon/2 > 1/2\). Repeating until success
therefore prepares \(\varphi_c\) with expected \T\ count at most \(2t\).
Measuring the last \(a'\) qubits is trace-norm contractive, so
\begin{align*}
    &\left\|
        p_c \cdot \varphi_c\otimes \ketbra{0^{a'}}
        -
        \ketbra{\phi_c,0^{a'}}
    \right\|_1 \leq \varepsilon \\
    &\Rightarrow 
    p_c
    \left\|
        \varphi_c\otimes \ketbra{0^{a'}}
        -
        \ketbra{\phi_c,0^{a'}}
    \right\|_1
    -
    (1-p_c) \leq \varepsilon \\
    &\Rightarrow
    \frac12
    \left\|
        \varphi_c
        -
        \ketbra{\phi_c}
    \right\|_1
    -
    \frac{\varepsilon}{2} \leq \varepsilon \\
    &\Rightarrow \frac12 \left\| \varphi_c - \ketbra{\phi_c} \right\|_1 \le \frac{3\varepsilon}{2}
\end{align*}
where in the third line we use $p_c \ge 1-\varepsilon/2 > 1/2$.

Now we relate \(\ket{\phi_c}\) to the normal form. The above calculation shows that $\ket{\phi_c}$ can be prepared within
trace-distance $\frac{3\varepsilon}{2}$ and expected \T\ count at most
$2t$. By \cref{lem:gkw-pauli-postselection}, for each function
\(c\), there are a state $\ket{\psi_c}$  and a normal form such that
$$
    \ket{\psi_c}\ket{0^{4t}}
    \propto
    C(I+P_{4t})\cdots(I+P_1)
    \br{\ket{0^{r+m}}\ket{T}^{\otimes 4t}}
$$
and
$$ \frac12 \left\| \ketbra{\psi_c} - \ketbra{\phi_c} \right\|_1 \le 3\sqrt{\varepsilon}.$$
Recall that \(c\) is chosen from the family \(\mathcal{C}\). For distinct \(c,c'\in\mathcal{C}\), since their Hamming distance is at
least $\bar s/2$, they agree on at most $\bar s/2$ coordinates. Therefore,
$$
    \frac12
    \left\|
        \ketbra{\phi_c}
        -
        \ketbra{\phi_{c'}}
    \right\|_1
    =
    \sqrt{1-\abs{\braket{\phi_c}{\phi_{c'}}}^2}
    \ge
    \frac{\sqrt{3}}{2}.
$$
If two distinct functions \(c,c'\) gave the same state $\ket{\psi_c} = \ket{\psi_{c'}}$, then by the
triangle inequality we would have
\[
    \frac12
    \left\|
        \ketbra{\phi_c}
        -
        \ketbra{\phi_{c'}}
    \right\|_1
    \leq  \frac{1}{2} \br{\left\|
        \ketbra{\phi_c}
        -
        \ketbra{\psi_{c}}
    \right\|_1 +  \left\|
        \ketbra{\phi_{c'}}
        -
        \ketbra{\psi_{c'}}
    \right\|_1} \leq 
    6\sqrt{\varepsilon}.
\]
Our choice of \(\varepsilon<1/128\) implies that each
function determines a unique normal form.

It remains to count the number of such normal forms. Here, $C$ is a Clifford
unitary on $r+m+4t$ qubits, and each $P_i$ is a Hermitian Pauli on
$r+m+4t$ qubits. Therefore, the total number of normal forms is at most
$$
    2^{O((r+m+4t)^2)}
    \br{2\cdot 4^{r+m+4t}}^{4t}
    =
    2^{O((\log s+m)^2+t^2)}.
$$
Since distinct functions give distinct normal forms, we get
$|\mathcal{C}| \le 2^{O((\log s+m)^2+t^2)}$.
Combining this with
$\log_2 |\mathcal{C}|=\Omega(\bar s m)=\Omega(sm)$ yields
$$
    t^2 \ge \Omega(sm)-O\br{(\log s+m)^2}.
$$
Under $s \ge m\log^2 m$, the leading term dominates the correction, and
we conclude that
$$
    t = \Omega\br{\sqrt{sm}}.
$$
\end{proof}

We remark that the above proof uses essentially only the dense QROM behavior on
the chosen support set. In the promised sparse QROM setting, we merely restrict
attention to the subclass of instances whose support is exactly
$S=\set{0,1}^r\times\set{0}^{n-r}$. Therefore, the same argument extends
directly to the dense-QROM setting.

\begin{corollary}[\T-count lower bound for adaptive dense QROMs]
\label{thm:lb-dense-qrom}
If $m\ge3$, $\varepsilon < \frac{1}{128}$, and
$2^n\ge m\log^2m$, then for the adaptive dense QROM with $n$-bit input,
$m$-bit output, and error $\varepsilon$, the adaptive \T\ count lower bound is
\[
    \Omega\br{\sqrt{2^n m}}.
\]
\end{corollary}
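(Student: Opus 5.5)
The plan is to repeat the proof of \cref{thm:lb-promised-qrom} with the support taken to be the entire address space, i.e.\ $\bar s = 2^n$, $r=n$, $S=\set{0,1}^n$. The reduction will show that an adaptive dense QROM can serve as an adaptive promised sparse QROM on this instance.

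\textbf{Step 1 (hard family).} By \cref{lem:alphabet-packing} with $s=2^n$ and $m\ge3$, I obtain a family $\mathcal C$ of functions $\set{0,1}^n\to\set{0,1}^m\setminus\{0^m\}$. Any two distinct members differ on at least $2^{n-1}$ addresses, and $\log_2|\mathcal C|=\Omega(2^n m)$. For each $c\in\mathcal C$ I set $d^c_x=c(x)$.

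\textbf{Step 2 (dense implies promised).} An adaptive dense QROM for $d^c$ with error $\varepsilon$ satisfies the diamond-norm bound on the full space $\mathcal H$. Here $\mathcal H$ coincides with $\mathcal H_S$ because $S=\set{0,1}^n$, and $O_{d^c}$ agrees with $O^{\mathrm{prom}}_{d^c}$ on it. So it is also an adaptive promised sparse QROM with the same error. Its expected \T\ count is $\mathcal T_{\mathcal H}=\mathcal T_{\mathcal H_S}$. Note that $d^c$ has full support of size $2^n$, so this is precisely the promised setting with $s=2^n$ and no padding zeros.

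\textbf{Step 3 (counting).} I rerun the rest of the argument verbatim: apply the QROM to the uniform superposition $\ket{\phi}$, postselect the ancillas, and invoke \cref{lem:gkw-pauli-postselection} with $2t$ expected \T\ gates. Since $\varepsilon<1/128$, the pairwise trace distance $\sqrt3/2$ exceeds $6\sqrt\varepsilon$, so each $c$ gets a distinct normal form. Counting normal forms gives $\Omega(2^n m)\le O((n+m)^2+t^2)$.

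The only point needing care is that the correction term $(n+m)^2$ must be dominated. This is exactly the hypothesis $2^n\ge m\log^2 m$, which plays the role of $s\ge m\log^2 m$ in the earlier theorem. There $\log s=n$, so $n^2\ll 2^n m$ always holds, and $m^2\ll 2^n m$ requires $2^n\gg m$, which is implied (with suitable constants) by $2^n\ge m\log^2 m$.

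Equivalently, I can simply invoke \cref{thm:lb-promised-qrom} with $s=2^n$, because its proof already restricts to a support of the form $\set{0,1}^r\times\set{0}^{n-r}$ with $r=\lfloor\log_2 s\rfloor=n$. Either way, the result is $t=\Omega(\sqrt{2^n m})$.
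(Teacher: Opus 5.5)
Your proposal is correct and follows the paper's argument. The paper obtains this corollary by noting that the proof of \cref{thm:lb-promised-qrom} only uses the QROM's behavior on the chosen support. You instantiate that support as $S=\{0,1\}^n$ with $r=n$, so that $\mathcal H_S=\mathcal H$, $\mathcal T_{\mathcal H_S}=\mathcal T_{\mathcal H}$, and the hypothesis $s\ge m\log^2 m$ becomes $2^n\ge m\log^2 m$. This makes explicit the dense-implies-promised step that the paper leaves implicit.
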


We next turn to adaptive \T-count lower bounds for sparse QROMs. The basic
intuition is that a QROM must at least be able to distinguish the support set,
which suggests an adaptive \T-count lower bound of
$\Omega\br{\sqrt{sn}}$. 

\begin{remark} \label{remark:adaptive-sparse-QROM}
At present, however, we can only prove this
support-based adaptive \T-count lower bound for deterministic sparse QROMs.
For more general adaptive sparse QROMs, if we replace every Toffoli gate
in our sparse QROM construction by the probabilistic implementation of Gosset,
Kothari, and Zhang \cite{gosset2025multi}, we obtain an algorithm with \T\
count
$$
    O\br{\sqrt{s \log s} + \sqrt{sm} + \sqrt{s \log (1/\varepsilon)}}.
$$
This upper bound is sometimes smaller than the $\sqrt{sn}$ adaptive \T-count
lower bound suggested by the need to distinguish the support set. Thus, under
the adaptive definition, the intuition behind the proof below is not quite
correct.
\end{remark}

\begin{theorem}[Adaptive \T-count lower bound for sparse QROM]
\label{thm:lb-exact-sparse-qrom-support}
Let \(0<\delta<1\) be a constant. There exists a constant \(C>0\) such that,
for all sufficiently large \(n\), if
$ C(n+m)^2\le s\le 2^{(1-\delta)n} $,
then the adaptive \T-count lower bound for the unitary \(s\)-sparse QROM oracle with
$n$-bit input and $m$-bit output is
\[
    \Omega\!\left(\sqrt{s\log(2^n/s)}\right).
\]
\end{theorem}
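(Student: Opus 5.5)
We reduce to state preparation via the counting method, following the outline in the introduction. We take the hard family to be the indicator functions \(1_T\), one for each \(T\) in the family \(\mathcal{F}\) of \(s\)-element subsets of \(\{0,1\}^n\) given by \cref{lem:sparse-support-packing} (with \(N=2^n\)). Concretely, let \(d^T\) have \(d^T_x=0^{m-1}1\) for \(x\in T\) and \(0^m\) otherwise. This is \(s\)-sparse, any two sets in the family satisfy \(|T\cap T'|\le s/2\), and \(\log_2|\mathcal{F}|=\Omega(s\log(2^n/s))\). The hypothesis \(C(n+m)^2\le s\) gives \(s\ge C\log N\), so the lemma applies.

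\textbf{Separated states.} Feed the input \(\ket{+^n}\ket{0^m}\ket{0^a}\) and use only the last output bit. The ideal output is, up to Cliffords,
\[
\ket{\phi_T}=2^{-n/2}\sum_x\ket{x}\ket{1_T(x)}.
\]
These states are not well separated: \(\braket{\phi_T}{\phi_{T'}}=1-|T\triangle T'|/2^n\), which is exponentially close to \(1\). This is why the proof needs the unitary (exact, or exponentially accurate) model, as the introduction anticipates. There are two natural fixes:
\begin{itemize}
\item use the exact canonical form with \(\varepsilon=0\) (or \(\varepsilon\le 2^{-2n}\)), where distinct states must give distinct forms regardless of closeness;
\item apply the unitary twice, as \(O_{d^T}\), then phase \(Z\) on the output bit, then \(O_{d^T}\) again, to get the phase state \(2^{-n/2}\sum_x(-1)^{1_T(x)}\ket{x}\). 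This has the same closeness problem.
\end{itemize}
We therefore commit to the exactness route. Since the theorem concerns the unitary oracle, an adaptive implementation of a unitary with zero error (expected \T\ count \(t\)) prepares each \(\ket{\phi_T}\) exactly. \cref{lem:gkw-adaptive-reduction} with \(\varepsilon=0\) then gives \(\ket{\psi}=\ket{\phi_T}\) exactly. \cref{lem:gkw-canonical-form} then yields a Clifford \(C\) on \(n+1+2t\) qubits and Paulis \(P_1,\dots,P_{2t}\) producing \(\ket{\phi_T}\ket{0^{2t}}\) up to scalar. The state is obtained by tracing out, after postselecting the \(m-1\) zero output bits and the ancillas as in the proof of \cref{thm:lb-promised-qrom}. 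Distinct \(T\) give distinct states, so the map from \(T\) to canonical forms is injective.

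\textbf{Counting.} The number of canonical forms is \(2^{O((n+1+2t)^2)}\cdot 2^{O(t(n+t))}=2^{O(n^2+t^2+tn)}\). Combining with \(\log|\mathcal{F}|=\Omega(s\log(2^n/s))\) gives
\[
t^2+tn+n^2\ge \Omega\!\left(s\log(2^n/s)\right).
\]
Since \(s\log(2^n/s)\ge \delta s n\ge C\delta n^3\), the \(n^2\) term is absorbed when \(C\) is large. The \(tn\) term is also harmless: if \(tn\) dominates, then \(t=\Omega(s\log(2^n/s)/n)\ge\sqrt{s\log(2^n/s)}\), because \(s\ge n^2\). In every case \(t=\Omega(\sqrt{s\log(2^n/s)})\). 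Here \(m\) enters only through the qubit count, and the \((n+m)^2\) assumption covers it if we keep all \(m\) output qubits rather than postselecting.

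\textbf{Main obstacle.} The main obstacle is justifying that the adaptive unitary model yields exact (or \(2^{-\Omega(n)}\)-close) state preparation, so that exponentially close states \(\ket{\phi_T}\) still map to distinct canonical forms. This is the place where the remark about probabilistic Toffolis shows the argument must use unitarity or exactness. If only small error is allowed, we would first try a hybrid approach: choose inputs concentrated on a subset where the sets differ, e.g. the uniform superposition over \(T\cup T'\). However, this input depends on the pair, which breaks the single-state counting. The fallback is to require error \(\varepsilon\le c\,s/2^n\), so that the \(\sqrt{6\varepsilon}\) blowup from \cref{lem:gkw-adaptive-reduction} stays below the separation \(\Omega(\sqrt{s/2^n})\), and then run the same triangle-inequality argument as in \cref{thm:lb-promised-qrom}.
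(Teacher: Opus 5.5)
Your proposal is correct and follows essentially the same route as the paper: indicator data on $s$-element supports, the input $\ket{+}^{\otimes n}\ket{0^m}\ket{0^a}$, exact preparation of the resulting states, \cref{lem:gkw-pauli-postselection} with $\varepsilon=0$ so that distinct supports give distinct normal forms, and a count of normal forms against $\Omega(s\log(2^n/s))$. The differences are cosmetic: since exactness already makes the map injective, the paper counts all $\binom{2^n}{s}$ supports instead of a packing family, and your separate treatment of the $tn$ term is unnecessary because $tn\le t^2+n^2$.
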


\begin{proof}[Proof of \cref{thm:lb-exact-sparse-qrom-support}]
Suppose every such QROM can be implemented with adaptive \T\ count at most
\(t\).
Let $N = 2^n$. For each support set $S \subseteq \set{0,1}^n$ of size $s$,
define the data vector $d^S$ and the state $\ket{\phi_S}$ by
$$
    d_x^S
    =
    \begin{cases}
        10^{m-1}, & x \in S,\\
        0^m, & x \notin S,
    \end{cases}\qquad 
    \ket{\phi_S}
    =
    \frac{1}{\sqrt{N}}
    \sum_{x \in \set{0,1}^n}
    \ket{x}\ket{d_x^S}.
$$
If $O_{d^S}$ is the unitary sparse QROM for $d^S$, then
$$
    O_{d^S}\br{\ket{+}^{\otimes n}\ket{0^m}\ket{0^a}}
    =
    \ket{\phi_S}\ket{0^a}.
$$
Since the Hadamard gates cost no \T\ count, each $\ket{\phi_S}$ can be
prepared exactly with \T\ count at most $t$.

If $S \neq S'$, then $d^S \neq d^{S'}$, so $\ket{\phi_S} \neq \ket{\phi_{S'}}$.
Thus, different support sets give different exact output states.
By \cref{lem:gkw-pauli-postselection} with $\varepsilon = 0$, each
$\ket{\phi_S}$ gives a Pauli-postselection normal form with \T\ count $t$.
Since the output state is exact, distinct supports give distinct normal forms.
By comparing the number of such normal forms with the number
$\binom{N}{s}$ of support sets of size $s$, we obtain
$
    \binom{N}{s}
    \le
    2^{O((n+m)^2+t^2)}
$.
Hence,
\[
    t^2 \ge \Omega\!\left(\log_2 \binom{N}{s}\right)
    -O\!\left((n+m)^2\right).
\]
By \Cref{fact:binomial-estimate},
\[
    \log_2\binom Ns=s\log_2(N/s)+O(s).
\]
Since \(s\le N^{1-\delta}\), we have
\(\log_2(N/s)\ge\delta n\). For sufficiently large \(n\), the \(O(s)\)
term is absorbed into \(s\log_2(N/s)\), and choosing \(C\) sufficiently
large ensures that this term also dominates the
\(O((n+m)^2)\) correction. Therefore,
\[
    t
    =
    \Omega\!\left(\sqrt{s\log(2^n/s)}\right).
\]

\end{proof}

\subsection{Sparse state preparation}

We next prove a matching adaptive \T-count lower bound for sparse state
preparation in the regime where the support is sufficiently large.

\begin{theorem}[Adaptive \T-count lower bound for sparse state preparation]
\label{thm:sparse-state-lower-bound}
Let \(0<\delta<1\) be a constant. There exists a constant \(C>0\) such that
the following holds for all sufficiently large \(n\). Let \(s\) be an integer
satisfying \( Cn\le s\le 2^{(1-\delta)n}\).
For any \(0<\varepsilon\le 1/64\), there exists an \(s\)-sparse
\(n\)-qubit state \(\ket{\psi}\) such that any adaptive Clifford\,$+$\,\T\
circuit preparing \(\ket{\psi}\) within trace-distance error \(\varepsilon\)
has expected \T\ count at least
\[
    \Omega\!\left(
        \sqrt{s\log(2^n/s)}
        +
        \sqrt{s\log(1/\varepsilon)}
        +
        \log(1/\varepsilon)
    \right).
\]
\end{theorem}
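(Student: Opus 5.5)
Since the claimed bound is a sum of three terms, it suffices to prove each term separately as a lower bound. Each uses its own hard family, and the worst instance over the three families gives the statement up to a constant factor of three. The common tool is the counting argument used in \cref{thm:lb-promised-qrom}. Suppose every state in a family $\mathcal{F}$ can be prepared within error $\varepsilon$ with expected \T\ count $t$. \Cref{lem:gkw-pauli-postselection} then yields, for each target, a canonical form on $n+2t$ qubits whose output $\ket{\psi}$ is within $\sqrt{6\varepsilon}\le\sqrt{6}/8<1/2$ of the target. If the targets are pairwise at trace distance greater than $2\sqrt{6\varepsilon}$, the canonical forms are distinct. Their number is at most $2^{O((n+t)^2)}$, which gives $t^2\ge\Omega(\log|\mathcal{F}|)-O(n^2)$.

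\emph{Support term.} I would apply \cref{lem:sparse-support-packing} with $N=2^n$ to get a family $\mathcal{F}$ of $s$-subsets with pairwise intersections at most $s/2$ and $\log|\mathcal{F}|=\Omega(s\log(2^n/s))$. The states are $\ket{\phi_S}=s^{-1/2}\sum_{x\in S}\ket{x}$. Then $|\braket{\phi_S}{\phi_{S'}}|\le 1/2$, so the pairwise trace distance is at least $\sqrt3/2>2\sqrt{6/64}$. This gives $t^2\ge\Omega(s\log(2^n/s))-O(n^2)$. The $n^2$ term is absorbed because $s\log(2^n/s)\ge \delta sn\ge \delta Cn^2$ once $C$ is large.

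\emph{Precision terms.} I would fix a support, say $\{0,1\}^r\times\{0\}^{n-r}$ with $2^r=\Theta(s)$, and vary the amplitudes. This is dense state preparation on $r$ qubits padded by zeros. The lower bound of Gosset, Kothari, and Wu~\cite{david2026quantum} for $r$-qubit states, $\Omega(\sqrt{2^r\log(1/\varepsilon)}+\log(1/\varepsilon))$, should transfer directly. Any circuit preparing the padded state within $\varepsilon$ also prepares the $r$-qubit state within $\varepsilon$ after discarding the trailing qubits, which are ideally $\ket{0}$; the same postselection argument as in \cref{thm:lb-promised-qrom} gives this up to constant loss in $\varepsilon$. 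If I cannot cite their theorem as a black box, I would redo the count myself.

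For the $\sqrt{s\log(1/\varepsilon)}$ term, the plan is a packing of $\Omega(s\log(1/\varepsilon))$ bits of unit vectors in $\mathbb{C}^{2^r}$ at pairwise distance $\gtrsim\sqrt{\varepsilon}$. With the canonical-form count this gives $t^2\gtrsim s\log(1/\varepsilon)-O(n^2)$.

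The $\log(1/\varepsilon)$ term cannot come from canonical-form counting, because the forms output on $n$ qubits but the count already allows $2^{O(n^2)}$. Instead I would use the single-qubit hard instances of Beverland et al.~\cite{beverland2020lowerbounds}: a single-qubit state $\cos\theta\ket0+\sin\theta\ket1$ tensored with $\ket{0^{n-1}}$, which is $2$-sparse and hence $s$-sparse. The known $\Omega(\log(1/\varepsilon))$ bound applies, again through the GKW adaptive reduction.

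The main obstacle is the precision regime. I must make sure that the packing distance at scale $\sqrt{\varepsilon}$ is compatible with the $\sqrt{6\varepsilon}$ loss in \cref{lem:gkw-pauli-postselection}. Naively the loss dominates, so I would instead rely on the finer argument in \cite{david2026quantum} that handles small $\varepsilon$, and check that restricting to a fixed support preserves it.
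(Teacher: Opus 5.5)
Your proposal is correct and follows the paper's proof. The $\sqrt{s\log(2^n/s)}$ term comes from the support family of \cref{lem:sparse-support-packing} with canonical-form counting. Both precision terms come from the Gosset--Kothari--Wu hard $r$-qubit state embedded as $\ket{\chi}\ket{0^{n-r}}$, and this transfer is lossless because the trailing qubits act as clean ancillas, so no postselection is needed. Your fallback packing and single-qubit arguments are unnecessary, and the ``obstacle'' you flag is not real: a packing at scale $\Theta(\sqrt{\varepsilon})$ still has $2^{\Omega(s\log(1/\varepsilon))}$ elements, because $\log(1/\sqrt{\varepsilon})=\tfrac12\log(1/\varepsilon)$.
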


\begin{remark}
Under the theorem's range \(s\le 2^{(1-\delta)n}\), we have
\(\log(2^n/s)=\Omega(n)\). Thus the lower bound becomes
\[
    \Omega\!\left(
        \sqrt{sn}
        +
        \sqrt{s\log(1/\varepsilon)}
        +
        \log(1/\varepsilon)
    \right),
\]
matching the upper bound in \Cref{thm:sparse-state-preparation}.
\end{remark}

\begin{proof}
Let \(N=2^n\).
We prove the lower bound using two hard families.

We first vary the support set.
By \cref{lem:sparse-support-packing}, there is a family
\(\mathcal{F}\) of \(s\)-element subsets of \([N]\) such that
\[
    |S\cap S'|\le \frac{s}{2},
    \quad
    \forall S\neq S'\in\mathcal{F},
\]
and \( |\mathcal{F}|\ge 2^{\Omega(s\log(N/s))}\).
For each \(S\in\mathcal{F}\), define 
$\ket{S}  = \frac{1}{\sqrt{s}}\sum_{x\in S}\ket{x}$.
If \(S\neq S'\), then
\[
    \frac{1}{2}
    \left\|
        \ket{S}\!\bra{S}
        -
        \ket{S'}\!\bra{S'}
    \right\|_1
    =
    \sqrt{1-|\braket{S}{S'}|^2}
    \ge
    \sqrt{1 - (|S\cap S'|/s)^2}
    \ge
    \frac{\sqrt{3}}{2}.
\]
Now suppose every \(\ket{S}\), with \(S\in\mathcal{F}\), can be prepared to
error \(\varepsilon\le 1/64\) with \T\ count at most \(t\). By
\cref{lem:gkw-pauli-postselection}, each such circuit determines a
Pauli-postselection normal form, and the number of such forms is at most
\(2^{O(n^2+t^2)}\). Since the \(\sqrt{6\varepsilon}\)-balls around the states
\(\{\ket{S}:S\in\mathcal{F}\}\) are disjoint, distinct choices of positions
require distinct normal forms. Therefore,
\[
    2^{O(n^2+t^2)}
    \ge
    |\mathcal{F}|
    \ge
    2^{\Omega(s\log(N/s))}.
\]
Equivalently, $t^2 \ge \Omega(s\log(N/s)) - O(n^2)$.
Since \(s\le 2^{(1-\delta)n}\), we have \(\log(N/s)\ge \delta n\). Choosing
\(s \ge Cn\) for sufficiently large \(C\) ensures that
\(s\log(N/s)\) dominates the \(O(n^2)\) correction. Hence
\begin{equation*}
\label{eq:support-lower-bound}
    t=\Omega(\sqrt{s\log(N/s)}).
\end{equation*}

We next fix the support inside the first \(2^r\) basis states, where
\(r=\lfloor\log s\rfloor\), and vary the values on that support.
By the state-preparation lower bound of
Gosset, Kothari, and Wu~\cite{david2026quantum}, there exists an \(r\)-qubit state \(\ket{\chi}\)
such that every adaptive Clifford+\T\ preparation of \(\ket{\chi}\) to error
\(\varepsilon\) has adaptive \T\ count
\[
    \Omega\br{
        \sqrt{2^r\log(1/\varepsilon)}
        +
        \log(1/\varepsilon)
    }
    =
    \Omega\br{
        \sqrt{s\log(1/\varepsilon)}
        +
        \log(1/\varepsilon)
    }.
\]
Embed this state into \(n\) qubits by setting
\(\ket{\psi}=\ket{\chi}\ket{0^{n-r}}\). Then \(\ket{\psi}\) has support size
at most \(2^r\le s\). Any preparation of \(\ket{\psi}\) with smaller adaptive
\T\ count would, after discarding the last \(n-r\) qubits, yield a
preparation of \(\ket{\chi}\) with the same cost and the
same trace-distance error. Therefore,
\begin{equation*}
\label{eq:amplitude-lower-bound}
    t
    =
    \Omega\bigl(
        \sqrt{s\log(1/\varepsilon)}
        +
        \log(1/\varepsilon)
    \bigr).
\end{equation*}
The worst-case complexity is at least each of these two lower bounds, and
hence at least a constant multiple of their sum. We therefore obtain
\[
    t
    =
    \Omega\bigl(
        \sqrt{s\log(2^n/s)}
        +
        \sqrt{s\log(1/\varepsilon)}
        +
        \log(1/\varepsilon)
    \bigr).
\]
\end{proof}

\subsection{Sparse block encoding}

We now prove a lower bound for the sparse block-encoding task.

\begin{theorem}[Adaptive \T-count lower bound for sparse block encodings]
\label{thm:sparse-block-encoding-lower-bound}
Fix constants \(0<c_1\le c_2\). There exist constants \(C,c>0\) such that
the following holds for all sufficiently large \(n\). Let \(s\) be an integer
satisfying \(Cn\le s\le 2^n\), and let the normalization \(\alpha\) satisfy
\(c_1s\le\alpha\le c_2s\). For any
\(0<\varepsilon_{\rm BE}\le c\sqrt{s}\), there exists a row- and
column-\(s\)-sparse matrix
\(A\in\mathbb{C}^{2^n\times 2^n}\) with \(|a_{ij}|\le1\) such that any
adaptive Clifford\,$+$\,\T\ circuit implementing an
\((\alpha,a_{\rm BE},\varepsilon_{\rm BE})\)-block-encoding of \(A\) has
expected \T\ count at least
\[
    \Omega\!\left(
        \sqrt{2^n s\log(2^n/s)}
        +
        \sqrt{2^n s\log(\sqrt{s}/\varepsilon_{\rm BE})}
        +
        \log(s/\varepsilon_{\rm BE})
    \right).
\]
\end{theorem}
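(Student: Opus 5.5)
The plan is to reduce block-encoding implementation to state preparation of a normalized $\operatorname{vec}(A)$ and then count Pauli-postselection normal forms, as in \Cref{thm:lb-promised-qrom,thm:sparse-state-lower-bound}. We use three hard families, one for each term, and take the maximum of the resulting bounds.

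\textbf{The reduction.} Let $N=2^n$. Suppose $\mathcal A$ implements an $(\alpha,a_{\rm BE},\varepsilon_{\rm BE})$-block-encoding $U$ of $A$ with expected \T\ count $t$. Apply $U$ to the first half of the maximally entangled state $\frac1{\sqrt N}\sum_j\ket{j}\ket{j}$ (prepared by Cliffords), with block ancillas $\ket{0^{a_{\rm BE}}}$. Projecting the ancillas onto $\ket{0^{a_{\rm BE}}}$ leaves the unnormalized vector $\frac{1}{\alpha\sqrt N}\operatorname{vec}(\tilde A)$, where $\|\tilde A-A\|\le\varepsilon_{\rm BE}$.

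We do not postselect by repetition, since the success probability $\|A\|_F^2/(\alpha^2N)$ can be $\Theta(1/s)$. Instead, we absorb the ancilla projection $(I+Z)/2$ on each ancilla into the Pauli-postselection circuit. From \Cref{lem:gkw-adaptive-reduction} for the full output, we obtain a postselected Clifford circuit on $O(t)$ magic states. We then append the postselections $(I+Z_k)$ and apply \Cref{lem:gkw-canonical-form}. This yields a canonical form with $O(t)$ magic states whose output state is close to $\operatorname{vec}(\tilde A)/\|\tilde A\|_F$, and whose size does not depend on $a_{\rm BE}$.

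The delicate point is error control. Diamond-norm closeness of the channel at error $\varepsilon'$ becomes, after conditioning on an event of probability $p=\|A\|_F^2/(\alpha^2N)$, an error of order $\varepsilon'/p$. So I would first pass to a unitary-approximate version with error tiny relative to $p$, or argue directly on the Kraus branch that survives postselection. For the normalized vec states, $\|\operatorname{vec}(\tilde A)-\operatorname{vec}(A)\|_2\le\sqrt{N}\,\varepsilon_{\rm BE}$ (this uses rank at most $N$), while $\|A\|_F=\Theta(\sqrt{Ns})$ when entries have constant magnitude and $\Theta(s)$ entries per row. So the relative error is $O(\varepsilon_{\rm BE}/\sqrt s)$, which is why the hypothesis $\varepsilon_{\rm BE}\le c\sqrt s$ appears. \textbf{This step is the main obstacle}: making the adaptive-to-postselected reduction robust when the postselected probability is only $\Theta(1/\alpha^2\cdot s)=\Theta(1/s)$. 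I would first try to reprove \Cref{lem:gkw-adaptive-reduction} with an explicit target projector, so that the $\sqrt{6\varepsilon}$ loss applies to the normalized postselected state.

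\textbf{Family 1 (support).} Use \Cref{lem:sparse-matrix-support-packing} to get supports $P$ with $\lfloor s/2\rfloor$ entries per row, at most $s$ per column, and $|P\triangle P'|\ge\gamma Ns$. Set $A_P=\mathbf 1_P$. The normalized vec states are uniform over $|P|=\Theta(Ns)$ positions, so their overlap is at most $1-\Omega(\gamma)$, giving constant trace distance. Counting canonical forms, at most $2^{O((2n+t)^2)}$ of them, against $2^{\Omega(Ns\log(N/s))}$ instances gives $t^2\ge\Omega(Ns\log(N/s))-O(n^2)$. Hence $t=\Omega(\sqrt{Ns\log(N/s)})$.

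\textbf{Family 2 (values).} Fix the support, for instance block-diagonal $s\times s$ blocks, and vary the entries over a packing of sign/magnitude patterns at granularity $\eta=\Theta(\varepsilon_{\rm BE}/\sqrt s)$. The vec states then live in a fixed $Ns$-dimensional subspace. Run the dense state-preparation counting of \cite{david2026quantum} on that subspace, with precision $\eta$, using a $\Theta(\eta)$-separated net of $2^{\Omega(Ns\log(1/\eta))}$ states. This gives $t=\Omega(\sqrt{Ns\log(\sqrt s/\varepsilon_{\rm BE})})$.

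\textbf{Family 3 (single rotation).} Take $A$ with one nonzero entry $e^{i\theta}$-type, a $2\times2$ block encoding a single-qubit state with amplitudes set to precision $\varepsilon_{\rm BE}/s$ after normalization by $\alpha=\Theta(s)$. The single-qubit lower bound of \cite{beverland2020lowerbounds} (as used in \cite{david2026quantum}) then gives $\Omega(\log(s/\varepsilon_{\rm BE}))$.

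Combining the three families, the worst case over instances is at least each bound, hence at least a constant times their sum.
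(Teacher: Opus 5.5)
Your reduction (apply $U$ to $\ket{0^{a_{\rm BE}}}\ket{\Phi_N}$, fold the $\ket{0^{a_{\rm BE}}}$ projection into the Pauli postselections instead of repeating, then use \Cref{lem:gkw-canonical-form}) is what the paper does, and so is your support family.

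\textbf{The obstacle you flag.} It does not arise. The theorem asks the adaptive circuit to implement the block-encoding unitary $U$ itself; all approximation sits in $\varepsilon_{\rm BE}$. So \Cref{lem:gkw-adaptive-reduction} is applied with $\varepsilon=0$, and the canonical form prepares $\operatorname{vec}(B_A)/\|B_A\|_F$ exactly up to phase. There is no $1/p$ blow-up and nothing to reprove. The only error is $\|A-B_A\|_F\le\sqrt N\varepsilon_{\rm BE}$. The paper's \Cref{lem:block-encoding-counting} handles it by requiring a common Frobenius norm within each family and pairwise separation, modulo phase, above $4\sqrt N\varepsilon_{\rm BE}$.

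\textbf{Family 2.} This is why a generic GKW net on the $Ns$-dimensional sphere is not enough. You need flat instances ($|a_{ij}|\le1$, $\|A\|_F=\Theta(\sqrt{Ns})$) that stay separated modulo phase \emph{and} scaling. The paper gets this from pairs $\tfrac12(\cos\theta_\ell,\sin\theta_\ell)$, which fix the norm, and a volume packing of $[0,\pi/2]^M$ at radius $\Theta(\sqrt N\varepsilon_{\rm BE})$. The $\sqrt M$ gain from the high-dimensional ball volume is what produces $\log(\sqrt s/\varepsilon_{\rm BE})$; a coordinatewise grid would not.

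\textbf{The genuine gap is Family 3.} For a matrix with one nonzero entry or a $2\times2$ nonzero block, $\|A\|=O(1)$. Dividing by $\alpha=\Theta(s)$ therefore shrinks the signal as much as the error.
\begin{itemize}
\item The postselected vector $B\ket{j}/\alpha$ has norm $\Theta(1/s)$ and error up to $\varepsilon_{\rm BE}/\alpha$. The relative precision is $\Theta(\varepsilon_{\rm BE})$, not $\varepsilon_{\rm BE}/s$.
\item For $\varepsilon_{\rm BE}\ge\|A\|$, which your hypothesis $\varepsilon_{\rm BE}\le c\sqrt s$ allows, the zero block is admissible and you get nothing.
\item To use the adaptive single-qubit bound of \cite{beverland2020lowerbounds}, you must repeat until the ancilla postselection succeeds. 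Here that happens with probability $\Theta(1/s^2)$, which multiplies the expected \T\ count by about $s^2$.
\end{itemize}

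The missing idea is to pick the instance so that $\|A\|=\Theta(\alpha)$. The paper takes the rank-one $A=uv^\dagger$, with $v$ the all-ones vector on the first $\bar s$ coordinates and $u=\sqrt{\bar s/2}\,\ket{0^{n-r}}\ket{\chi}\ket{+}^{\otimes(r-1)}$. Then $A\ket{0^{n-r}}\ket{+}^{\otimes r}/\alpha$ has norm $\bar s/(\alpha\sqrt2)=\Theta(1)$, while the error is only $O(\varepsilon_{\rm BE}/s)$. Postselection succeeds with constant probability, and the single-qubit bound at precision $\Theta(\varepsilon_{\rm BE}/s)$ gives $\Omega(\log(s/\varepsilon_{\rm BE}))$.

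The $\log s$ part is in fact already dominated by your second term, since $\log(\sqrt s/\varepsilon_{\rm BE})\ge\log(1/c)$ gives a term of order $\sqrt{2^ns}$. The $\log(1/\varepsilon_{\rm BE})$ part for very small $\varepsilon_{\rm BE}$, however, still needs an instance whose postselection succeeds with constant probability if you invoke the adaptive bound. Your single-entry instance does not provide one.
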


\begin{remark}
If moreover \(s\le 2^{(1-\delta)n}\) for some constant \(\delta\in(0,1)\)
and \(\varepsilon_{\rm BE}=O(1)\), then \(\log(2^n/s)=\Omega(n)\) and
\(\log(\sqrt{s}/\varepsilon_{\rm BE})=\Theta(\log(s/\varepsilon_{\rm BE}))\).
The lower bound becomes
\[
    \Omega\!\left(
        \sqrt{2^n s\,n}
        +
        \sqrt{2^n s\log(s/\varepsilon_{\rm BE})}
        +
        \log(s/\varepsilon_{\rm BE})
    \right),
\]
which matches the upper bound in \Cref{thm:block-encoding}. The complex
construction in \Cref{lem:complex-sparse-block-encoding} has normalization
\(2S\), where \(2s\le2S<4s\), and therefore lies in the normalization range
covered by the theorem.
\end{remark}

We first give a lemma which bounds the number of matrices that can be block encoded with a bounded \T\ count. 


\begin{lemma}
\label{lem:block-encoding-counting}
Let \(n>0\) and \(N=2^n\). Let \(\mathcal F\) be a finite set of
nonzero \(N\times N\) matrices, all with Frobenius norm \(\Gamma\). Suppose
that \(\varepsilon<\Gamma/\sqrt N\) and that for every distinct
\(A,A'\in\mathcal F\),
\begin{equation}
\label{eq:be-state-separation}
    \min_{\varphi\in\mathbb{R}}
    \|A-e^{\i\varphi}A'\|_F
    >
    4\sqrt N\varepsilon.
\end{equation}
For any \(\alpha>0\) and integer \(a_{\rm BE}\ge0\), if every
\(A\in\mathcal F\) has an
\((\alpha,a_{\rm BE},\varepsilon)\)-block-encoding implemented by an
adaptive Clifford\,$+$\,\T\ circuit with expected \T\ count at most an integer
\(t\), then
\[
    |\mathcal F|\le 2^{O(n^2+t^2)} .
\]
\end{lemma}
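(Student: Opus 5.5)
The plan is to reduce block encoding to state preparation through the vectorization $\operatorname{vec}(\cdot)$. The block-encoding ancillas will be postselected \emph{inside} the Pauli-postselection canonical form rather than by repeat-until-success, so no dependence on $a_{\rm BE}$ or on the tiny success probability enters the count.

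\textbf{Step 1 (exact vectorized output).} Fix $A\in\mathcal F$. The adaptive circuit implements a unitary $U_A$ exactly on the relevant inputs, and $U_A$ is an $(\alpha,a_{\rm BE},\varepsilon)$-block-encoding of $A$. Let $\tilde A=\alpha(\bra{0^{a_{\rm BE}}}\otimes I)U_A(\ket{0^{a_{\rm BE}}}\otimes I)$, so $\|A-\tilde A\|\le\varepsilon$. I would prepare the maximally entangled state $N^{-1/2}\sum_j\ket{j}_R\ket{j}_X$ with Clifford gates and append $\ket{0^{a_{\rm BE}}}$ and the clean ancillas. Applying the adaptive circuit to $X$ and the ancillas gives, with expected \T\ count at most $t$ and zero error, a pure state $\ket{\Omega_A}$. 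Its component on $\ket{0^{a_{\rm BE}}}$ is $\operatorname{vec}(\tilde A^{\top})/(\alpha\sqrt N)$, up to the ordering of $R,X$, which is a fixed relabeling.

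\textbf{Step 2 (normal form with the projection absorbed).} Apply \cref{lem:gkw-adaptive-reduction} with error $0$. This gives a Clifford circuit with Pauli postselections that maps $\ket{0^{2n+a_{\rm BE}}}\ket{T}^{\otimes 2t}\ket{0^{a}}$ to $\ket{\Omega_A}\ket{0^{2t+a}}$ exactly. Appending the Pauli postselections $(I+Z_i)$ on the $a_{\rm BE}$ block qubits, followed by X-type Cliffords to reset nothing, yields a Clifford circuit with Pauli postselections whose output is $\propto\ket{v_A}\ket{0^{a_{\rm BE}}}\ket{0^{2t+a}}$. Here $\ket{v_A}=\operatorname{vec}(\tilde A)/\|\tilde A\|_F$. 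This step needs $\tilde A\neq0$, which holds because $\|\tilde A-A\|_F\le\sqrt N\|\tilde A-A\|\le\sqrt N\varepsilon<\Gamma=\|A\|_F$. Treating the block qubits as clean ancillas, \cref{lem:gkw-canonical-form} produces a $(2n+2t)$-qubit Clifford $C$ and Paulis $P_1,\dots,P_{2t}$ with
\[
\ket{v_A}\ket{0^{2t}}\propto C(I+P_{2t})\cdots(I+P_1)\bigl(\ket{0^{2n}}\ket{T}^{\otimes 2t}\bigr).
\]
The number of such tuples is at most $2^{O((2n+2t)^2)}\,(2\cdot4^{2n+2t})^{2t}=2^{O(n^2+t^2)}$.

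\textbf{Step 3 (injectivity).} The normal form determines $\ket{v_A}$ up to phase, so it suffices to show that distinct $A,A'\in\mathcal F$ give $\ket{v_A}\neq e^{\i\varphi}\ket{v_{A'}}$ for every $\varphi$. Suppose equality holds for some $\varphi$. Then $\tilde A/\|\tilde A\|_F=e^{\i\varphi}\tilde A'/\|\tilde A'\|_F$, and
\[
\|A-e^{\i\varphi}A'\|_F\le\|A-\tilde A\|_F+\bigl\|\tilde A-e^{\i\varphi}\tfrac{\|\tilde A\|_F}{\|\tilde A'\|_F}\tilde A'\bigr\|_F+\bigl|\|\tilde A\|_F-\|\tilde A'\|_F\bigr|+\|\tilde A'-A'\|_F .
\]
The second term vanishes. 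The first and last terms are each at most $\sqrt N\varepsilon$. The third is at most $2\sqrt N\varepsilon$, since $\|A\|_F=\|A'\|_F=\Gamma$. So the total is at most $4\sqrt N\varepsilon$, contradicting \eq{be-state-separation}. Hence $A\mapsto$ (normal form) is injective, and $|\mathcal F|\le2^{O(n^2+t^2)}$.

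The main obstacle is Step 2. The success probability of projecting onto $\ket{0^{a_{\rm BE}}}$ is about $\Gamma^2/(\alpha^2N)$, far too small to amplify by repetition, and any approximation error before the projection would be magnified by its inverse square root. The route above sidesteps both problems by using exactness: the implemented unitary exactly encodes $\tilde A$, so the only error sits in $\tilde A$ versus $A$ and is handled in Frobenius norm in Step 3. It also folds the projection into the postselections, which \cref{lem:gkw-canonical-form} tolerates in any number, so the bound is independent of $a_{\rm BE}$. If the adaptive model instead only guarantees diamond-norm closeness, I would first bound the postselected error by $O(\sqrt{\text{err}}\cdot\alpha\sqrt N/\Gamma)$ and tighten the separation constant accordingly.
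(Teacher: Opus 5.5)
Your proposal is correct and follows essentially the same route as the paper. Both arguments vectorize through the maximally entangled state, apply the adaptive reduction with zero error, and absorb the projection onto $\ket{0^{a_{\rm BE}}}$ as $Z$-type Pauli postselections before invoking the canonical-form lemma. Both then conclude by injectivity against the $4\sqrt N\varepsilon$ separation. The only cosmetic difference is in that last step: you bound $\|A-e^{\i\varphi}A'\|_F$ by a direct four-term triangle inequality, while the paper first bounds the distance between the normalized vectorizations of $A$ and $B_A$ by $2\sqrt N\varepsilon/\Gamma$ and then combines these. Both give exactly $4\sqrt N\varepsilon$.
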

\begin{proof}
For each \(A\in\mathcal F\), let \(U_A\) be its $(\alpha, a_{\rm{BE}}, \varepsilon)$-block-encoding unitary and write
\[
    B_A
    =
    \alpha(\bra{0^{a_{\rm BE}}}\otimes I)
    U_A(\ket{0^{a_{\rm BE}}}\otimes I)
\]
for the encoded block. 

We first relate $B_A$ to a canonical-form circuit.
Let \(Q, R\) be two \(n\)-qubit registers,
and \(E\) be the block-encoding ancilla register. Applying \(U_A\) to
\(\ket{0^{a_{\rm BE}}}_E\ket{\Phi_N}_{QR}\), where
\(
    \ket{\Phi_N}_{QR}
    =
    \frac1{\sqrt N}\sum_{j=0}^{N-1}\ket{j}_Q\ket{j}_R,
\)
prepares the state
\[
    \ket{\Psi_A}_{EQR}
    =
    ((U_A)_{EQ}\otimes I_R)
    \ket{0^{a_{\rm BE}}}_E\ket{\Phi_N}_{QR}.
\]
Projecting the \(E\) register of
\(\ket{\Psi_A}\) onto \(\ket{0^{a_{\rm BE}}}\) gives the vectorization of
\(B_A\):
\begin{align}
\label{eq:be-vectorization}
    (\bra{0^{a_{\rm BE}}}_E\otimes I_{QR})
    \ket{\Psi_A}_{EQR}
    &= (\bra{0^{a_{\rm BE}}}\otimes I)
    U_A(\ket{0^{a_{\rm BE}}}\otimes I) \ket{\Phi_N}\nonumber \\
    &= \frac1{\alpha \sqrt N}
    \sum_{j,k=0}^{N-1}(B_A)_{jk}\ket{j}_Q\ket{k}_R\nonumber\\
    &=
    \frac1{\alpha\sqrt N}\operatorname{vec}(B_A)_{QR}.
\end{align}

By assumption, \(U_A\) can be
implemented by an adaptive Clifford+\T\ circuit with expected \T\ count at
most \(t\). Since \(\ket{\Phi_N}\) can be prepared by a Clifford circuit from the
all-zero state, \(\ket{\Psi_A}\) can be prepared by an adaptive
Clifford+\T\ circuit with expected \T\ count at most \(t\). Applying
\lem{gkw-adaptive-reduction} to \(\ket{\Psi_A}\) yields a state
\(\ket{\psi}\) satisfying the stated trace-distance bound. Since the lemma
formulates the approximation in terms of density operators, \(\ket{\psi}\) is
specified only up to a global phase.
Thus, for some phase \(\varphi_A\in\mathbb R\), the lemma gives
\begin{align*}
    C_{m+1}\prod_{i=1}^m[(I+P_i)C_i]
    \left(
        \ket{0^{a_{\rm BE}+2n}}
        \ket{T}^{\otimes 2t}
        \ket{0^a}
    \right) \propto
    e^{\i\varphi_A}\ket{\Psi_A}\ket{0^{2t+a}},
\end{align*}
Appending the projection $\ketbra{0^{a_{\rm{BE}} } }_E \otimes I$ gives
\begin{align}
\label{eq:be-postselect}
    &(\ketbra{0^{a_{\rm{BE}} } }_E \otimes I) C_{m+1}\prod_{i=1}^m[(I+P_i)C_i]
    \left(
        \ket{0^{a_{\rm BE}+2n}}_{EQR}
        \ket{T}^{\otimes 2t}
        \ket{0^a}
    \right) \nonumber\\
    &\propto{}
    (\ketbra{0^{a_{\rm BE}}}_E\otimes I)
    \bigl(e^{\i\varphi_A}\ket{\Psi_A}_{EQR}\ket{0^{2t+a}}\bigr)\nonumber\\
    &= e^{\i\varphi_A}\ket{0^{a_{\rm BE}}}_E \bigl( (\bra{0^{a_{\rm BE}}}_E \otimes I_{QR})  \ket{\Psi_A}_{EQR}\bigr) \ket{0^{2t+a}}\nonumber\\
    &\propto{}
    e^{\i\varphi_A}\operatorname{vec}(B_A)_{QR} \ket{0^{a_{\rm BE}}}_E
    \ket{0^{2t+a}}
\end{align}
where the third line follows from \eqref{eq:be-vectorization}.
This means $e^{\i\varphi_A}\operatorname{vec}(B_A)$ can be prepared by an adaptive Clifford circuit with \(2n\) output qubits, \(2t\) magic
states, and \(a+a_{\rm BE}\) clean ancillas. Observe that
\(
    \ketbra{0^{a_{\rm BE}}}_E \otimes I
    =
    \prod_{j=1}^{a_{\rm BE}}
    (I+Z_j)/2\otimes I
\)
is the projection onto the joint \(+1\) eigenspace of the Pauli-\(Z\)
operators on register \(E\). Hence the circuit in \Cref{eq:be-postselect} is a
Clifford circuit with Pauli postselections. Applying
\Cref{lem:gkw-canonical-form} to this circuit gives a canonical-form circuit preparing
\(e^{\i\varphi_A}\operatorname{vec}(B_A)\) without ancillas:
\[
    e^{\i\varphi_A}\operatorname{vec}(B_A)\ket{0^{2t}}
    \propto
    C(I+P_{2t})\cdots(I+P_1)
    \big(\ket{0^{2n}}\ket{T}^{\otimes 2t}\big),
\]
The number of such canonical-form circuits is at most $ 2^{O(n^2+t^2)}$.

It remains to check that different matrices in \(\mathcal F\) cannot give the
same canonical-form circuit. 
For any \(A\in\mathcal F\),
\(\|A-B_A\|_F\le \sqrt{N} \cdot \|A-B_A\| \leq \sqrt{N}\varepsilon\). Hence,
\begin{align*}
    \left\| \frac{\operatorname{vec}(A)}{\|A\|_F} 
    - \frac{\operatorname{vec}(B_A)}{\|B_A\|_F} \right\|_2 
    &\le 
    \left\| \frac{\operatorname{vec}(A)}{\|A\|_F} 
    - \frac{\operatorname{vec}(B_A)}{\|A\|_F} \right\|_2 + 
    \left\| \frac{\operatorname{vec}(B_A)}{\|A\|_F} 
    - \frac{\operatorname{vec}(B_A)}{\|B_A\|_F} \right\|_2 \\
    &= \frac{1}{\|A\|_F} \br{ \|A-B_A\|_F + \left| \|B_A\|_F - \|A\|_F\right| } \\
    &\leq  \frac{2\sqrt N\varepsilon}{\Gamma},
\end{align*}
where in the third equality, we use $\|B_A\|_F \geq \|A\|_F - \sqrt{N}\varepsilon > 0$ and in the last inequality, we use $\|A\|_F = \Gamma$.

Next, for distinct \(A,A'\in\mathcal F\), suppose the same canonical-form circuit
prepares the two states obtained from \(A\) and \(A'\),  which means
\[
    e^{\i\varphi_A} \frac{\operatorname{vec}(B_A)}{\|B_A\|_F}
    =
    e^{\i\varphi_{A'}} \frac{\operatorname{vec}(B_{A'})}{\|B_{A'}\|_F}.
\]
It follows that,
\begin{align*}
    &\min_{\varphi\in\mathbb{R}}
    \left\| \operatorname{vec}(A) - e^{\i\varphi} \operatorname{vec}(A') \right\|_2 \\
    &\leq \left\| \operatorname{vec}(A) - e^{\i(\varphi_A - \varphi_{A'})} \operatorname{vec}(A') \right\|_2 \\
    &= \Gamma \cdot \left\| e^{i\varphi_A} \frac{\operatorname{vec}(A)}{\|A\|_F} - e^{i\varphi_{A'}} \frac{\operatorname{vec}(A')}{\|A'\|_F} \right\|_2 \\
    &\leq  \Gamma \cdot \br{\left\| e^{i\varphi_A} \frac{\operatorname{vec}(A)}{\|A\|_F} - e^{i\varphi_A} \frac{\operatorname{vec}(B_A)}{\|B_A\|_F} \right\| + \left\|e^{i\varphi_{A'}} \frac{\operatorname{vec}(A')}{\|A'\|_F} - e^{i\varphi_{A'}} \frac{\operatorname{vec}(B_{A'})}{\|B_{A'}\|_F} \right\|_2} \\
    &\leq 4\sqrt{N} \varepsilon.
\end{align*}
This contradicts \cref{eq:be-state-separation}. Hence,
the map from \(\mathcal F\) to canonical-form circuits is injective, and the
counting bound above gives the claim.
\end{proof}

Now we are ready to prove \cref{thm:sparse-block-encoding-lower-bound}.

\begin{proof}[Proof of \cref{thm:sparse-block-encoding-lower-bound}]
Let \(N=2^n\). 
We prove the lower bound using three hard families, corresponding
to the three terms in the claimed bound.

\paragraph{The first term.} We vary the support.
By \cref{lem:sparse-matrix-support-packing}, there exists a family \(\mathcal{P}\) of subsets of \([N]\times[N]\) such that every \(P\in\mathcal{P}\) contains \(\lfloor s/2\rfloor\)
positions in each row and at most \(s\) positions in each column,
\[
    |P\triangle P'|\ge\gamma Ns,
    \quad
    \forall P\neq P'\in\mathcal{P},
\]
and
\[
    |\mathcal{P}|=2^{\Omega(Ns\log(N/s))}.
\] 
For every \(P\in\mathcal{P}\),
let \(A_P\in\mathbb{R}^{N\times N}\) be the matrix with
\((A_P)_{ij}=1\) for \((i,j)\in P\) and \(0\) otherwise. 
We have that for distinct \(P,P'\in\mathcal{P}\),
$$\|A_P\|_F=\|A_{P'}\|_F = \Theta(\sqrt{Ns})$$
and 
$$
    \min_{\varphi\in\mathbb{R}}
    \|A_P-e^{\i\varphi}A_{P'}\|_F
    =
    \|A_P-A_{P'}\|_F
    =
    \sqrt{|P\mathbin{\triangle}P'|}
    =
    \Omega(\sqrt{Ns}).
$$ where we use the fact that $A_P$ and $A_{P'}$ have nonnegative entries.
Since \(\varepsilon_{\rm BE}\le c\sqrt{s}\), for sufficiently small \(c\), the above implies the condition of \cref{lem:block-encoding-counting} holds. By \cref{lem:block-encoding-counting}, we have
$$ 2^{\Omega(Ns\log(N/s))} \leq |\mathcal{P}|\le 2^{O(n^2+t^2)},$$
which gives $t^2 \ge \Omega(Ns\log(N/s))-O(n^2)$ and thus
$$t = \Omega(\sqrt{2^n s \log(2^n/s)}).$$

\paragraph{The second term.} We fix a set of allowed positions and vary the
entry values.
Choose \(\lfloor N/s\rfloor\) disjoint \(s\times s\) blocks on the diagonal.
These blocks contain \(s^2\lfloor N/s\rfloor\) positions, which we group into
\(M=\lfloor s^2\lfloor N/s\rfloor/2\rfloor=\Theta(Ns)\) pairs. 
The grouping is to keep the Frobenius norm fixed. We will set 
\[
    \frac12(\cos\theta_\ell,\sin\theta_\ell)
\]
in the \(\ell\)-th pair with
\(\theta=(\theta_1,\ldots,\theta_M)\in[0,\pi/2]^M\), and set the remaining
entries to zero. The resulting matrix
\(A_\theta\) is row- and column-\(s\)-sparse. 
For \(\theta,\theta'\in[0,\pi/2]^M\), we have
\begin{align*}
    \|A_\theta-A_{\theta'}\|_F^2
    &=
    \sum_{\ell=1}^M
    \frac{1}{4}
    \bigl((\cos\theta_\ell-\cos\theta'_\ell)^2
    +(\sin\theta_\ell-\sin\theta'_\ell)^2\bigr) \\
    &=
    \sum_{\ell=1}^M
    \sin^2\left(\frac{\theta_\ell-\theta'_\ell}{2}\right) \\
    &\geq \sum_{\ell=1}^M \frac{1}{\pi^2} (\theta_\ell-\theta'_\ell)^2 = 
    \Omega(\|\theta-\theta'\|_2^2).
\end{align*} where we use the inequality $|\sin(x/2)| \geq |x/\pi|$ for $x \in [-\pi,\pi]$.

Choose \(\rho=C_0\sqrt{N}\varepsilon_{\rm BE}\), where \(C_0\) is a
sufficiently large fixed constant, and let \(\mathcal C\) be a maximal
\(\rho\)-separated subset of \([0,\pi/2]^M\). For distinct
\(\theta,\theta'\in\mathcal C\), the matrices \(A_\theta\) and
\(A_{\theta'}\) have nonnegative entries, and the estimate above gives
\[
    \min_{\varphi\in\mathbb R}
    \|A_\theta-e^{\i\varphi}A_{\theta'}\|_F
    =
    \|A_\theta-A_{\theta'}\|_F
    =
    \Omega(\rho).
\]
Choosing \(C_0\) sufficiently large therefore ensures the separation
condition in \Cref{lem:block-encoding-counting}. Moreover,
\(\|A_\theta\|_F=\Theta(\sqrt{Ns})\) for all \(\theta\), and choosing \(c\)
sufficiently small ensures the remaining hypothesis of that lemma. Hence
\[
    |\mathcal C|\le 2^{O(n^2+t^2)}.
\]

By maximality, the Euclidean balls of radius \(\rho\) centered at points of
\(\mathcal C\) cover the cube. The volume of an \(M\)-dimensional ball of
radius \(\rho\) is at most
\(
    \Big(\frac{C\rho}{\sqrt M}\Big)^M
\)
for an absolute constant \(C\), while the cube has volume \((\pi/2)^M\).
Consequently,
\[
    |\mathcal C|
    \ge
    \left(\frac{c'\sqrt M}{\rho}\right)^M
    =
    2^{\Omega\!\left(M\log(\sqrt M/\rho)\right)}
    =
    2^{\Omega\!\left(Ns\log(\sqrt{s}/\varepsilon_{\rm BE})\right)},
\]
where the last equality uses \(M=\Theta(Ns)\). Choosing \(c\) sufficiently
small ensures that \(\sqrt M/\rho\) is bounded below by a constant larger
than one. Comparing the last two displays gives
\[
    t
    =
    \Omega\!\left(
        \sqrt{2^n s\log(\sqrt{s}/\varepsilon_{\rm BE})}
    \right).
\]

\paragraph{The last term.}
It remains to obtain the final logarithmic term. For this, we reduce
single-qubit state preparation to the block encoding of a rank-one sparse
matrix. Let
\(
    \bar s=2^{\lfloor\log_2 s\rfloor},
\) and \(
    r=\log_2\bar s,
\)
so that \(s/2\le\bar s\le s\). By
\cite[Lemma~5.9]{beverland2020lowerbounds}, there is a state
\(\ket{\chi}=\alpha_0\ket{0}+\alpha_1\ket{1}\) that requires
\(\Omega(\log(\bar s/\varepsilon_{\rm BE}))
=\Omega(\log(s/\varepsilon_{\rm BE}))\) adaptive \T\ count to prepare within
error \(\Theta(\varepsilon_{\rm BE}/\bar s)\).

Define vectors \(u\in\mathbb C^N\) and \(v\in\mathbb R^N\) by
\[
    u_i
    =
    \begin{cases}
        \alpha_0, & 0\le i<\bar s/2,\\
        \alpha_1, & \bar s/2\le i<\bar s,\\
        0, & i\ge\bar s,
    \end{cases}
    \qquad
    v_i
    =
    \begin{cases}
        1, & 0\le i<\bar s,\\
        0, & i\ge\bar s.
    \end{cases}
\]
Let \(A=uv^\dagger\). The matrix \(A\) is row- and column-\(\bar s\)-sparse,
and hence row- and column-\(s\)-sparse, and satisfies \(|a_{ij}|\le1\).
Let \(U\) be an assumed
\((\alpha,a_{\rm BE},\varepsilon_{\rm BE})\)-block-encoding of \(A\), and
write
\[
    B
    =
    \alpha(\bra{0^{a_{\rm BE}}}\otimes I)
    U(\ket{0^{a_{\rm BE}}}\otimes I).
\]
Assume $
    w=\frac{A}{\alpha}\ket{0^{n-r}}\ket{+}^{\otimes r}$ and $
    \widetilde w=\frac{B}{\alpha}\ket{0^{n-r}}\ket{+}^{\otimes r}$.
A direct calculation gives
\[
    w
    =
    \frac{\bar s}{\alpha\sqrt2}
    (I_{n-r}\otimes I\otimes H^{\otimes(r-1)})
    \ket{0^{n-r}}\ket{\chi}\ket{0^{r-1}},
\]
and therefore
\[
    \|w\|_2=\frac{\bar s}{\alpha\sqrt2}
    \ge \frac{1}{2c_2\sqrt2}.
\]
Moreover,
\[
    \|\widetilde w-w\|_2
    \le
    \frac{\|B-A\|}{\alpha}
    \le
    \frac{\varepsilon_{\rm BE}}{\alpha}
    =
    O(\varepsilon_{\rm BE}/s).
\]
For sufficiently small \(c\) and all sufficiently large \(n\), it follows
that \(\|\widetilde w\|_2=\Omega(1)\). Thus, after applying \(U\) to
\(\ket{0^{a_{\rm BE}}}\ket{v_{\bar s}}\), postselection of the block-encoding
ancillas onto \(\ket{0^{a_{\rm BE}}}\) succeeds with constant probability.
Repeating until success prepares
\(\widetilde w/\|\widetilde w\|_2\) with expected \T\ count \(O(t)\).

Applying \(I_{n-r}\otimes I\otimes H^{\otimes(r-1)}\) to $w$ and discarding the $\ket{0}$ qubits gives the state $\ket{\chi}$.  Thus, with the same process applied to $\tilde{w}$, we can give a state whose distance from \(\ket\chi\) is bounded by 
\[
    \left\|
        \frac{\widetilde w}{\|\widetilde w\|_2}
        -
        \frac{w}{\|w\|_2}
    \right\|_2
    \le
    2\frac{\|\widetilde w-w\|_2}{\|\widetilde w\|_2}
    =
    O(\varepsilon_{\rm BE}/s).
\]
The single-qubit state-preparation lower bound therefore implies
\[
    t=\Omega\!\left(\log(s/\varepsilon_{\rm BE})\right).
\]
The worst-case
complexity is at least each of the three lower bounds and hence at least a
constant multiple of their sum. Therefore,
\[
    t
    =
    \Omega\!\left(
        \sqrt{2^n s\log(2^n/s)}
        +
        \sqrt{2^n s\log(\sqrt{s}/\varepsilon_{\rm BE})}
        +
        \log(s/\varepsilon_{\rm BE})
    \right).
\]
\end{proof}

\bibliographystyle{plainnat}
\bibliography{ref}

@article{hann2021resilience,
  title = {Resilience of quantum random access memory to generic noise},
  author = {Hann, Connor T. and Lee, Gideon and Girvin, S.M. and Jiang, Liang},
  journal = {PRX Quantum},
  volume = {2},
  issue = {2},
  pages = {020311},
  numpages = {30},
  year = {2021},
  month = {Apr},
  publisher = {American Physical Society},
  doi = {10.1103/PRXQuantum.2.020311},
  url = {https://link.aps.org/doi/10.1103/PRXQuantum.2.020311}
}

@article{jaques2025qram,
  doi = {10.22331/q-2025-12-02-1922},
  url = {https://doi.org/10.22331/q-2025-12-02-1922},
  title = {{QRAM}: a survey and critique},
  author = {Jaques, Samuel and Rattew, Arthur G.},
  journal = {{Quantum}},
  issn = {2521-327X},
  publisher = {{Verein zur F{\"{o}}rderung des Open Access Publizierens in den Quantenwissenschaften}},
  volume = {9},
  pages = {1922},
  month = dec,
  year = {2025}
}

@article{giovannetti2008quantum,
  title = {Quantum random access memory},
  author = {Giovannetti, Vittorio and Lloyd, Seth and Maccone, Lorenzo},
  journal = {Phys. Rev. Lett.},
  volume = {100},
  issue = {16},
  pages = {160501},
  numpages = {4},
  year = {2008},
  month = {Apr},
  publisher = {American Physical Society},
  doi = {10.1103/PhysRevLett.100.160501},
  url = {https://link.aps.org/doi/10.1103/PhysRevLett.100.160501}
}

@article{giovannetti2008architectures,
  title   = {Architectures for a quantum random access memory},
  author  = {Giovannetti, Vittorio and Lloyd, Seth and Maccone, Lorenzo},
  journal = {Phys. Rev. A},
  volume  = {78},
  pages   = {052310},
  year    = {2008},
  doi     = {10.1103/PhysRevA.78.052310},
  url     = {https://link.aps.org/doi/10.1103/PhysRevA.78.052310}
}

@article{zhangCircuitComplexityQuantum2024a,
  title = {Circuit complexity of quantum access models for encoding classical data},
  author = {Zhang, Xiao-Ming and Yuan, Xiao},
  year = 2024,
  month = apr,
  journal = {npj Quantum Inf.},
  volume = {10},
  number = {1},
  pages = {42},
  issn = {2056-6387},
  doi = {10.1038/s41534-024-00835-8},
  url = {https://doi.org/10.1038/s41534-024-00835-8}
}

@misc{gottesman1998heisenberg,
      title={The {H}eisenberg representation of quantum computers}, 
      author={Daniel Gottesman},
      year={1998},
      eprint={quant-ph/9807006},
      archivePrefix={arXiv},
      primaryClass={quant-ph},
      url={https://arxiv.org/abs/quant-ph/9807006}, 
}

@article{bravyi2005universal,
  title = {Universal quantum computation with ideal {Clifford} gates and noisy ancillas},
  author = {Bravyi, Sergey and Kitaev, Alexei},
  journal = {Phys. Rev. A},
  volume = {71},
  issue = {2},
  pages = {022316},
  numpages = {14},
  year = {2005},
  month = {Feb},
  publisher = {American Physical Society},
  doi = {10.1103/PhysRevA.71.022316},
  url = {https://link.aps.org/doi/10.1103/PhysRevA.71.022316}
}

@article{litinski2019magicstate,
  doi = {10.22331/q-2019-12-02-205},
  url = {https://doi.org/10.22331/q-2019-12-02-205},
  title = {Magic state distillation: not as costly as you think},
  author = {Litinski, Daniel},
  journal = {{Quantum}},
  issn = {2521-327X},
  publisher = {{Verein zur F{\"{o}}rderung des Open Access Publizierens in den Quantenwissenschaften}},
  volume = {3},
  pages = {205},
  month = dec,
  year = {2019}
}

@misc{gidney2024magicstatecultivation,
      title={Magic state cultivation: growing {T} states as cheap as {CNOT} gates}, 
      author={Gidney, Craig and Shutty, Noah and Jones, Cody},
      year={2024},
      eprint={2409.17595},
      archivePrefix={arXiv},
      primaryClass={quant-ph},
      url={https://arxiv.org/abs/2409.17595}, 
}

@InProceedings{kuperberg2013subexp,
  author =	{Kuperberg, Greg},
  title =	{Another subexponential-time quantum algorithm for the dihedral hidden subgroup problem},
  booktitle =	{8th Conference on the Theory of Quantum Computation, Communication and Cryptography (TQC 2013)},
  pages =	{20--34},
  series =	{Leibniz International Proceedings in Informatics (LIPIcs)},
  ISBN =	{978-3-939897-55-2},
  ISSN =	{1868-8969},
  year =	{2013},
  volume =	{22},
  editor =	{Severini, Simone and Brandao, Fernando},
  publisher =	{Schloss Dagstuhl -- Leibniz-Zentrum f{\"u}r Informatik},
  address =	{Dagstuhl, Germany},
  URL =		{https://drops.dagstuhl.de/entities/document/10.4230/LIPIcs.TQC.2013.20},
  URN =		{urn:nbn:de:0030-drops-43213},
  doi =		{10.4230/LIPIcs.TQC.2013.20}
}

@misc{gosset2025multi,
  title = {Multi-qubit {Toffoli} with exponentially fewer {T} gates},
  author = {Gosset, David and Kothari, Robin and Zhang, Chenyi},
  year = {2025},
  eprint = {2510.07223},
  archivePrefix = {arXiv},
  primaryClass = {quant-ph},
  url = {https://arxiv.org/abs/2510.07223}
}

@article{gidney2018halvingcostof,
  doi = {10.22331/q-2018-06-18-74},
  url = {https://doi.org/10.22331/q-2018-06-18-74},
  title = {Halving the cost of quantum addition},
  author = {Gidney, Craig},
  journal = {{Quantum}},
  issn = {2521-327X},
  publisher = {{Verein zur F{\"{o}}rderung des Open Access Publizierens in den Quantenwissenschaften}},
  volume = {2},
  pages = {74},
  month = jun,
  year = {2018}
}

@article{clader2022dense,
  author = {Clader, B. David and Dalzell, Alexander M. and Stamatopoulos, Nikitas and Salton, Grant and Berta, Mario and Zeng, William J.},
  journal = {IEEE Trans. Quantum Eng.},
  title = {Quantum resources required to block-encode a matrix of classical data},
  year = {2022},
  volume = {3},
  number = {},
  pages = {1--23},
  doi = {10.1109/TQE.2022.3231194},
  url = {https://doi.org/10.1109/TQE.2022.3231194}
}

@article{childs2017qls,
author = {Childs, Andrew M. and Kothari, Robin and Somma, Rolando D.},
title = {Quantum algorithm for systems of linear equations with exponentially improved dependence on precision},
journal = {SIAM J. Comput.},
volume = {46},
number = {6},
pages = {1920--1950},
year = {2017},
doi = {10.1137/16M1087072},
url = {https://doi.org/10.1137/16M1087072}
}

@article{sanders2019black-box,
  title = {Black-box quantum state preparation without arithmetic},
  author = {Sanders, Yuval R. and Low, Guang Hao and Scherer, Artur and Berry, Dominic W.},
  journal = {Phys. Rev. Lett.},
  volume = {122},
  issue = {2},
  pages = {020502},
  numpages = {5},
  year = {2019},
  month = {Jan},
  publisher = {American Physical Society},
  doi = {10.1103/PhysRevLett.122.020502},
  url = {https://link.aps.org/doi/10.1103/PhysRevLett.122.020502}
}

@inproceedings{david2026quantum,
  author = {Gosset, David and Kothari, Robin and Wu, Kewen},
  title = {Quantum state preparation with optimal {T}-count},
  booktitle = {Proceedings of the 2026 Annual ACM-SIAM Symposium on Discrete Algorithms (SODA)},
  pages = {3378--3406},
  year = {2026},
  organization = {SIAM},
  doi = {10.1137/1.9781611978971.122},
  url = {https://epubs.siam.org/doi/abs/10.1137/1.9781611978971.122}
}

@inproceedings{li2025nearly,
  title={Nearly optimal circuit size for sparse quantum state preparation},
  author={Li, Lvzhou and Luo, Jingquan},
  booktitle={52nd International Colloquium on Automata, Languages, and Programming (ICALP 2025)},
  series={Leibniz International Proceedings in Informatics (LIPIcs)},
  volume={334},
  pages={113:1--113:19},
  year={2025},
  publisher={Schloss Dagstuhl -- Leibniz-Zentrum f{\"u}r Informatik},
  doi={10.4230/LIPIcs.ICALP.2025.113},
  url={https://drops.dagstuhl.de/entities/document/10.4230/LIPIcs.ICALP.2025.113}
}

@inproceedings{berry2015simulating,
author = {Berry, Dominic W. and Childs, Andrew M. and Kothari, Robin},
title = {Hamiltonian simulation with nearly optimal dependence on all parameters},
year = {2015},
isbn = {9781467381918},
publisher = {IEEE Computer Society},
address = {USA},
booktitle = {2015 IEEE 56th Annual Symposium on Foundations of Computer Science},
url = {https://doi.org/10.1109/FOCS.2015.54},
doi = {10.1109/FOCS.2015.54},
pages = {792--809},
numpages = {18},
series = {FOCS '15}
}

@article{beverland2020lowerbounds,
doi = {10.1088/2058-9565/ab8963},
url = {https://doi.org/10.1088/2058-9565/ab8963},
year = {2020},
month = {may},
publisher = {IOP Publishing},
volume = {5},
number = {3},
pages = {035009},
author = {Beverland, Michael and Campbell, Earl and Howard, Mark and Kliuchnikov, Vadym},
title = {Lower bounds on the non-{Clifford} resources for quantum computations},
journal = {Quantum Sci. Technol.}
}

@article{Low2024tradingtgatesdirty,
  doi = {10.22331/q-2024-06-17-1375},
  url = {https://doi.org/10.22331/q-2024-06-17-1375},
  title = {Trading {T} gates for dirty qubits in state preparation and unitary synthesis},
  author = {Low, Guang Hao and Kliuchnikov, Vadym and Schaeffer, Luke},
  journal = {{Quantum}},
  issn = {2521-327X},
  publisher = {{Verein zur F{\"{o}}rderung des Open Access Publizierens in den Quantenwissenschaften}},
  volume = {8},
  pages = {1375},
  month = jun,
  year = {2024}
}

@misc{rupprecht2026sparsequantumstatepreparation,
  title={Sparse quantum state preparation with improved {Toffoli} cost}, 
  author={Rupprecht, Felix and W\"olk, Sabine},
  year={2026},
  eprint={2601.09388},
  archivePrefix={arXiv},
  primaryClass={quant-ph},
  url={https://arxiv.org/abs/2601.09388}, 
}

@article{zhang2022state,
  title = {Quantum state preparation with optimal circuit depth: implementations and applications},
  author = {Zhang, Xiao-Ming and Li, Tongyang and Yuan, Xiao},
  journal = {Phys. Rev. Lett.},
  volume = {129},
  issue = {23},
  pages = {230504},
  numpages = {6},
  year = {2022},
  month = {Nov},
  publisher = {American Physical Society},
  doi = {10.1103/PhysRevLett.129.230504},
  url = {https://link.aps.org/doi/10.1103/PhysRevLett.129.230504}
}

@ARTICLE{sun2023state,
  author={Sun, Xiaoming and Tian, Guojing and Yang, Shuai and Yuan, Pei and Zhang, Shengyu},
  journal={IEEE Transactions on Computer-Aided Design of Integrated Circuits and Systems}, 
  title={Asymptotically optimal circuit depth for quantum state preparation and general unitary synthesis}, 
  year={2023},
  volume={42},
  number={10},
  pages={3301-3314},
  doi={10.1109/TCAD.2023.3244885},
  url={https://doi.org/10.1109/TCAD.2023.3244885}
}

@article{yeo2025state,
  title = {Reducing circuit depth in quantum state preparation for quantum simulation using measurements and feedforward},
  author = {Yeo, Hyeonjun and Kim, Ha Eum and Sohn, IlKwon and Jeong, Kabgyun},
  journal = {Phys. Rev. Appl.},
  volume = {23},
  issue = {5},
  pages = {054066},
  numpages = {23},
  year = {2025},
  month = {May},
  publisher = {American Physical Society},
  doi = {10.1103/PhysRevApplied.23.054066},
  url = {https://link.aps.org/doi/10.1103/PhysRevApplied.23.054066}
}

@misc{zi2025state,
      title={Constant-depth quantum circuits for arbitrary quantum state preparation via measurement and feedback}, 
      author={Wei Zi and Junhong Nie and Xiaoming Sun},
      year={2025},
      eprint={2503.16208},
      archivePrefix={arXiv},
      primaryClass={quant-ph},
      url={https://arxiv.org/abs/2503.16208}, 
}

@article{low2019hamiltonian,
  doi = {10.22331/q-2019-07-12-163},
  url = {https://doi.org/10.22331/q-2019-07-12-163},
  title = {Hamiltonian simulation by qubitization},
  author = {Low, Guang Hao and Chuang, Isaac L.},
  journal = {{Quantum}},
  issn = {2521-327X},
  publisher = {{Verein zur F{\"{o}}rderung des Open Access Publizierens in den Quantenwissenschaften}},
  volume = {3},
  pages = {163},
  month = jul,
  year = {2019}
}

@article{campsExplicitQuantumCircuits2024a,
  title = {Explicit quantum circuits for block encodings of certain sparse matrices},
  author = {Camps, Daan and Lin, Lin and Van Beeumen, Roel and Yang, Chao},
  year = 2024,
  month = mar,
  journal = {SIAM J. Matrix Anal. Appl.},
  volume = {45},
  number = {1},
  pages = {801--827},
  publisher = {{Society for Industrial and Applied Mathematics}},
  issn = {0895-4798},
  doi = {10.1137/22M1484298},
  url = {https://doi.org/10.1137/22M1484298},
  urldate = {2026-07-03}
}

@misc{zecchi2026blockencodingsparsematrices,
      title={Block encoding of sparse matrices with a periodic diagonal structure}, 
      author={Zecchi, Alessandro Andrea and Sanavio, Claudio and Cappelli, Luca and Perotto, Simona and Roggero, Alessandro and Succi, Sauro},
      year={2026},
      eprint={2602.10589},
      archivePrefix={arXiv},
      primaryClass={quant-ph},
      url={https://arxiv.org/abs/2602.10589}, 
}

@article{Sunderhauf2024blockencoding,
  doi = {10.22331/q-2024-01-11-1226},
  url = {https://doi.org/10.22331/q-2024-01-11-1226},
  title = {Block-encoding structured matrices for data input in quantum computing},
  author = {S{\"{u}}nderhauf, Christoph and Campbell, Earl and Camps, Joan},
  journal = {{Quantum}},
  issn = {2521-327X},
  publisher = {{Verein zur F{\"{o}}rderung des Open Access Publizierens in den Quantenwissenschaften}},
  volume = {8},
  pages = {1226},
  month = jan,
  year = {2024}
}

@article{Yang2025dictionarybased,
  doi = {10.22331/q-2025-07-22-1805},
  url = {https://doi.org/10.22331/q-2025-07-22-1805},
  title = {Dictionary-based block encoding of sparse matrices with low subnormalization and circuit depth},
  author = {Yang, Chunlin and Li, Zexian and Yao, Hongmei and Fan, Zhaobing and Zhang, Guofeng and Liu, Jianshe},
  journal = {{Quantum}},
  issn = {2521-327X},
  publisher = {{Verein zur F{\"{o}}rderung des Open Access Publizierens in den Quantenwissenschaften}},
  volume = {9},
  pages = {1805},
  month = jul,
  year = {2025}
}

@article{QRS,
author = {Ozols, Maris and Roetteler, Martin and Roland, J\'{e}r\'{e}mie},
title = {Quantum rejection sampling},
year = {2013},
issue_date = {August 2013},
publisher = {Association for Computing Machinery},
address = {New York, NY, USA},
volume = {5},
number = {3},
issn = {1942-3454},
url = {https://doi.org/10.1145/2493252.2493256},
doi = {10.1145/2493252.2493256},
journal = {ACM Trans. Comput. Theory},
month = aug,
articleno = {11},
numpages = {33}
}

@article{adriano1995elementarygates,
  title = {Elementary gates for quantum computation},
  author = {Barenco, Adriano and Bennett, Charles H. and Cleve, Richard and DiVincenzo, David P. and Margolus, Norman and Shor, Peter and Sleator, Tycho and Smolin, John A. and Weinfurter, Harald},
  journal = {Phys. Rev. A},
  volume = {52},
  issue = {5},
  pages = {3457--3467},
  numpages = {0},
  year = {1995},
  month = {Nov},
  publisher = {American Physical Society},
  doi = {10.1103/PhysRevA.52.3457},
  url = {https://link.aps.org/doi/10.1103/PhysRevA.52.3457}
}

@misc{KimLaakkonen2025controlled,
      title={Any {Clifford+T} circuit can be controlled with constant {T}-depth overhead}, 
      author={Kim, Isaac H. and Laakkonen, Tuomas},
      year={2025},
      eprint={2512.24982},
      archivePrefix={arXiv},
      primaryClass={quant-ph},
      url={https://arxiv.org/abs/2512.24982}, 
}

@article{HHL2009,
  author = {Harrow, Aram W. and Hassidim, Avinatan and Lloyd, Seth},
  title = {Quantum algorithm for solving linear systems of equations},
  journal = {Phys. Rev. Lett.},
  volume = {103},
  number = {15},
  pages = {150502},
  year = {2009},
  doi = {10.1103/PhysRevLett.103.150502},
  url = {https://doi.org/10.1103/PhysRevLett.103.150502}
}

@article{Wiebe2012datafitting,
  author = {Wiebe, Nathan and Braun, Daniel and Lloyd, Seth},
  title = {Quantum data fitting},
  journal = {Phys. Rev. Lett.},
  volume = {109},
  pages = {050505},
  year = {2012},
  doi = {10.1103/PhysRevLett.109.050505},
  url = {https://doi.org/10.1103/PhysRevLett.109.050505}
}

@article{Clader2013preconditioned,
  author = {Clader, B. David and Jacobs, Bryan C. and Sprouse, Chad R.},
  title = {Preconditioned quantum linear system algorithm},
  journal = {Phys. Rev. Lett.},
  volume = {110},
  pages = {250504},
  year = {2013},
  doi = {10.1103/PhysRevLett.110.250504},
  url = {https://doi.org/10.1103/PhysRevLett.110.250504}
}

@article{Temme2011metropolis,
  title = {Quantum {Metropolis} sampling},
  author = {Temme, K. and Osborne, T. J. and Vollbrecht, K. G. and Poulin, D. and Verstraete, F.},
  year = 2011,
  month = mar,
  journal = {Nature},
  volume = {471},
  number = {7336},
  pages = {87--90},
  issn = {1476-4687},
  doi = {10.1038/nature09770},
  url = {https://doi.org/10.1038/nature09770}
}

@article{Montanaro2016fem,
  author = {Montanaro, Ashley and Pallister, Sam},
  title = {Quantum algorithms and the finite element method},
  journal = {Phys. Rev. A},
  volume = {93},
  pages = {032324},
  year = {2016},
  doi = {10.1103/PhysRevA.93.032324},
  url = {https://doi.org/10.1103/PhysRevA.93.032324}
}

@article{Kassal2010chemistry,
  author = {Kassal, Ivan and Whitfield, James D. and Perdomo-Ortiz, Alejandro and Yung, Man-Hong and Aspuru-Guzik, Al{\'a}n},
  title = {Simulating chemistry using quantum computers},
  journal = {Annu. Rev. Phys. Chem.},
  volume = {62},
  pages = {185--207},
  year = {2011},
  doi = {10.1146/annurev-physchem-032210-103512},
  url = {https://doi.org/10.1146/annurev-physchem-032210-103512}
}

@article{Babbush2015fermions,
  author = {Babbush, Ryan and Berry, Dominic W. and Kivlichan, Ian D. and Wei, Annie Y. and Love, Peter J. and Aspuru-Guzik, Al{\'a}n},
  title = {Exponentially more precise quantum simulation of fermions {I}: quantum chemistry in second quantization},
  journal = {New J. Phys.},
  volume = {18},
  pages = {033032},
  year = {2016},
  doi = {10.1088/1367-2630/18/3/033032},
  url = {https://doi.org/10.1088/1367-2630/18/3/033032}
}

@article{Berry2017lode,
  author = {Berry, Dominic W. and Childs, Andrew M. and Ostrander, Aaron and Wang, Guoming},
  title = {Quantum algorithm for linear differential equations with exponentially improved dependence on precision},
  journal = {Commun. Math. Phys.},
  volume = {356},
  pages = {1057--1081},
  year = {2017},
  doi = {10.1007/s00220-017-3002-y},
  url = {https://doi.org/10.1007/s00220-017-3002-y}
}

@misc{Chevignard2026DGS,
  author = {Chevignard, Cl\'{e}mence and Shen, Yixin and Schrottenloher, Andr\'{e}},
  title = {Quantum algorithm for discrete {Gaussian} sampling},
  year = {2026},
  eprint = {2605.20133},
  archivePrefix = {arXiv},
  primaryClass = {quant-ph},
  url = {https://arxiv.org/abs/2605.20133}
}

@misc{Ling2026QRSTrapdoor,
  author = {Ling, Cong and Yan, Hao and Zhao, Nicholas},
  title = {Improved dual attack and trapdoor sampling via quantum rejection sampling},
  year = {2026},
  eprint = {2605.24798},
  archivePrefix = {arXiv},
  primaryClass = {quant-ph},
  url = {https://arxiv.org/abs/2605.24798}
}

@article{RossSelinger2014,
author = {Ross, Neil J. and Selinger, Peter},
title = {Optimal ancilla-free {Clifford+T} approximation of {Z}-rotations},
year = {2016},
issue_date = {September 2016},
publisher = {Rinton Press, Incorporated},
address = {Paramus, NJ},
volume = {16},
number = {11--12},
issn = {1533-7146},
journal = {Quantum Info. Comput.},
month = sep,
pages = {901--953},
numpages = {53},
url={https://dl.acm.org/doi/abs/10.5555/3179330.3179331}
}

@inproceedings{GSLW19,
author = {Gily\'{e}n, Andr\'{a}s and Su, Yuan and Low, Guang Hao and Wiebe, Nathan},
title = {Quantum singular value transformation and beyond: exponential improvements for quantum matrix arithmetics},
year = {2019},
isbn = {9781450367059},
publisher = {Association for Computing Machinery},
address = {New York, NY, USA},
url = {https://doi.org/10.1145/3313276.3316366},
doi = {10.1145/3313276.3316366},
booktitle = {Proceedings of the 51st Annual ACM SIGACT Symposium on Theory of Computing},
pages = {193--204},
numpages = {12},
location = {Phoenix, AZ, USA},
series = {STOC 2019}
}

@article{babbush2018encoding,
  title = {Encoding electronic spectra in quantum circuits with linear {T} complexity},
  author = {Babbush, Ryan and Gidney, Craig and Berry, Dominic W. and Wiebe, Nathan and McClean, Jarrod and Paler, Alexandru and Fowler, Austin and Neven, Hartmut},
  journal = {Phys. Rev. X},
  volume = {8},
  issue = {4},
  pages = {041015},
  numpages = {36},
  year = {2018},
  month = {Oct},
  publisher = {American Physical Society},
  doi = {10.1103/PhysRevX.8.041015},
  url = {https://link.aps.org/doi/10.1103/PhysRevX.8.041015}
}

@article{dimatteo2019qramft,
  title         = {Fault-tolerant resource estimation of quantum random-access memories},
  author        = {{Di Matteo}, Olivia and Gheorghiu, Vlad and Mosca, Michele},
  journal       = {IEEE Trans. Quantum Eng.},
  volume        = {1},
  pages         = {1--13},
  year          = {2020},
  doi           = {10.1109/TQE.2020.2965803},
  url           = {https://doi.org/10.1109/TQE.2020.2965803}
}

@article{paler2020parallelisingqram,
  title   = {Parallelising the Queries in Bucket Brigade Quantum {RAM}},
  author  = {Paler, Alexandru and Oumarou, Oumarou and Basmadjian, Robert},
  journal = {Phys. Rev. A},
  volume  = {102},
  pages   = {032608},
  year    = {2020},
  doi     = {10.1103/PhysRevA.102.032608},
  url     = {https://link.aps.org/doi/10.1103/PhysRevA.102.032608}
}

@misc{haner2022tablelookup,
  title         = {Space-time optimized table lookup},
  author        = {H{\"a}ner, Thomas and Kliuchnikov, Vadym and Roetteler, Martin and Soeken, Mathias},
  year          = {2022},
  eprint        = {2211.01133},
  archivePrefix = {arXiv},
  primaryClass  = {quant-ph},
  url           = {https://arxiv.org/abs/2211.01133}
}

@article{zhu2024qlut,
  title = {Unified architecture for quantum lookup tables},
  author = {Zhu, Shuchen and Sundaram, Aarthi and Low, Guang Hao},
  journal = {Phys. Rev. Res.},
  volume = {7},
  issue = {4},
  pages = {043230},
  numpages = {20},
  year = {2025},
  month = {Dec},
  publisher = {American Physical Society},
  doi = {10.1103/d896-mktn},
  url = {https://link.aps.org/doi/10.1103/d896-mktn}
}

@article{cesa2025resourcestateqram,
  title     = {Resource-state quantum {RAM} for fast and error-correctable queries},
  author    = {Cesa, Francesco and Bernien, Hannes and Pichler, Hannes},
  journal   = {Nat. Commun.},
  year      = {2026},
  month     = jun,
  doi       = {10.1038/s41467-026-73275-x},
  url       = {https://www.nature.com/articles/s41467-026-73275-x}
}

@inproceedings{xu2025fattreeqram,
  title     = {{Fat-Tree QRAM}: A High-Bandwidth Shared Quantum Random Access Memory for Parallel Queries},
  author    = {Xu, Shifan and Lu, Alvin and Ding, Yongshan},
  booktitle = {Proceedings of the 30th ACM International Conference on Architectural Support for Programming Languages and Operating Systems, Volume 2},
  pages     = {390--406},
  year      = {2025},
  publisher = {ACM},
  address   = {New York, NY, USA},
  doi       = {10.1145/3676641.3716256},
  url       = {https://dl.acm.org/doi/10.1145/3676641.3716256}
}

@misc{motlagh2026halving,
  title         = {Halving the cost of {QROM}},
  author        = {Motlagh, Danial and Pocrnic, Matthew},
  year          = {2026},
  eprint        = {2605.20334},
  archivePrefix = {arXiv},
  primaryClass  = {quant-ph},
  url           = {https://arxiv.org/abs/2605.20334}
}

@article{Costa2022optimalQLSS,
  title = {Optimal scaling quantum linear-systems solver via discrete adiabatic theorem},
  author = {Costa, Pedro C.S. and An, Dong and Sanders, Yuval R. and Su, Yuan and Babbush, Ryan and Berry, Dominic W.},
  journal = {PRX Quantum},
  volume = {3},
  issue = {4},
  pages = {040303},
  numpages = {54},
  year = {2022},
  month = {Oct},
  publisher = {American Physical Society},
  doi = {10.1103/PRXQuantum.3.040303},
  url = {https://link.aps.org/doi/10.1103/PRXQuantum.3.040303}
}

@article{Chakraborty2022QRLs,
  doi = {10.22331/q-2023-04-27-988},
  url = {https://doi.org/10.22331/q-2023-04-27-988},
  title = {Quantum regularized least squares},
  author = {Chakraborty, Shantanav and Morolia, Aditya and Peduri, Anurudh},
  journal = {{Quantum}},
  issn = {2521-327X},
  publisher = {{Verein zur F{\"{o}}rderung des Open Access Publizierens in den Quantenwissenschaften}},
  volume = {7},
  pages = {988},
  month = apr,
  year = {2023}
}

@inproceedings{pagh1999hash,
  author="Pagh, Rasmus",
editor="Dehne, Frank
and Sack, J{\"o}rg-R{\"u}diger
and Gupta, Arvind
and Tamassia, Roberto",
title="Hash and displace: efficient evaluation of minimal perfect hash functions",
booktitle="Algorithms and Data Structures",
year="1999",
publisher="Springer Berlin Heidelberg",
address="Berlin, Heidelberg",
pages="49--54",
isbn="978-3-540-48447-9",
url="https://doi.org/10.1007/3-540-48447-7_5"
}

@inproceedings{esposito2020recsplit,
  author = {Esposito, Emmanuel and {Mueller Graf}, Thomas and Vigna, Sebastiano},
  title = {{RecSplit}: minimal perfect hashing via recursive splitting},
  booktitle = {2020 Proceedings of the Symposium on Algorithm Engineering and Experiments (ALENEX)},
  pages = {175--185},
  year = {2020},
  organization = {SIAM},
  doi = {10.1137/1.9781611976007.14},
  url = {https://epubs.siam.org/doi/abs/10.1137/1.9781611976007.14}
}

@article{ramacciotti2024state,
  title = {Simple quantum algorithm to efficiently prepare sparse states},
  author = {Ramacciotti, Debora and Lefterovici, Andreea I. and Rotundo, Antonio F.},
  journal = {Phys. Rev. A},
  volume = {110},
  issue = {3},
  pages = {032609},
  numpages = {10},
  year = {2024},
  month = {Sep},
  publisher = {American Physical Society},
  doi = {10.1103/PhysRevA.110.032609},
  url = {https://link.aps.org/doi/10.1103/PhysRevA.110.032609}
}

@article{mao2024state,
  title = {Toward optimal circuit size for sparse quantum state preparation},
  author = {Mao, Rui and Tian, Guojing and Sun, Xiaoming},
  journal = {Phys. Rev. A},
  volume = {110},
  issue = {3},
  pages = {032439},
  numpages = {9},
  year = {2024},
  month = {Sep},
  publisher = {American Physical Society},
  doi = {10.1103/PhysRevA.110.032439},
  url = {https://link.aps.org/doi/10.1103/PhysRevA.110.032439}
}

@INPROCEEDINGS{gleinig2021state,
  author={Gleinig, Niels and Hoefler, Torsten},
  booktitle={2021 58th ACM/IEEE Design Automation Conference (DAC)}, 
  title={An Efficient Algorithm for Sparse Quantum State Preparation}, 
  year={2021},
  volume={},
  number={},
  pages={433-438},
  doi={10.1109/DAC18074.2021.9586240},
    url={https://doi.org/10.1109/DAC18074.2021.9586240}
}

@article{deveras2022double,
  title = {Double sparse quantum state preparation},
  author = {{de Veras}, Tiago M. L. and {da Silva}, Leon D. and {da Silva}, Adenilton J.},
  year = 2022,
  month = jun,
  journal = {Quantum Information Processing},
  volume = {21},
  number = {6},
  pages = {204},
  issn = {1573-1332},
  doi = {10.1007/s11128-022-03549-y},
  url = {https://doi.org/10.1007/s11128-022-03549-y}
}

@article{mozafari2022state,
  title = {Efficient deterministic preparation of quantum states using decision diagrams},
  author = {Mozafari, Fereshte and De Micheli, Giovanni and Yang, Yuxiang},
  journal = {Phys. Rev. A},
  volume = {106},
  issue = {2},
  pages = {022617},
  numpages = {9},
  year = {2022},
  month = {Aug},
  publisher = {American Physical Society},
  doi = {10.1103/PhysRevA.106.022617},
  url = {https://link.aps.org/doi/10.1103/PhysRevA.106.022617}
}

@article{malvetti2021quantumcircuits,
  doi = {10.22331/q-2021-03-15-412},
  url = {https://doi.org/10.22331/q-2021-03-15-412},
  title = {Quantum circuits for sparse isometries},
  author = {Malvetti, Emanuel and Iten, Raban and Colbeck, Roger},
  journal = {{Quantum}},
  issn = {2521-327X},
  publisher = {{Verein zur F{\"{o}}rderung des Open Access Publizierens in den Quantenwissenschaften}},
  volume = {5},
  pages = {412},
  month = {Mar},
  year = {2021}
}
\end{document}